\documentclass[%
a4paper,%
USenglish,%
numberwithinsect]{lipics-v2021}
\pdfoutput=1 %uncomment to ensure pdflatex processing (mandatatory e.g. to submit to arXiv)
\hideLIPIcs  %uncomment to remove references to LIPIcs series (logo, DOI, ...), e.g. when preparing a pre-final version to be uploaded to arXiv or another public repository

\usepackage{bbm}

\theoremstyle{definition}
\newtheorem*{definition*}{Definition}
\theoremstyle{plain}

\newcommand{\Integers}{\mathbb{Z}}

\newcommand{\Ports}{\mathit{P}}
\newcommand{\Neighbors}{\mathit{N}}
\newcommand{\Degree}{\operatorname{deg}}
\newcommand{\CounterPort}[1]{\bar{#1}}
\newcommand{\Siblings}{\operatorname{sibs}}

\providecommand{\Pr}{}\renewcommand{\Pr}{\mathbb{P}}
\providecommand{\Ex}{}\renewcommand{\Ex}{\mathbb{E}}
\newcommand{\GeomDist}[1]{\operatorname{Geom}\!\left(#1\right)}
\newcommand{\UnifDist}{\operatorname{Unif}}

\newcommand{\varDummy}{\texttt{var}}
\newcommand{\varLLE}{\textsf{LLE}}
\newcommand{\varReorientationBit}{\texttt{rort}}
\newcommand{\varLocalOrientation}{\texttt{ort}}
\newcommand{\varStatus}{\texttt{sts}}
\newcommand{\varUnorientatedFlag}{\texttt{uort}}
\newcommand{\varCoin}{\mathsf{coin}}
\newcommand{\varPhaseMode}{\texttt{mod}}
\newcommand{\varEpoch}{\texttt{epc}}
\newcommand{\varTournament}{\texttt{trnt}}
\newcommand{\varProposal}{\texttt{prp}}
\newcommand{\varReceivingFlag}{\texttt{rcv}}
\newcommand{\varAcceptFlag}{\texttt{acpt}}
\newcommand{\varDesignated}{\texttt{dsg}}
\newcommand{\varNextFlip}{\mathsf{nxt}}
\newcommand{\varFlipFlag}{\mathsf{flp}}

\newcommand{\IN}{\mathsf{IN}}
\newcommand{\OUT}{\mathsf{OUT}}
\newcommand{\MATCHED}{\mathsf{MATCHED}}
\newcommand{\UNMATCHED}{\mathsf{UNMATCHED}}
\newcommand{\Proposing}{\mathsf{proposing}}
\newcommand{\Receiving}{\mathsf{receiving}}
\newcommand{\Standby}{\mathsf{standby}}
\newcommand{\TO}{\mathsf{TO}}
\newcommand{\FROM}{\mathsf{FROM}}

\newcommand{\Alg}{\mathtt{Alg}}
\newcommand{\Indicator}{\mathbbm{1}}
\providecommand{\T}{}\renewcommand{\T}{\mathcal{T}}
\providecommand{\True}{}\renewcommand{\True}{\mathsf{true}}
\providecommand{\False}{}\renewcommand{\False}{\mathsf{false}}
\newcommand{\tClean}{t_{\mathsf{clean}}}

\newcommand{\rme}{{\rm e}}
\newcommand{\rmc}{{\rm c}}

\title{Self-Stabilizing Algorithms in the Uniform Port Model} %TODO Please add

\author{Liam Brinker}%
{Technion - Israel Institute of Technology, Israel}%
{liam.brinker@campus.technion.ac.il}%
{}{}
\author{Yuval Emek}%
{Technion - Israel Institute of Technology, Israel}%
{yemek@technion.ac.il}%
{https://orcid.org/0000-0002-3123-3451}%
{Partially supported by an Israel Science Foundation (ISF) grant 730/24.}
\author{Oren Louidor}%
{Technion - Israel Institute of Technology, Israel}%
{olouidor@technion.ac.il}%
{https://orcid.org/0000-0002-8401-6485}%
{Partially supported by an Israel Science Foundation (ISF) grant 3782/25.}
\authorrunning{L. Brinker, Y. Emek, and O. Louidor}
\Copyright{Jane Open Access and Joan R. Public} %TODO mandatory, please use full first names. LIPIcs license is "CC-BY";  http://creativecommons.org/licenses/by/3.0/

\ccsdesc[500]{Theory of computation~Distributed algorithms}
\keywords{truly uniform algorithms,
uniform port model,
self-stabilization,
local symmetry breaking}
\nolinenumbers %uncomment to disable line numbering

\EventEditors{John Q. Open and Joan R. Access}
\EventNoEds{2}
\EventLongTitle{42nd Conference on Very Important Topics (CVIT 2016)}
\EventShortTitle{CVIT 2016}
\EventAcronym{CVIT}
\EventYear{2016}
\EventDate{December 24--27, 2016}
\EventLocation{Little Whinging, United Kingdom}
\EventLogo{}
\SeriesVolume{42}
\ArticleNo{23}
\begin{document}

\setcounter{page}{0}

\maketitle

\begin{abstract}
We introduce a distributed computational model referred to as the
\emph{uniform port} model.
An algorithm operating in this model is defined by means of local automata
associated with the ports (a.k.a.\ half-edges) of the input graph.
The crux of the uniform port model is that a single constant-size finite
automaton is hosted by every port of every graph, making the model
\emph{truly uniform}.
Moreover, since the new model explicitly supports the assignment of (input
and) output labels to the graph's (half-)edges, it facilitates natural
formulations of (half-)edge-labeling problems such as maximal matching and
sinkless orientation, which are outside the expressivity scope of prior
node-centric truly uniform distributed computational models.

The main technical contribution of this paper is the design of efficient
(i.e., with poly-logarithmic runtime) \emph{self-stabilizing} uniform port
algorithms, operating on general graphs, for various fundamental local
symmetry breaking problems, including
maximal independent set,
maximal matching,
sinkless orientation,
and maximal node/edge $k$-coloring.
While efficient self-stabilizing algorithms for local symmetry breaking
problems have been extensively studied in stronger computational models, our
work is the first to demonstrate the existence of such algorithms in a truly
uniform model.
\end{abstract}

\clearpage

\setcounter{page}{0}

\tableofcontents
\clearpage

\setcounter{page}{1}

%%%%%%%%%%%%%%%%%%%%%%%%%%%%%%%%%%%%%%%%%%%%%%%%%%%%%%%%%%%%%%%%%%%%%%%%%%%%%%
\section{Introduction}
\label{sec:intro}
%%%%%%%%%%%%%%%%%%%%%%%%%%%%%%%%%%%%%%%%%%%%%%%%%%%%%%%%%%%%%%%%%%%%%%%%%%%%%%
The notion of uniformity in computational models refers to settings in which
the algorithm designer is oblivious to the size (and often other parameters)
of the problem instances on which the algorithm is executed.
The main advantage of uniform algorithms is that they enable a
one-size-fits-all approach:
the same computational device can be used for \emph{all} problem instances.
In distributed computing, this advantage is amplified as a single execution
often involves a multitude of such devices.
Traditionally, uniform distributed algorithms were studied without accounting
for the memory size of the computational devices, effectively allowing
infinite memory.

The turning point in this regard came when the research community realized
that distributed computing is not limited to computer networks;
rather, interesting distributed processes also arise in networks composed of
devices that are much weaker than silicon-based computers.
A pioneering work in this direction is the paper of Afek et
al.~\cite{AfekABHBBJ2011drosophila} who discovered that a biological process
occurring during the development of the nervous system of the Drosophila
melanogaster is equivalent to solving an instance of the \emph{maximal
independent set (MIS)} problem --- a classic local symmetry breaking problem
that is among the most extensively studied problems in distributed computing.

Motivated by this discovery, as well as by networks of nano-scale mechanical
devices \cite{AkyildizJP2011nanonetworks}, Emek and Wattenhofer
\cite{EmekW2013stone-age} introduced the \emph{stone age} model as an
abstraction for distributed computing in networks composed of weak devices.
In this model, each node hosts the same (randomized) finite state automaton
whose description size is bounded by a universal constant, independently of
any parameter of the input graph, including the number of nodes and their
degrees;
that is, the model is \emph{truly uniform}.

Since its introduction, the stone age model has attracted significant attention
\cite{EmekW2013stone-age,
AfekEK2018leader-selection,
AfekEK2018synergy,
EmekU2020dynamic-sa,
EmekK2021thin,
GiakkoupisZ2023two-state-process,
VacusZ2025minimalist-le},
leading to the development of algorithms for fundamental problems such as MIS
\cite{EmekW2013stone-age,
EmekU2020dynamic-sa,
EmekK2021thin,
GiakkoupisZ2023two-state-process}
and leader election (and variants thereof)
\cite{AfekEK2018leader-selection,
EmekK2021thin,
VacusZ2025minimalist-le}.
A common drawback of most algorithms in this line of work is that they assume
\emph{graceful initialization}, where all automata start the execution in
synchrony from a designated initial state.
This assumption turns out to be too strong for many practical scenarios, where
it is difficult to coordinate between the automata.
Moreover, algorithms that rely on graceful initialization are completely
helpless if the network experiences some transient disturbance that causes the
execution to diverge off course.
In contrast, \emph{self-stabilizing} algorithms
\cite{Dijkstra1974self-stabilization} are required to recover from arbitrary
initial configurations, thus they do not rely on global coordination at
startup and can tolerate transient faults.

Designing efficient self-stabilizing algorithms in truly uniform models turns
out to be challenging and to the best of our knowledge, all existing positive
results are limited in scope.\footnote{%
In the context of local symmetry breaking problems, an algorithm is regarded
as efficient if its runtime is (at most) polylogarithmic in the number of
nodes, see Section~\ref{sec:prelim}.}
These include the MIS algorithm of Emek and Keren \cite{EmekK2021thin} that assumes graphs of bounded diameter and the MIS algorithm of Giakkoupis and Ziccardi \cite{GiakkoupisZ2023two-state-process} whose analysis is applicable
only to Erd\H{o}s-R\'{e}nyi random graphs.\footnote{\label{footnote:conjecture-two-state-mis-process}%
Giakkoupis and Ziccardi \cite{GiakkoupisZ2023two-state-process} conjectured
that a simple self-stabilizing MIS algorithm called the \emph{$2$-state MIS
process} is efficient on general graphs, however this turns out to be wrong,
see Section~\ref{sec:two-state-process}.}

Another shortcoming of the stone age model is limited expressivity:
A constant number of states implies constant-size output labels, thus stone
age algorithms are suitable only for the class of problems whose outputs can
be encoded by assigning constant-size labels to the nodes of the input graph.
For graphs of unbounded degrees, this class is somewhat narrow, excluding some
fundamental distributed computing problems such as \emph{maximal matching
(MM)} and \emph{sinkless orientation (SO)}.
Indeed, a node of high degree cannot encode the selection of a single
incident edge (MM) or the orientation of all incident edges (SO) using only a
constant-size label.

Motivated by the aforementioned shortcomings of the stone age model, in this
paper, we introduce a new truly uniform model, referred to as the
\emph{uniform port (UP)} model.
The key idea behind the UP model is that the (truly uniform) automata are
hosted by the \emph{ports} (i.e., \emph{half-edges}) of the input graph,
rather than by the nodes, so that each port may interact with its counter-port
on the same edge as well as with its sibling ports on the same node (refer to
Section~\ref{sec:prelim} for a formal definition).
Importantly, the interaction with sibling ports is highly restricted:
the automaton's transition function observes only the set of sibling states,
rather than their multiplicities or identities (cf.\ the \emph{set-broadcast}
communication scheme of \cite{HellaJKLLLSV2015weak-models}), ensuring that the
automaton description remains constant-size and independent of node degree.
We emphasize that while prior distributed computational models use the graph
ports for communication, the novelty of the UP model is that the ports are the
computational entities, each running the same constant-size automaton, while
sibling interaction is constrained so as to preserve true uniformity.

Switching from a node-centric point-of-view to a (half-)edge-centric one
significantly increases the expressivity of the model, thus addressing
one major shortcoming of the stone age model.\footnote{%
It is interesting to point out that although the focus of the current paper is
on distributed computing in graphs, the UP model can be naturally extended to
distributed computing in \emph{hypergraphs}.
Here, a port may have multiple counter-ports that share the same edge, as well
as multiple sibling ports that share the same node;
the extended model would apply the set-broadcast communication scheme to both
sets, thus preserving true uniformity.}
Furthermore, the UP model, that enables finer-grained local coordination,
turns out to be better suited for self-stabilization, thus circumventing the
other shortcoming.

%%%%%%%%%%%%%%%%%%%%%%%%%%%%%%%%%%%%%%%
\subparagraph{A Decentralized Variant of a (non-truly uniform) Node-Centric Model.}
%%%%%%%%%%%%%%%%%%%%%%%%%%%%%%%%%%%%%%%
Recall that in the UP model, each node of degree $d$ hosts $d$ constant-memory
automata.
As such, the reader may wonder how the UP model compares with an alternative
distributed computational model where each node of degree $d$ hosts an
automaton with memory size
$\Theta (d)$
(i.e., with
$2^{\Theta (d)}$
states).
Since in both models, each node $v$ has the same total number of memory bits,
it may seem plausible that the two models are ``essentially the same'';
put differently, one may hope that the automata hosted by the ports of $v$
under the UP model can collectively simulate the automaton hosted by node $v$
itself under the alternative model.

This intuition is misleading:
even under ideal conditions (and the conditions of self-stabilizing algorithms
are far from ideal), the limited communication capabilities of the UP model do
not enable the simulation of an automaton with memory size
$\Theta (d)$
by $d$ constant-memory automata.
In fact, the UP model can be viewed as a ``decentralized variant'' of the
aforementioned alternative model, where each automaton with memory size
$\Theta (d)$
is distributed over $d$ constant-memory automata.
This decentralized variant view is the key to the true uniformity of the UP
model and what makes it so appealing:
assembling a ``node device'' from $d$ generic ``port devices'' is much simpler
than having to manufacture (and carry around) different types of ``node
devices'', one for each value of $d$;
the technician installing the network is likely to appreciate this simplicity
(recall that in some of the applications we are interested in, the
``technician'' is a tiny cell organelle).

%%%%%%%%%%%%%%%%%%%%%%%%%%%%%%%%%%%%%%%
\subparagraph{Technical Contribution.}
%%%%%%%%%%%%%%%%%%%%%%%%%%%%%%%%%%%%%%%
Beyond the conceptual contribution of introducing the UP model, we develop the
following (synchronous) self-stabilizing UP algorithms, all applicable to
general graphs, where $n$ denotes the number of nodes:
\begin{itemize}

\item
an MIS algorithm that runs in
$O (\log^{2} n)$
time w.h.p.;\footnote{%
A statement holds \emph{with high probability (w.h.p.)} if it is satisfied with probability at least
$1 - n^{-c}$
for an arbitrarily large constant
$c > 0$.}

\item
an MM algorithm that runs in
$O (\log^{5} n)$
time w.h.p.;

\item
an SO algorithm that runs in
$O (\log^{2} n)$
time w.h.p.;

\item
a maximal node $k$-coloring algorithm that runs in
$O (\log^{2} n)$
time w.h.p.\ for any constant
$k \geq 2$;
and

\item
a maximal edge $k$-coloring algorithm that runs in
$O (\log^{5} n)$
time w.h.p.\ for any constant
$k \geq 2$.

\end{itemize}
We emphasize that to the best of our knowledge, these are the first efficient
self-stabilizing algorithms for any natural local symmetry breaking problem
(on general graphs), operating under a truly uniform model.

%%%%%%%%%%%%%%%%%%%%%%%%%%%%%%%%%%%%%%%
\subparagraph{Paper's Outline.}
%%%%%%%%%%%%%%%%%%%%%%%%%%%%%%%%%%%%%%%
The rest of the paper is organized as follows.
In Section~\ref{sec:prelim}, we present a formal definition of the UP model
together with some preliminary definitions and machinery that serve us in the
subsequent technical sections.
Sections \ref{sec:mis}, \ref{sec:mm}, and \ref{sec:so} are dedicated to our
self-stabilizing MIS, MM, and SO algorithms, respectively;
each of these sections includes a high level presentation of the
corresponding algorithm followed by a detailed ``port-level description'' and
analysis.
The self-stabilizing node/edge maximal $k$-coloring algorithms are developed
in Section~\ref{sec:maximal-coloring}.
In Section~\ref{sec:two-state-process}, we refute the conjecture of Giakkoupis
and Ziccardi \cite{GiakkoupisZ2023two-state-process} regarding the efficiency
(in general graphs) of the $2$-state MIS process (see
footnote~\ref{footnote:conjecture-two-state-mis-process}).
We conclude in Section~\ref{sec:related-work} with a discussion of some
additional related work.

%%%%%%%%%%%%%%%%%%%%%%%%%%%%%%%%%%%%%%%%%%%%%%%%%%%%%%%%%%%%%%%%%%%%%%%%%%%%%%
\section{Preliminaries}
\label{sec:prelim}
%%%%%%%%%%%%%%%%%%%%%%%%%%%%%%%%%%%%%%%%%%%%%%%%%%%%%%%%%%%%%%%%%%%%%%%%%%%%%%

%%%%%%%%%%%%%%%%%%%%%%%%%%%%%%%%%%%%%%%
\subparagraph{Graphs.}
%%%%%%%%%%%%%%%%%%%%%%%%%%%%%%%%%%%%%%%
Consider an undirected graph \(G\).\footnote{%
Unless stated otherwise, all graphs in this paper are assumed to be finite and
simple.
}
Let \(V_{G}\) and \(E_{G}\) denote the node set and edge set of
\(G\), respectively.
Nodes
$u, v \in V$
are regarded as \emph{adjacent} if
$\{ u, v \} \in E$;
edges
$e, f \in E$
are regarded as \emph{adjacent} if
$|e \cap f| = 1$.
For a node
\(v \in V_{G}\),
let
\(\Neighbors_{G}(v)
=
\{ u \in V_{G} : \{ v,u \} \in E_{G}\}\)
denote the set of nodes adjacent to $v$ in $G$ (a.k.a.\ $v$'s
\emph{neighbors}) and let
$\Degree_{G}(v) =  |\Neighbors_{G}(v)|$
denote the \emph{degree} of $v$ in $G$.
The subgraph of $G$ \emph{induced by a node subset}
$U \subseteq V_{G}$
is the graph whose node set is $U$ and whose edge set consists of all edges
$e \in E_{G}$
such that
$e \subseteq U$;
the subgraph of $G$ \emph{induced by an edge subset}
$F \subseteq E_{G}$
is the graph whose edge set is $F$ and whose node set consists of all nodes
$v \in V_{G}$
such that at least one of the edges incident on $v$ is in $F$.

A \emph{half-edge} in $G$ is a pair of the form
$(v, \{ u, v \})$,
where
$v \in V_{G}$
and
$\{ u, v \} \in E_{G}$;
throughout this paper, we use the term \emph{port} as a synonym for a
half-edge.
Given a port
$p = (u, \{ u, v \})$,
we refer to $u$ and
$\{ u, v \}$
as the node and edge, respectively, incident on port $p$.
Let
$\Ports_{G}(v)
=
\{ (v, \{ v, u \}) : u \in \Neighbors_{G}(v) \}$
be the set of ports incident on node
$v \in V_{G}$
and let
$\Ports_{G}
=
\bigcup_{v \in V_{G}} \Ports_{G}(v)$
be the set of all ports in $G$.
Ports incident on the same node are referred to as \emph{siblings};
the set of siblings of a port
$p = (v, \{ u, v\})$
is denoted by
$\Siblings_{G}(p) = \Ports_{G}(v) - \{ p \}$.
The \emph{counter-port} of a port
$p = (v, \{ u, v \})$
is the port
$(u, \{ u, v \})$
incident on the same edge as $p$ and on the node that lies at the other
end of the edge, denoted by $\CounterPort{p}$.

When the underlying graph $G$ is clear from the context, we may omit the
subscript \(G\) and write
\(V\),
\(E\),
\(\Neighbors(v)\),
\(d(v)\),
\(\Ports(v)\),
\(\Ports\),
and
$\Siblings(p)$
instead of
\(V_{G}\),
\(E_{G}\),
\(\Neighbors_{G}(v)\),
\(d_{G}(v)\),
$\Ports_{G}(v)$,
\(\Ports_{G}\),
and
$\Siblings_{G}(p)$,
respectively.
Unless stated otherwise, the number of nodes and edges in \(G\) are denoted by
\(n = |V|\)
and
\(m = |E|\),
respectively.

%%%%%%%%%%%%%%%%%%%%%%%%%%%%%%%%%%%%%%%
\subparagraph{The Uniform Port Model.}
%%%%%%%%%%%%%%%%%%%%%%%%%%%%%%%%%%%%%%%
Consider a graph problem $\mathcal{P}$ defined over a set $\mathcal{O}$ of
output labels.
When invoked on an undirected graph $G$, a distributed algorithm for
$\mathcal{P}$ that operates under the \emph{uniform port (UP)} model
associates a (randomized) \emph{local automaton} with each port
\(p \in \Ports\).\footnote{%
In the context of the UP model, it is assumed that the input graph does not
include nodes of degree $0$.}
The syntax of this local automaton is given by the \(3\)-tuple
\[
\Alg
\, = \,
\langle Q, \omega, \delta \rangle
\, ,
\]
where
\(Q\) is a finite set of states;
\( \omega: Q \rightarrow \mathcal{O} \)
is a function that maps each state \( q \in Q \) to an output label
\(\omega(q) \in \mathcal{O}\);
and
$\delta: Q \times Q \times 2^{Q} \rightarrow \Delta(Q)$
is a (randomized) state transition function to be explained
soon.\footnote{%
Here, $\Delta(Q)$ denotes the collection of probability distributions over
$Q$.}
It is important to note that the description of
\(\Alg = \langle Q, \omega, \delta \rangle\)
is assumed to be fixed, independently of any parameter of the input graph
\(G\).
At the risk of slightly abusing the notation, we use $\Alg$ for both the
syntax of the local automata and the distributed algorithm defined by the
collection of the local automata associated with the ports in $\Ports$ (to be
explained soon);
our intention will be clear from the context.

A \emph{configuration} of algorithm $\Alg$ is a function
$C : \Ports \rightarrow Q$
that assigns a state
$C(p) \in Q$
to (the local automaton associated with) each port
$p \in \Ports$.
The execution of $\Alg$ proceeds in synchronous rounds, where round
$t \in \Integers_{\geq 0}$
spans the time interval
$[t, t + 1)$,
denoting the configuration of $\Alg$ at time $t$ by $C^{t}$.
Starting from an initial configuration $C^{0}$, the execution advances
according to the following inductive mechanism:
Assuming that the configuration $C^{t}$ has already been constructed, the
configuration
$C^{t + 1}$
is constructed so that for each port
$p \in \Ports$,
the state
$C^{t + 1}(p)$
is obtained by picking
\[
C^{t + 1}(p)
\, \sim \,
\delta \left(
C^{t}(p),
C^{t}(\CounterPort{p}),
\left\{ C^{t}(p') : p' \in \Siblings(p) \right\}
\right)
\, .
\]
That is, to obtain the next state
$C^{t + 1}(p)$
of (the local automaton associated with) port $p$, we apply the state
transition function $\delta$ to the
current state
$C^{t}(p)$
of $p$,
the current state
$C^{t}(\CounterPort{p})$
of $p$'s counter-port,
and
the set
$\{ C^{t}(p') : p' \in \Siblings(p) \}$
of current states of $p$'s siblings;
the next state
$C^{t + 1}(p)$
of $p$ is then picked from the probability distribution determined by
$\delta$.
We emphasize that the third argument of $\delta$ is a set, rather than a
multiset or a vector, of states;
it is this model choice that decouples the automaton from the node degrees,
thus allowing the description size of $\delta$ (and $\Alg$) to be a universal
constant, making the UP model truly uniform.

An \emph{output assignment} of $G$ is a function
$Y : \Ports \rightarrow \mathcal{O}$
that assigns an output label $Y(p)$ to each port
$p \in \Ports$.
The output assignment associated with a configuration
$C : \Ports \rightarrow Q$
of $\Alg$, denoted by
$\omega(C)$, is the function
$Y : \Ports \rightarrow \mathcal{O}$
defined so that
$Y(p) = \omega(C(p))$
for each
$p \in \Ports$.\footnote{%
It is often convenient to augment the output label set $\mathcal{O}$ with the
designated ``nil symbol'' $\bot$ so that
$\omega(C(p)) = \bot$
indicates that port
$p \in \Ports$
is \emph{undecided} in the output assignment associated with configuration
$C$.%
}
We say that the execution
$\eta = \{ C^{t}\}_{t \geq 0}$
of $\Alg$ \emph{stabilizes} to an output assignment
$Y : \Ports \rightarrow \mathcal{O}$
by time
$t \in \Integers_{\geq 0}$
if
$\omega(C^{t'}) = Y$
for every
$t' \geq t$.

Let
$\mathcal{Y}_{\mathcal{P}} \subseteq \mathcal{O}^{\Ports}$
be the set of legal output assignments for problem $\mathcal{P}$ on $G$.
Algorithm $\Alg$ is said to be \emph{self-stabilizing} if for every undirected
graph $G$ and for every initial configuration $C^{0}$, it is guaranteed that
the execution of $\Alg$ stabilizes to some output assignment
$Y \in \mathcal{Y}_{\mathcal{P}}$
by time $t$ with probability that goes to $1$ as
$t \rightarrow \infty$.
The \emph{runtime} (a.k.a.\ \emph{stabilization time}) of a self-stabilizing
algorithm $\Alg$ on $G$, starting from $C^{0}$, is defined to be the minimum
$t$ such that the execution of $\Alg$ stabilizes to a legal output assignment
by time $t$.
Notice that the runtime is a random variable, depending on the coin tosses of
the local automata, and we typically aim to bound it w.h.p.\ as a function of
$n$.
In particular, $\Alg$ is said to be \emph{efficient} if its runtime is bounded
by
$\log^{O (1)} n$
w.h.p.

For the sake of clarity, the UP algorithms developed in this paper are
presented in a more human-readable ``procedural description'', assuming that
the ports maintain local variables that are updated from one round to the next
according to a specified logic;
translating these descriptions to the syntax of the UP model, as formulated
above, is a straightforward (though possibly tedious) task.
We stick to the convention that $\varDummy_{p}$ denotes the local variable
$\varDummy$ maintained by a port
$p \in \Ports$
and when presenting the actions of a UP algorithm during ``the current round'',
we denote the value of $\varDummy_{p}$ at the end of the round (i.e., the
beginning of the next round) by $\varDummy^{+}_{p}$.
When analyzing a UP algorithm, we denote the value of a variable $\varDummy$
at time
$t \in \Integers_{\geq 0}$
by $\varDummy^{t}$;
we note that the individual ports do not know $t$, however, this notation is
well defined in the scope of the analysis.

%%%%%%%%%%%%%%%%%%%%%%%%%%%%%%%%%%%%%%%
\subparagraph{Local Leader Election.}
%%%%%%%%%%%%%%%%%%%%%%%%%%%%%%%%%%%%%%%
While the ports of a node
$v \in V$
are anonymous (and unordered), it is often convenient to mark one port
$p \in \Ports(v)$
as designated from the rest of the ports in $\Ports(v)$.
A basic UP mechanism that does that (and is used by all our algorithms this
way or another) is called \emph{local leader election (LLE)}.
For the purpose of the current description of the LLE mechanism, assume that
each port
$p \in \Ports(v)$
maintains a variable
$\varLLE_{p} \in \{ 0, 1 \}$;
the UP algorithms presented in the sequel use different names for this
variable, but the principle is the same.
Port $p$ updates variable $\varLLE_{p}$ in each round according to the
following simple rule:
\begin{itemize}

\item
If
$\varLLE_{p} = 1$
and there exists
$p' \in \Siblings(p)$
such that
$\varLLE_{p'} = 1$,
then $p$ picks
$\varLLE_{p} \sim \UnifDist(\{ 0, 1 \})$.

\item
Otherwise, if
$\varLLE_{p'} = 0$
for all
$p' \in \Siblings(p) \cup \{ p \}$,
then $p$ assigns
$\varLLE_{p} \gets 1$.

\item
Otherwise, $p$ keeps the current value of $\varLLE_{p}$ (formally,
$\varLLE_{p} \gets \varLLE_{p}$).

\end{itemize}
Put differently, as long as there are multiple ports
$p \in \Ports(v)$
with
$\varLLE_{p} = 1$,
each such port tosses a fair coin to decide whether to reset
$\varLLE_{p} \gets 0$;
and if all ports
$p \in \Ports(v)$
have
$\varLLE_{p} = 0$,
then they all set
$\varLLE_{p} \gets 1$.

Clearly, the mechanism stabilizes when, and only when, there exists exactly
one port
$p \in \Ports(v)$
such that
$\varLLE_{p} = 1$;
we refer to such a port $p$ as the \emph{leader port} of node $v$.
The following lemma, established in Section~\ref{sec:mm:analysis} as part
of the analysis of our self-stabilizing MM algorithm (see
Lemma~\ref{lem:mm:analysis:geometric-tournament-length-upper-bound}),
guarantees that this happens sufficiently fast.

\begin{lemma}
\label{lem:prelim:up-model:local-port-election}
Assume that the LLE mechanism is invoked in a node
$v \in V$
at time
$t_{0} \in \Integers_{\geq 0}$.
There exists a time
$t \geq t_{0}$
such that $v$ admits a leader port
$p \in \Ports(v)$
at time $t$ and $p$ remains the leader port of $v$ subsequently (as long
as the LLE mechanism is not re-invoked in $v$) a.s.\footnote{%
A statement holds \emph{almost surely (a.s.)} if it is satisfied with probability $1$.}
Moreover,
$t \leq t_{0} + O (\log^{2} n)$
w.h.p.
\end{lemma}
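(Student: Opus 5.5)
The plan is to reduce the dynamics at $v$ to a Markov chain on the number of candidate ports. Let $K_t = |\{ p \in \Ports(v) : \varLLE^{t}_{p} = 1 \}|$ and $d = \Degree(v) < n$. Reading off the three rules gives the transitions.
\begin{itemize}
\item If $K_t \geq 2$, every port with $\varLLE = 1$ sees a sibling with $\varLLE = 1$ and tosses a fair coin. Every port with $\varLLE = 0$ sees a sibling with $\varLLE = 1$ and keeps its value. Hence $K_{t+1} \sim \mathrm{Bin}(K_t, 1/2)$.
\item If $K_t = 0$, all ports set $\varLLE \gets 1$, so $K_{t+1} = d$.
\item If $K_t = 1$, the unique port with $\varLLE = 1$ sees no sibling with $\varLLE = 1$ and keeps its value. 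The other ports see a sibling with $\varLLE = 1$ and keep $0$.
\end{itemize}
So the state $1$ is absorbing: once it is reached, the same port stays the leader forever, as long as the mechanism is not re-invoked. It therefore suffices to show that $K$ hits $1$ within $O(\log^2 n)$ rounds w.h.p.; the a.s.\ claim will follow from the same bounds. (If $d = 1$ the claim is trivial, so assume $d \geq 2$.)

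I split the execution into \emph{attempts}. An attempt starts at a time when $K \geq 2$ (initially, or right after a reset from $0$ to $d$) and ends at the first time $K \in \{0,1\}$. Each attempt is followed either by absorption or by a single reset round.

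The first step is a bound on the success probability of an attempt. Consider the last state $j \geq 2$ visited before the attempt ends. From state $j$, the probability of jumping to $1$ is $j 2^{-j}$ and the probability of jumping to $0$ is $2^{-j}$. Conditioned on the step from $j$ leaving $\{2,3,\dots\}$, the attempt therefore ends in $1$ with probability $j/(j+1) \geq 2/3$. Decomposing over the last pre-exit state via the Markov property, every attempt succeeds with probability at least $2/3$, regardless of its starting value.

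The second step is a bound on the length of an attempt. Since $\Ex[K_{t+1} \mid K_t] = K_t/2$ while $K_t \geq 2$, the stopped quantity $K_{t}\Indicator\{\text{attempt not yet ended}\}$ has expectation at most $d\,2^{-s}$ after $s$ rounds of the attempt. By Markov's inequality, the attempt is still running after $s$ rounds with probability at most $d\,2^{-s-1} \leq n 2^{-s}$. Taking $s = (c+2)\log_2 n$ makes this at most $n^{-c-1}$.

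The two steps then combine. The number of attempts until success is dominated by a geometric variable with parameter $2/3$, so $O(\log n)$ attempts suffice except with probability $n^{-c-1}$. A union bound over these $O(\log n)$ attempts, each of length $O(\log n)$ w.h.p., plus one reset round between attempts, yields $t \leq t_0 + O(\log^2 n)$ w.h.p. For the almost-sure statement, note that every attempt terminates a.s., since the tail bound $n2^{-s}$ tends to $0$. Moreover, the probability that the first $r$ attempts all fail is at most $3^{-r} \to 0$. Hence absorption in $1$ occurs a.s.

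The only delicate point, I expect, is the constant lower bound on the success probability of an attempt that is uniform in the starting value. A naive analysis could let a long attempt end in $0$ with non-negligible probability. The ``last state before exit'' conditioning handles this cleanly, and I would write it out carefully as a sum over the exit time and the pre-exit state.
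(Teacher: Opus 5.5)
Your proof is correct. Its skeleton matches the paper's. The paper does not prove this lemma separately; it derives it from the geometric-tournament analysis in the MM section. There, each LLE ``attempt'' is exactly a geometric tournament over the participating ports, and it ends with a unique leader iff the tournament has a winner. The paper bounds the attempt length by $O(\log n)$ w.h.p.\ via Lemma~\ref{lem:mm:analysis:geometric-tournament-length-upper-bound}, a union bound over the geometric lifetimes; this is the same first-moment estimate as your $d\,2^{-s-1}$ bound. A winner exists with constant probability by Lemma~\ref{lem:mm:analysis:geometric-tournament-exact-length-and-winner} with $Q$ taken to be all participants, so $O(\log n)$ attempts suffice.

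You differ in how you get the constant success probability. The paper's Lemma~\ref{l:4.9} computes the probability of a prescribed length together with a prescribed winner. It splits on the size of $Q$ and handles small $\kappa$ and $t$ by an unquantified hand check. Your ``last state before exit'' argument compares the exit probabilities $j2^{-j}$ and $2^{-j}$ from each pre-exit state $j \geq 2$. It yields the explicit bound $2/3$ using only the Markov property.

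The paper's machinery gives more: exact tournament length, a uniformly distributed winner, and arbitrary $Q$, all of which the MM analysis needs. Your argument is the lighter, self-contained tool for LLE alone. You also spell out that state $1$ is absorbing with the same port and prove the almost-sure part, both of which the paper leaves implicit.
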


%%%%%%%%%%%%%%%%%%%%%%%%%%%%%%%%%%%%%%%
\subparagraph{Miscellaneous.}
%%%%%%%%%%%%%%%%%%%%%%%%%%%%%%%%%%%%%%%
Throughout,
$c > 0$
denotes a constant whose value may depend on the context.
Unless stated otherwise, we stick to the convention that
$\log x = \log_{2} x$.
The (discrete or continuous) uniform distribution over a set $S$ is denoted by
$\UnifDist(S)$
and the geometric distribution, counting the number of independent Bernoulli
trials until (including) the first success, where each trial succeeds with
probability
$0 < p \leq 1$,
is denoted by
$\GeomDist{p}$.
We make an extensive use of the following well known lemma (a proof is
provided in Appendix~\ref{apx:probabilistic-progress} for completeness).

\begin{lemma}
\label{lem:prelim:probabilistic-progress}
Let
$X_{0}, X_{1}, \dots$
be random variables over
$\Integers_{\geq 0}$
that satisfy the following two conditions for every
$i > 0$
a.s.:
(1)
$X_{i} \leq X_{i - 1}$;
and
(2)
$\Ex (X_{i} \mid X_{i - 1}) \leq r X_{i - 1}$
for some fixed parameter
$0 < r < 1$.
Let
$x_{0} = \Ex (X_{0})$
and define the random variable
$T = \min \{ i \geq 0 : X_{i} = 0 \}$.
Then,
$\Pr (T > \lceil \log_{1 / r} x_{0} \rceil + j) \leq r^{j}$
for every
$j \geq 0$.
\end{lemma}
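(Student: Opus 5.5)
The statement to prove is Lemma~\ref{lem:prelim:probabilistic-progress}. The plan is to bound $\Ex(X_i)$ by iterating condition (2), and then to convert this into a tail bound on $T$ with Markov's inequality, using condition (1) to identify the event $\{T > k\}$ with $\{X_k \geq 1\}$.

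First, taking expectations in $\Ex(X_i \mid X_{i-1}) \leq r X_{i-1}$ and applying the tower rule gives $\Ex(X_i) \leq r\,\Ex(X_{i-1})$. Induction on $i$ then yields $\Ex(X_i) \leq r^{i} x_0$ for every $i \geq 0$.

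Second, we characterize the event $\{T > k\}$. Each $X_i$ takes values in $\Integers_{\geq 0}$, and by condition (1) the sequence is non-increasing a.s. Hence once some $X_i = 0$, every later $X_{i'}$ is also $0$. It follows that $\{T > k\} = \{X_0,\dots,X_k \text{ all } \geq 1\} = \{X_k \geq 1\}$ up to a null set. Since $X_k$ is integer-valued, Markov's inequality gives
\[
\Pr(T > k) = \Pr(X_k \geq 1) \leq \Ex(X_k) \leq r^{k} x_0 .
\]

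Third, we plug in $k = \lceil \log_{1/r} x_0 \rceil + j$. Because $k \geq \log_{1/r} x_0 + j$ and $r < 1$, we have $r^{k} \leq r^{\log_{1/r} x_0} \cdot r^{j} = x_0^{-1} r^{j}$. Therefore $\Pr(T > k) \leq r^{j}$, which is the claimed bound.

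There are two edge cases, and neither is a real obstacle. If $x_0 = 0$, then $X_0 = 0$ a.s., so $T = 0$ and the bound is trivial; the expression $\log_{1/r} 0$ should simply be read as giving the trivial statement. If $0 < x_0 < 1$, then the logarithm is negative, but $r^{k} x_0 \leq r^{j}$ still holds directly. (If $x_0 = \infty$ the statement is vacuous.) The only point that needs care is the use of monotonicity in the second step, which lets us pass from a bound on the single variable $X_k$ to a bound on the stopping time $T$.
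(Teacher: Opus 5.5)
Your proof is correct and is essentially the paper's argument: iterate the tower rule to get $\Ex(X_i) \leq r^{i} x_0$, apply Markov's inequality to the integer-valued $X_k$ with $k = \lceil \log_{1/r} x_0 \rceil + j$, and identify $\{T > k\}$ with $\{X_k \geq 1\}$. Only the inclusion $\{T > k\} \subseteq \{X_k \geq 1\}$ is actually needed, and it does not even use monotonicity.

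One caveat concerns your edge-case remark. If $0 < x_0 < 1$ and $k < 0$, then $X_k$ is undefined and $\Pr(T > k) = 1 > r^{j}$ for $j \geq 1$, so the bound genuinely fails there. Both your argument and the paper's tacitly require $k \geq 0$.
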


%%%%%%%%%%%%%%%%%%%%%%%%%%%%%%%%%%%%%%%%%%%%%%%%%%%%%%%%%%%%%%%%%%%%%%%%%%%%%%
\section{Maximal Independent Set}
\label{sec:mis}
%%%%%%%%%%%%%%%%%%%%%%%%%%%%%%%%%%%%%%%%%%%%%%%%%%%%%%%%%%%%%%%%%%%%%%%%%%%%%%
A node subset
$S \subseteq V$
is said to be \emph{independent} if
$\{ u, v \} \notin E$
for every
$u, v \in S$.
A \emph{maximal independent set (MIS)} is an independent set
$S \subseteq V$
that is maximal in the sense that $S'$ is not independent for every
$S \subset S' \subseteq V$.
In the MIS problem, the goal is to partition the node set $V$ into
\emph{$\IN$-nodes} and \emph{$\OUT$-nodes} such that the set of $\IN$-nodes
forms an MIS.
Under the UP model, this is translated to using
$\{ \IN, \OUT, \bot \}$
as the output label set so that a node
$v \in V$
is considered to be an $\IN$-node (resp., an $\OUT$-node) in a
configuration $C$ if
$\omega(C(p)) = \IN$
(resp.,
$\omega(C(p)) = \OUT$)
for all
$p \in \Ports(v)$;
node $v$ is regarded as \emph{undecided} in $C$ if it is neither an $\IN$-node
nor an $\OUT$-node.

\begin{theorem}
\label{thm:mis}
There exists a self-stabilizing UP algorithm that solves the MIS problem and
stabilizes in $O(\log^{2} n)$ time w.h.p.
\end{theorem}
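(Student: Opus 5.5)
We will prove Theorem~\ref{thm:mis} by designing an explicit algorithm and analyzing it in three stages. The overall approach is to use local leader election to turn each node into a single coordinated ``node agent'', detect local inconsistencies, and run a Luby-style randomized competition in phases.

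\textbf{The algorithm.} Each port $p$ maintains the following variables:
\begin{itemize}
\item an LLE bit, run as in Lemma~\ref{lem:prelim:up-model:local-port-election};
\item a status $\varStatus_{p} \in \{\IN, \OUT, \bot\}$;
\item a coin $\varCoin_{p}$.
\end{itemize}
\emph{Intra-node consistency.} The leader port is the only one that tosses coins and decides status. The other ports copy the leader's status through the set-broadcast: a non-leader sees the set of sibling states, and the leader's state is marked. Whenever a port sees two different statuses among itself and its siblings (or no leader), it resets to $\bot$.

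\emph{Inter-node consistency.} Ports exchange the node status with their counter-port. An $\IN$ node whose port sees an $\IN$ counter-port, or an $\OUT$ node none of whose ports sees an $\IN$ counter-port, is invalid. An $\OUT$ node cannot check ``none'' directly, so we implement it by having each port record the bit ``counter-port is $\IN$''; the siblings' set then reveals whether some port has bit $1$. A node that detects invalidity becomes $\bot$ across all its ports.

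\emph{Competition among undecided nodes.} The leader of an undecided node repeatedly flips fair coins, propagated to all its ports, in a geometric-length tournament. Each round, the nodes still alive with coin $1$ eliminate neighbors, seen via the counter-port, that flipped $0$. A node that survives until all its competitors have dropped, detected through the sibling set, joins $\IN$, and its neighbors become $\OUT$.

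\textbf{Analysis.} The proof proceeds in three steps.
\begin{enumerate}
\item \emph{Clean configuration.} Within $O(\log^{2} n)$ rounds w.h.p., every node has a stable leader port, by Lemma~\ref{lem:prelim:up-model:local-port-election} and a union bound. A constant number of rounds later, all intra-node and inter-node inconsistencies have either been resolved or have converted the offending nodes to $\bot$.
\item \emph{Closure.} Once a node is validly $\IN$ or validly $\OUT$ (adjacent to a valid $\IN$ node), it never changes again. The set of decided nodes is therefore monotone, and a legal MIS configuration is absorbing.
\item \emph{Progress.} Each phase of $O(\log n)$ rounds removes, in expectation, a constant fraction of the edges among undecided nodes, via Luby's argument adapted to geometric tournaments. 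Lemma~\ref{lem:prelim:probabilistic-progress} applied to the number of such edges then gives $O(\log n)$ phases, hence $O(\log^{2} n)$ rounds w.h.p.
\end{enumerate}

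\textbf{Main obstacle.} The hardest part is step~3 together with the lack of synchrony between phases. Since nodes start phases at arbitrary times, adjacent nodes' tournaments are not aligned. I would first try to make tournaments self-synchronizing: a node restarts its tournament whenever it sees a neighbor starting one. Then I would argue, conditioned on the clean configuration, that for each undecided node $v$ there is a constant probability per $O(\log n)$-round window that $v$ or a neighbor joins $\IN$. This is done by showing that the maximum of $\deg$-many geometric tournament lengths is unique with constant probability, in the spirit of Métivier et al.'s bit-by-bit Luby.
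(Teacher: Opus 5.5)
\textbf{There is a genuine gap.} Your outer skeleton matches the paper's: a clean time within $O(\log^{2} n)$ via Lemma~\ref{lem:prelim:up-model:local-port-election}, monotone decided sets, a constant-fraction drop in the number of undecided edges per $O(\log n)$-round window, and then Lemma~\ref{lem:prelim:probabilistic-progress}. What is missing is the part that carries the theorem: an algorithm for which closure and the per-window progress bound actually hold.

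\textbf{Synchronization.} Your competition is phase-based, and two things depend on neighboring tournaments being aligned. Safety needs it, because a node must not treat a neighbor as ``dropped'' when that neighbor has restarted and is alive in a newer tournament. Luby's progress argument needs it, because a node must compete with its neighbors in the same tournament. You defer this to ``restart whenever a neighbor starts'', which you do not analyze and which fails as stated. In the UP model a start at $u$'s leader reaches $v$'s leader only after up to three rounds (leader, sibling port, counter-port, leader). $v$'s restart then reaches $u$ up to three rounds later, and so on. Two adjacent undecided nodes therefore restart each other forever, no tournament runs for $\Theta(\log d)$ rounds, and nothing is decided. Damping the rule, say by ignoring starts early in one's own tournament, turns restarts into waves. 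In those waves, nodes whose competitions end quickly keep truncating their neighbors' tournaments, and you give no argument that enough tournaments complete.

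This is exactly the obstacle the paper's MIS algorithm avoids: it has no phases. Every undecided node draws a fresh shared coin each round. An edge is reoriented toward the endpoint holding $1$ whenever the two bits differ. A node joins $\IN$ when it is a sink and its current bit is $1$, and $\IN$/$\OUT$ nodes pin their bits to $1$/$0$. Because no competition is ever ``resolved'', no alignment is needed. Closure after $\tClean$ is deterministic: an $\IN$ node's edges stay oriented inward, so no neighbor can become a sink. Progress comes from ``lucky'' nodes whose coins show $\lceil \log d \rceil + 1$ consecutive $1$s; such a node thereby becomes a sink with respect to all non-lucky undecided neighbors. For comparison, the paper's phase-based MM algorithm only aligns two neighbors' epochs with probability $\Omega(1/\log n)$ per $O(\log^{3} n)$ window, via a local CLT, and pays $O(\log^{5} n)$ for it.

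\textbf{Progress.} The statement you propose to prove is that \emph{every} undecided $v$ has constant probability per window that $v$ or a neighbor joins $\IN$. This is false even for perfectly synchronized Luby. Suppose each of $v$'s $k$ neighbors has $k^{2}$ further neighbors. Then $v$ is a local maximum with probability $1/(k+1)$ and each neighbor with probability $O(1/k^{2})$, so $v$ is decided with probability $O(1/k)$. A unique maximum among $v$'s neighbors need not be a local maximum in its own neighborhood. What does hold is a statement about edges. Proving it needs three things:
\begin{itemize}
\item the Alon--Babai--Itai good-node lemma (Lemma~\ref{lem:mis:analysis:good-nodes});
\item attention restricted to neighbors of degree at most $d$;
\item control of correlations between neighbors' fates, which the paper gets by splitting neighbors into strong and weak and using the disjoint events $A_{i} \land B_{i} \land \neg D_{i}$.
\end{itemize}

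\textbf{Closure and the algorithm as written.} Your step~2 is asserted rather than derived, and parts of the algorithm are broken as written. The rule ``reset to $\bot$ upon seeing two different statuses'' fires in the very round in which non-leaders would copy a new leader status. So a node of degree at least $2$ can never leave $\bot$. Also, with multi-round relays, two adjacent nodes can both ``win'' before either learns of the other. Your invalidity reset then knocks out a previously valid $\IN$ node together with its $\OUT$ neighbors, which contradicts the claimed monotonicity.
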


The algorithm promised in Theorem~\ref{thm:mis} is presented in
Section~\ref{sec:mis:algorithm}, first at a high level and then, from the
perspective of the individual ports.
Section~\ref{sec:mis:analysis} is then dedicated to proving the correctness of
our algorithm and bounding its stabilization time, thus establishing
Theorem~\ref{thm:mis}.

%%%%%%%%%%%%%%%%%%%%%%%%%%%%%%%%%%%%%%%
\subsection{The MIS Algorithm}
\label{sec:mis:algorithm}
%%%%%%%%%%%%%%%%%%%%%%%%%%%%%%%%%%%%%%%

%%%%%%%%%%%%%%%%%%%%%%%%%%%%%%%%%%%%%%%
\subsubsection*{High-Level Description}
%%%%%%%%%%%%%%%%%%%%%%%%%%%%%%%%%%%%%%%
Our MIS algorithm maintains an orientation of the graph edges and uses this
orientation to determine which nodes join the MIS (i.e., become $\IN$-nodes).
In each round, every node generates a \emph{reorientation bit} from the set
$\{ 0, 1 \}$.
Following that, each edge
$e = \{ u, v \} \in E$
is oriented according to the reorientation bits of its endpoints based on the
following rule:
if the reorientation bits of $u$ and $v$ differ, then $e$ is oriented toward the
endpoint whose reorientation bit is $1$;
otherwise, the edge keeps its previous orientation (if one exists).
Undecided nodes sample their reorientation bit uniformly at
random from
$\{ 0, 1 \}$,
whereas $\IN$-nodes and $\OUT$-nodes fix their reorientation bit to $1$ and
$0$, respectively.

Consider an undecided node
$v \in V$.
Node $v$ becomes an $\IN$-node in round $t$ if all its incident edges are
oriented inward (i.e., $v$ is a sink) at time $t$ and the sampled reorientation
bit of $v$ in round $t$ is $1$;
the latter condition guarantees that following the reorientation step of round
$t$, all edges incident on $v$ remain oriented inward.
Once $v$ becomes an $\IN$-node, its reorientation bit is fixed to $1$, thus
ensuring that $v$ remains an $\IN$-node.
Node $v$ becomes an $\OUT$-node if it has an adjacent $\IN$-node.
Once $v$ becomes an $\OUT$-node, its reorientation bit is fixed to $0$, thus
$v$ yields the orientation of its incident edges to its neighbors.

\begin{remark*}
The reader may wonder about the similarity between our MIS algorithm and
Luby's MIS algorithm \cite{Luby1986mis}, in particular the adaptations thereof
developed by M\'{e}tivier, Robson, Saheb-Djahromi, and Zemmari \cite{MetivierRSZ2011bit-complexity} and by
Emek and Wattenhofer \cite{EmekW2013stone-age}.
We argue that the similarity is minimal:
Luby's MIS algorithm works in phases, so that in each phase, the undecided
nodes ``compete'' among themselves over the right to become $\IN$-nodes.
To this end, each undecided node $v$ picks a random number $r_{v}$ from a
sufficiently large set and becomes an $\IN$-node if
$r_{v} > r_{u}$
for all undecided neighbors $u$ of $v$.
In Luby's original algorithm \cite{Luby1986mis}, node $v$ sends $r_{v}$ in
a single message, thus the competition's outcome is determined within a single
round.
As the algorithms of
\cite{MetivierRSZ2011bit-complexity} 
and \cite{EmekW2013stone-age} use constant size messages, node $v$
sends the value of $r_{v}$ over multiple rounds, essentially communicating
$r_{v}$ one bit at a time.
The key point in this regard is that if $i$ is the index of the most
significant bit in which $r_{v}$ and $r_{u}$ differ so that
$r_{v}(i') = r_{u}(i')$
for all
$1 \leq i' < i$
and
$r_{v}(i) > r_{u}(i)$,
then the outcome of the competition between $u$ and $v$ is determined once $v$
and $u$ exchange the values of $r_{v}(i)$ and $r_{u}(i)$;
there is no point for node $u$ to keep sending the values of $r_{u}(i')$ for
$i' > i$
to $v$ as those values do not affect the outcome of the current phase's competition.

In contrast, our MIS algorithm does not work in phases (as discussed in
Section~\ref{sec:mm:algorithm}, phases raise significant synchronization
challenges for self-stabilizing algorithms).
More importantly, the outcome of the competition between nodes $v$
and $u$ is never fully determined as long as both nodes are undecided:
regardless of the history, a single round in which $v$ and $u$ pick $0$ and
$1$, respectively, as their reorientation bits suffices to reorient edge
$\{ u, v \}$
toward $u$, thus marking $u$ as the current ``front runner'' of the
competition between $u$ and $v$.
This ``erratic behavior'' turns out to be crucial for the fast stabilization
of our algorithm.
\end{remark*}

%%%%%%%%%%%%%%%%%%%%%%%%%%%%%%%%%%%%%%%
\subsubsection*{Port-Level Description}
%%%%%%%%%%%%%%%%%%%%%%%%%%%%%%%%%%%%%%%
For the purpose of orienting the graph edges, each port
$p \in \Ports$
maintains a variable
$\varLocalOrientation_{p} \in \{ 0, 1 \}$,
referred to as the \emph{local orientation} of $p$.
The local orientation variables determine the edge orientations as follows:
An edge
$e = \{ v_{1}, v_{2} \} \in E$
is considered to be oriented toward $v_{i}$,
$i \in \{ 1, 2 \}$,
if
$\varLocalOrientation_{(v_{i}, \{ v_{1}, v_{2} \})} = 1$
and
$\varLocalOrientation_{(v_{3 - i}, \{ v_{1}, v_{2} \})} = 0$;
edge $e$ is considered to be unoriented if
$\varLocalOrientation_{(v_{1}, \{ v_{1}, v_{2} \})} =
\varLocalOrientation_{(v_{2}, \{ v_{1}, v_{2} \})}$,
that is, the edge is neither oriented toward $v_{1}$ nor oriented toward
$v_{2}$.

Related to the local orientation variables, each
port
$p \in \Ports$
maintains a Boolean variable
$\varUnorientatedFlag_{p} \in \{ \True, \False \}$,
referred to as the \emph{unoriented flag} of $p$.
Port $p$ updates this variable in each round by setting
$\varUnorientatedFlag_{p} \gets \True$
if and only if
$\varLocalOrientation_{p} = \varLocalOrientation_{\CounterPort{p}}$,
that is, if and only if the edge shared by $p$ and its counter-port
$\CounterPort{p}$ is unoriented.
The unoriented flag variables have an important role in the algorithm:
they allow port $p$ to observe whether all edges incident on its siblings are
oriented (notice that the local orientation variables alone do not fulfill
this task).

The mechanism that dictates the edge orientation is controlled by a variable
$\varReorientationBit_{p} \in \{ 0, 1 \}$,
referred to as the \emph{reorientation bit} of port $p$, that each port
$p \in \Ports$
maintains.
Specifically, variable $\varLocalOrientation_{p}$ is updated in each round
according to the following simple rule:
if
$\varReorientationBit_{p} \neq \varReorientationBit_{\CounterPort{p}}$,
then
$\varLocalOrientation_{p} \gets \varReorientationBit_{p}$;
otherwise
($\varReorientationBit_{p} = \varReorientationBit_{\CounterPort{p}}$),
the value of $\varLocalOrientation_{p}$ remains unchanged (formally,
$\varLocalOrientation_{p} \gets \varLocalOrientation_{p}$).
The update rule of the reorientation bit variables themselves is presented
shortly;
for now, it suffices to assume that port $p$ updates
$\varReorientationBit_{p}$ in each round by adopting a random coin toss shared
by all ports incident on the same node (the implementation of this shared coin
toss is also presented shortly).

The output label associated with port $p$ is determined based on the value of
the designated variable
$\varStatus_{p} \in \{ \IN, \OUT, \OUT^{*}, \bot \}$,
referred to as the \emph{status} of $p$.
Specifically, the output label associated with $p$ is
$\IN$, $\OUT$, or $\bot$
if
$\varStatus_{p} = \IN$,
$\varStatus_{p} \in \{ \OUT, \OUT^{*} \}$,
or
$\varStatus_{p} = \bot$,
respectively.
Variable $\varStatus_{p}$ is updated in each round according to the following
rule:
\begin{itemize}

\item
Assign
$\varStatus_{p} \gets \IN$
if
(I)
$\varLocalOrientation_{p} = 1$;
(II)
$\varLocalOrientation_{\CounterPort{p}} = 0$;
(III)
$\varUnorientatedFlag_{p'} = \False$
for all
$p' \in \Siblings(p) \cup \{ p \}$;
and
(IV)
$\varReorientationBit_{p'} = 1$
for all
$p' \in \Siblings(p) \cup \{ p \}$.

\item
Otherwise, assign
$\varStatus_{p} \gets \OUT^{*}$
if
$\varStatus_{\CounterPort{p}} = \IN$.

\item
Otherwise, assign
$\varStatus_{p} \gets \OUT$
if
there exists
$p' \in \Siblings(p)$
such that
$\varStatus_{p'} = \OUT^{*}$.

\item
Otherwise, assign
$\varStatus_{p} \gets \bot$.

\end{itemize}

Having established the update rule of the status variables, we can now get
back to the mechanism that dictates the values of the reorientation bit
variables.
This mechanism depends on a leader port
$p^{*} = p^{*}(v) \in \Ports(v)$
elected in each node
$v \in V$
by applying the LLE mechanism (see Section~\ref{sec:prelim}).
Port $p^{*}$ maintains a variable
$\varCoin_{p^{*}} \in \{ 0, 1 \}$,
referred to as the \emph{coin toss} of $v$, and updates this variable in each
round by picking
$\varCoin_{p^{*}} \sim \UnifDist(\{ 0, 1 \})$.
Based on that, ports
$p \in \Ports(v)$
update the variables
$\varReorientationBit_{p}$
in each round according to the following rule:\footnote{%
As our algorithm is self-stabilizing, the behavior of the ports in $\Ports(v)$ prior to the election of the leader port $p^{*}$ is irrelevant and does not affect the analysis.}
\begin{itemize}

\item
Assign
$\varReorientationBit_{p} \gets 1$
if
(I)
$\varUnorientatedFlag_{p'} = \False$
for all
$p' \in \Siblings(p) \cup \{ p \}$;
and
(II)
$\varStatus_{p}^{+} = \IN$
(recall that $\varStatus_{p}^{+}$ denotes the value of $\varStatus_{p}$ at the
end of the round).

\item
Otherwise, assign
$\varReorientationBit_{p} \gets 0$
if
(I)
$\varUnorientatedFlag_{p'} = \False$
for all
$p' \in \Siblings(p) \cup \{ p \}$;
and
(II)
there exists a port
$p' \in \Siblings(p) \cup \{ p \}$
such that
$\varStatus_{p'} = \OUT^{*}$.

\item
Otherwise, assign
$\varReorientationBit_{p} \gets \varCoin_{p^{*}}$.

\end{itemize}

%%%%%%%%%%%%%%%%%%%%%%%%%%%%%%%%%%%%%%%
\subsection{Analysis}
\label{sec:mis:analysis}
%%%%%%%%%%%%%%%%%%%%%%%%%%%%%%%%%%%%%%%
Throughout the analysis, we fix an arbitrary initial configuration $C^{0}$ and
consider the (random) execution
$\eta = \{ C^{t} \}_{t \geq 0}$
of the MIS algorithm starting from $C^{0}$.
A time
$t \in \Integers_{> 0}$
is said to be \emph{clean} with respect to $\eta$ if the following conditions
hold:
\begin{enumerate}

\item
\label{mis:analysis:clean-condition:leaders}
\label{mis:analysis:clean-condition:first}
Every node
$v \in V$
admits a unique leader port
$p^{*} = p^{*}(v) \in \Ports(v)$
in
$C^{t}$.

\item
\label{mis:analysis:clean-condition:oriented-edges}
All edges are oriented in $C^{t}$,
i.e.,
$\varLocalOrientation_{(v_{1}, e)}^{t}
\neq
\varLocalOrientation_{(v_{2}, e)}^{t}$
for every
$e = \{ v_{1}, v_{2} \} \in E$.

\item
\label{mis:analysis:clean-condition:unoriented-flags}
All unoriented flags are down in $C^{t}$, i.e.,
$\varUnorientatedFlag_{p}^{t} = \False$
for every
$p \in \Ports$.

\item
\label{mis:analysis:clean-condition:in-ports}
If port
$p \in \Ports$
satisfies
$\varStatus_{p}^{t} = \IN$,
then
(I)
$\varLocalOrientation_{p}^{t} = 1$;
(II)
$\varReorientationBit_{p}^{t} = 1$;
and
(III)
$\varStatus_{p'}^{t} = \IN$
for all
$p' \in \Siblings(p)$.

\item
\label{mis:analysis:clean-condition:out-star-ports}
If port
$p \in \Ports$
satisfies
$\varStatus_{p}^{t} = \OUT^{*}$,
then
$\varStatus_{\CounterPort{p}}^{t} = \IN$.

\item
\label{mis:analysis:clean-condition:out-ports}
\label{mis:analysis:clean-condition:last}
If port
$p \in \Ports$
satisfies
$\varStatus_{p}^{t} = \OUT$,
then there exists
$p' \in \Siblings(p)$
such that
$\varStatus_{p'}^{t} = \OUT^{*}$.

\end{enumerate}
The following lemma allows us to designate a clean suffix of execution $\eta$;
the subsequent analysis is then dedicated to analyzing that suffix.

\begin{lemma}
\label{lem:mis:analysis:clean-time}
There exists a  time
$\tClean \in \Integers_{> 0}$
such that all times
$t \geq \tClean$
are clean with respect to $\eta$ a.s.
Moreover,
$\tClean \leq O (\log^{2} n)$
w.h.p.
\end{lemma}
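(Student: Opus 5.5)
The plan is to establish the clean conditions in stages. Each stage holds from some time on, a.s., and is reached within $O(\log^{2} n)$ rounds w.h.p. Throughout I use that the LLE variables evolve independently of all other variables.

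\textbf{Leaders.} For each node $v$, Lemma~\ref{lem:prelim:up-model:local-port-election} gives a time $t_{v} \leq O(\log^{2} n)$ w.h.p. after which $v$ keeps a fixed leader port a.s. (the LLE mechanism is never re-invoked in the MIS algorithm). A union bound over the $n$ nodes, with the constant in the w.h.p. bound raised by one, shows that condition~\ref{mis:analysis:clean-condition:leaders} holds at all times $t \geq t_{1} = \max_{v} t_{v} = O(\log^{2} n)$ w.h.p.

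\textbf{Orientation.} First I check that orientation is closed. If an edge is oriented at time $t$ and the two reorientation bits agree, both $\varLocalOrientation$ values are unchanged. If the bits differ, the two ports adopt different values. Either way the edge stays oriented, so conditions~\ref{mis:analysis:clean-condition:oriented-edges} and~\ref{mis:analysis:clean-condition:unoriented-flags} are monotone, the latter with a one-round lag.

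Now fix an edge $\{u,v\}$ that is unoriented at some time $t \geq t_{1}$. Its unoriented flags are then $\True$ at $t+1$. Hence both of its ports fail condition~(I) of the first two rules of the $\varReorientationBit$ update, and in round $t+1$ each adopts the coin of its own node's leader port. These two coins are independent fair bits, since $u \neq v$ have distinct leaders. So with probability at least $1/4$, accounting for the pipeline of a coin toss, its adoption as $\varReorientationBit$, and then the $\varLocalOrientation$ update, the edge becomes oriented within a constant number $c_{0}$ of rounds. This holds conditionally on the entire past.

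Let $X_{i}$ be the number of unoriented edges at time $t_{1} + c_{0} i$. By monotonicity and linearity of expectation, $X_{i} \leq X_{i-1}$ and $\Ex(X_{i} \mid X_{i-1}) \leq \tfrac{3}{4} X_{i-1}$. Since $X_{0} \leq n^{2}$, Lemma~\ref{lem:prelim:probabilistic-progress} shows all edges are oriented by time $t_{2} = t_{1} + O(\log n)$ w.h.p. It also shows this happens in finite time a.s. One more round then lowers all unoriented flags.

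\textbf{Status consistency.} This is the step I expect to be the main obstacle. Once conditions~\ref{mis:analysis:clean-condition:leaders}--\ref{mis:analysis:clean-condition:unoriented-flags} hold, I will show that conditions~\ref{mis:analysis:clean-condition:in-ports}--\ref{mis:analysis:clean-condition:last} hold within $O(1)$ further rounds and are then preserved by induction.

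Suppose $\varStatus^{+}_{p} = \IN$ is assigned in a round. Conditions~(III) and~(IV) of that rule depend only on the node, while the $\varReorientationBit$ rule sets $\varReorientationBit^{+}_{p} = 1$ in the same round. Conditions~(IV) and~(I) at $p$ ensure $\varLocalOrientation^{+}_{p} = 1$, because $p$ either keeps value~$1$ or adopts its bit~$1$. For the sibling clause (III) of condition~\ref{mis:analysis:clean-condition:in-ports}, the argument is as follows. Under condition~(IV), every sibling has bit~$1$, and under a clean configuration all edges are oriented. Every sibling $p'$ whose own $\varLocalOrientation$ equals~$1$ therefore also becomes $\IN$. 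I must rule out an $\IN$ port having a sibling with $\varLocalOrientation_{p'} = 0$ while all bits are~$1$. I would handle this by showing that after one clean round, a node with all reorientation bits equal to~$1$ and an outward edge cannot satisfy condition~(I) at any port. Failing that, I would show such a transient $\IN$ status disappears in the next round, before its $\OUT^{*}$ echoes can propagate. Conditions~\ref{mis:analysis:clean-condition:out-star-ports} and~\ref{mis:analysis:clean-condition:out-ports} follow directly from the $\OUT^{*}$ and $\OUT$ rules, with one- and two-round lags, once the $\IN$ statuses are consistent. Their preservation uses the fact that $\IN$ ports keep $\varReorientationBit = 1$ and $\varLocalOrientation = 1$ forever, so their $\IN$ status is reconfirmed every round.

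Combining the three stages gives $\tClean = t_{2} + O(1) = O(\log^{2} n)$ w.h.p., and finiteness a.s.
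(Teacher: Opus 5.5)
Your leader and orientation stages match the paper's proof: leaders via Lemma~\ref{lem:prelim:up-model:local-port-election}, then each unoriented edge gets oriented with constant probability once both its ports copy their leaders' independent coins. Those stages are fine. The gap is exactly where you flag it. You need $\varStatus^{t}_{p} = \IN \Rightarrow \varStatus^{t}_{p'} = \IN$ for all $p' \in \Siblings(p)$, and for the algorithm as written this implication is false, at any time while undecided nodes remain, not only in an initial transient. In the $\IN$ rule, only (I)--(II) look at edge direction, and only at $p$'s own edge. From the set of sibling states, $p$ learns only that the siblings' edges are oriented (III) and that their bits are $1$ (IV).

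Concretely, assume conditions \ref{mis:analysis:clean-condition:leaders}--\ref{mis:analysis:clean-condition:unoriented-flags} hold. Let $v$ be undecided with no $\IN$ neighbor, with an $\OUT$-neighbor $w$ and an undecided neighbor $u$. The bits of $w$ are $0$, so $\{v,w\}$ points toward $v$ ever after $v$ first has bit $1$. Suppose that at time $t-2$ all bits of $v$ are $0$ and those of $u$ are $1$; then $\{u,v\}$ points toward $u$ at time $t-1$. If also $\varCoin^{t-2}_{p^{*}(v)} = 1$, all bits at $v$ equal $1$ at time $t-1$. Then $(v,\{v,w\})$ satisfies (I)--(IV), so $\varStatus^{t}_{(v,\{v,w\})} = \IN$, whereas $(v,\{u,v\})$ fails (I). Hence condition~\ref{mis:analysis:clean-condition:in-ports}(III) fails at time $t$.

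This is a node with all bits $1$ and an outward edge in which some port satisfies (I), so your first fallback is false. The second fallback cannot rescue the lemma either. A time violating condition~\ref{mis:analysis:clean-condition:in-ports} is not clean by definition, whatever happens afterwards. Moreover, the partial $\IN$ survives round $t$ with probability $1/2$ (iff $(v,\{u,v\})$ draws bit $1$), and its echo does propagate. The port $(w,\{v,w\})$ is $\OUT^{*}$ at time $t+1$ even if $(v,\{v,w\})$ has meanwhile lost $\IN$, which violates condition~\ref{mis:analysis:clean-condition:out-star-ports}. If $w$ is undecided instead, it can become an $\OUT$-node and later revert, breaking the monotonicity $\OUT^{t} \subseteq \OUT^{t+1}$ used in the rest of the analysis.

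The paper's proof does not close this step either. It asserts that since $p$ adopts $\IN$ in round $t-1$, the status rule implies all edges at $v$ point toward $v$ at time $t-1$, which is precisely the inference the rule does not support. The repair has to be algorithmic, not analytic. For example, let each port keep a flag $\mathsf{inw}_{p}$ recording that its edge points toward its node. Since $p$ sees $\varLocalOrientation$ and $\varReorientationBit$ of both $p$ and $\CounterPort{p}$, it can compute the edge's orientation at the end of the round, so this flag has no lag. Then add the requirement $\mathsf{inw}_{p'} = \True$ for all $p' \in \Siblings(p) \cup \{p\}$ to the $\IN$ rule. With this change every port of $v$ evaluates the same condition, so all become $\IN$ together, and since all bits at $v$ are then $1$ the edges stay inward. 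At that point your $O(1)$-lag induction for conditions \ref{mis:analysis:clean-condition:in-ports}--\ref{mis:analysis:clean-condition:last}, like the paper's, goes through.
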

\begin{proof}
Lemma~\ref{lem:prelim:up-model:local-port-election} guarantees that there
exists a time
$t^{1} \in \Integers_{> 0}$
such that every node
$v \in V$
admits a (fixed) leader port
$p^{*} = p^{*}(v) \in \Ports(v)$
at all times
$t \geq t^{1}$
and
$t^{1} \leq O (\log^{2} n)$
w.h.p.;
condition hereafter on this event.

Consider an edge
$e = \{ v_{1}, v_{2} \} \in E$
and let
$p_{1} = (v_{1}, e)$
and
$p_{2} = (v_{2}, e)$.
The update rule of the local orientation variables ensures that once $e$
becomes oriented, it remains oriented at all subsequent times.
By the update rule of the unoriented flag variables, for every time
$t \geq t^{1}$,
if $e$ is unoriented at time $t$, then
$\varUnorientatedFlag_{p_{i}}^{t} = \True$,
$i \in \{ 1, 2 \}$,
and thus, the update rule of the reorientation bit variables ensures that
$\varReorientationBit_{p_{i}}^{t + 1} = \varCoin_{p^{*}(v_{i})}^{t}$.
Therefore, edge $e$ becomes oriented during round
$t + 1$
with probability
$1 / 2$,
independently.
We conclude that there exists a time
$t^{2} \in \Integers_{> 0}$
such that all edges are oriented at all times
$t \geq t^{2}$
and
$t^{2} \leq t^{1} + O (\log n)$
w.h.p.;
condition hereafter on this event.

By definition, conditions \ref{mis:analysis:clean-condition:leaders} and
\ref{mis:analysis:clean-condition:oriented-edges} are satisfied at all times
$t \geq t^{2}$.
Having established that
$t^{2} \leq t^{1} + O (\log n) \leq O (\log^{2} n)$,
we complete the proof by showing that conditions
\ref{mis:analysis:clean-condition:unoriented-flags},
\ref{mis:analysis:clean-condition:in-ports},
\ref{mis:analysis:clean-condition:out-star-ports}, and
\ref{mis:analysis:clean-condition:out-ports} hold (deterministically) at all
times
$t > t^{2} + 3$.

Clearly, condition~\ref{mis:analysis:clean-condition:unoriented-flags} holds
from time
$t^{2} + 1$
onward as all edges are oriented from time
$t^{2}$
onward.
Fix some time
$t > t^{2} + 1$
and consider a node
$v \in V$
and a port
$p \in \Ports(v)$
such that
$\varStatus_{p}^{t} = \IN$.
As
$t > t^{2} + 1$,
we know that all edges incident on $v$ are oriented at time
$t - 1$
and that
$\varUnorientatedFlag_{p'}^{t - 1} = \False$
for all
$p' \in \Siblings(p) \cup \{ p \}$.
Since port $p$ adopts the $\IN$ status during round
$t - 1$,
it follows, by the update rule of the status variables, that all edges
incident on $v$ are oriented toward $v$ at time
$t - 1$
and that
$\varReorientationBit_{p'}^{t - 1} = 1$
for all
$p' \in \Siblings(p) \cup \{ p \}$.
Therefore, all ports
$p' \in \Ports(v)$
adopt the $\IN$ status and set
$\varReorientationBit_{p'} \gets 1$
during round
$t - 1$
and the update rule of the local orientation variables ensures that all edges
incident on $v$ remain oriented toward $v$ at time $t$.
Hence, condition~\ref{mis:analysis:clean-condition:in-ports} holds at
time $t$.
By the update rules of the local orientation, reorientation bit, and status
variables, we conclude that
\begin{equation}
\label{eq:mis:analysis:in-remains-in}
\varStatus_{p}^{t} = \IN
\quad\Longrightarrow\quad
\varStatus_{p}^{t + 1} = \IN
\end{equation}
for every port
$p \in \Ports$
and time
$t > t^{2} + 1$.

Fix some time
$t > t^{2} + 2$
and consider a port
$p \in \Ports$
such that
$\varStatus_{p}^{t} = \OUT^{*}$.
Since port $p$ adopts the $\OUT^{*}$ status during round
$t - 1$,
it follows, by the update rule of the status variables, that
$\varStatus_{\CounterPort{p}}^{t - 1} = \IN$.
Condition~\ref{mis:analysis:clean-condition:out-star-ports} holds at time $t$
due to \eqref{eq:mis:analysis:in-remains-in} implying that
$\varStatus_{\CounterPort{p}}^{t} = \IN$.
Furthermore, the same argument guarantees that
\begin{equation}
\label{eq:mis:analysis:out-star-remains-out-star}
\varStatus_{p}^{t} = \OUT^{*}
\quad\Longrightarrow\quad
\varStatus_{p}^{t + 1} = \OUT^{*}
\end{equation}
for every port
$p \in \Ports$
and time
$t > t^{2} + 2$.

Finally, fix some time
$t > t^{2} + 3$
and consider a port
$p \in \Ports$
such that
$\varStatus_{p}^{t} = \OUT$.
Since port $p$ adopts the $\OUT$ status during round
$t - 1$,
it follows, by the update rule of the status variables, that there exists a
port
$p' \in \Siblings(p)$
such that
$\varStatus_{p'}^{t - 1} = \OUT^{*}$.
Condition~\ref{mis:analysis:clean-condition:out-ports} holds at time $t$
due to \eqref{eq:mis:analysis:out-star-remains-out-star} implying
that
$\varStatus_{p'}^{t} = \OUT^{*}$.
The assertion is established by setting
$\tClean = t^{2} + 4$.
\end{proof}

Let
$\tClean \in \Integers_{> 0}$
be the time promised in Lemma~\ref{lem:mis:analysis:clean-time}.
Our goal in the remainder of this section is to prove that starting from time
$\tClean$, the algorithm stabilizes within
$O (\log^{2} n)$
rounds w.h.p.
To this end, we introduce the following additional definitions.

\begin{definition*}[%
$\IN^{t}$,
$\OUT^{t}$,
$G^{t} = (V^{t}, E^{t})$,
$n^{t}$,
$m^{t}$,
$\Degree^{t}(v)$]
For
$t \geq \tClean$,
let
$\IN^{t} \subseteq V$
and
$\OUT^{t} \subseteq V$
be the sets of $\IN$-nodes and $\OUT$-nodes, respectively, at time $t$.
Let
$V^{t} = V - (\IN^{t} \cup \OUT^{t})$
be the set of undecided nodes at time $t$ and let
$G^{t} = (V^{t}, E^{t})$
be the subgraph of $G$ induced by $V^{t}$;
let
$n^{t} = |V^{t}|$
and
$m^{t} = |E^{t}|$
be the number of nodes and edges, respectively, in $G^{t}$.
Let
$\Degree^{t}(\cdot) = \Degree_{G^{t}}(\cdot)$
and extend the scope of this operator to all nodes in $V$ by defining
$\Degree^{t}(v) = 0$
for every node
$v \in V - V^{t}$.
\end{definition*}

The definition of clean times ensures that the following four properties hold
for every node
$v \in V$
and time
$t \geq \tClean$:
(1)
if
$v \in \IN^{t}$,
then
$u \notin \IN^{t}$
for every
$u \in \Neighbors(v)$;
(2)
if
$v \in \OUT^{t}$,
then there exists
$u \in \Neighbors(v)$
such that
$u \in \IN^{t}$;
(3)
$\IN^{t} \subseteq \IN^{t + 1}$;
and
(4)
$\OUT^{t} \subseteq \OUT^{t + 1}$.
Therefore, if all nodes are decided at time
$t \geq \tClean$,
that is,
$V^{t} = \emptyset$,
then the algorithm has stabilized to a legal output assignment by time $t$.
The rest of the analysis is dedicated to proving that 
$V^{\tClean + O (\log^{2} n)} = \emptyset$
w.h.p.
We start with the following observation that holds due to the update rule of
the reorientation bit variables.

\begin{observation}
\label{obs:mis:analysis:node-reorientation-bit}
For every node
$v \in V$,
port
$p \in \Ports(v)$,
and time
$t > \tClean$,
we have
(1)
if
$v \in \IN^{t}$,
then
$\varReorientationBit_{p}^{t} = 1$;
(2)
if
$v \in \OUT^{t}$,
then
$\varReorientationBit_{p}^{t} = 0$;
and
(3)
if
$v \in V^{t}$,
then
$\varReorientationBit_{p}^{t} = \varCoin_{p^{*}(v)}^{t - 1}$.
\end{observation}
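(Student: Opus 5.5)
Fix a node $v$, a port $p \in \Ports(v)$, and a time $t > \tClean$. The plan is to prove the three items by unrolling the update rule of $\varReorientationBit_{p}$ during round $t-1$ and using the clean-time conditions at time $t-1$. This is legitimate because $t-1 \geq \tClean$, so Lemma~\ref{lem:mis:analysis:clean-time} applies at both $t-1$ and $t$. In particular, all unoriented flags are $\False$ at time $t-1$ by condition~\ref{mis:analysis:clean-condition:unoriented-flags}. Hence precondition (I) of the first two branches of the reorientation-bit rule holds for every port. The value $\varReorientationBit^{t}_{p}$ is therefore determined as follows:
\begin{itemize}
\item it is $1$ if $\varStatus^{t}_{p} = \IN$;
\item otherwise it is $0$ if some port of $\Ports(v)$ has status $\OUT^{*}$ at time $t-1$;
\item otherwise it is $\varCoin^{t-1}_{p^{*}(v)}$, where $p^{*}(v)$ is the fixed leader port guaranteed by condition~\ref{mis:analysis:clean-condition:leaders}.
\end{itemize}

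\textbf{Item (1).} Suppose $v \in \IN^{t}$. Then every port of $v$ has status $\IN$ at time $t$, so $\varStatus^{t}_{p} = \IN$, which is exactly $\varStatus^{+}_{p}$ for round $t-1$. The first branch therefore fires and $\varReorientationBit^{t}_{p} = 1$.

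\textbf{Item (2).} Suppose $v \in \OUT^{t}$. Then $\varStatus^{t}_{p} \neq \IN$, so the first branch does not fire, and I must exhibit a port of $\Ports(v)$ with status $\OUT^{*}$ at time $t-1$. Since all ports of $v$ have status in $\{\OUT,\OUT^{*}\}$ at time $t$, conditions~\ref{mis:analysis:clean-condition:out-star-ports} and~\ref{mis:analysis:clean-condition:out-ports} give a port $p' \in \Ports(v)$ with $\varStatus^{t}_{p'} = \OUT^{*}$. This is a statement about time $t$, but the branch reads statuses at time $t-1$; this shift is the step needing care.
\begin{itemize}
\item By the status rule, $\varStatus^{t}_{p'} = \OUT^{*}$ means $p'$ did not satisfy the $\IN$ conditions in round $t-1$ and $\varStatus^{t-1}_{\CounterPort{p'}} = \IN$.
\item Condition~\ref{mis:analysis:clean-condition:in-ports} at time $t-1$ then shows that the neighbor $u$ across that edge is an $\IN$-node at time $t-1$, with local orientation $1$.
\item Hence the edge points toward $u$, so $\varLocalOrientation^{t-1}_{p'} = 0$.
\item Consequently, by condition~\ref{mis:analysis:clean-condition:out-star-ports} and the update rule, $p'$ had status $\OUT^{*}$ already at time $t-1$, unless it only just switched to $\OUT^{*}$ in round $t-1$.
\end{itemize}
To rule out that last case I would like to show $\varStatus^{t-1}_{p'} = \OUT^{*}$ directly. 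I expect this to hold whenever $t-1 > \tClean$, via the persistence arguments \eqref{eq:mis:analysis:in-remains-in}--\eqref{eq:mis:analysis:out-star-remains-out-star}. If the exact time index does not close, the fallback is to state the observation for $t > \tClean + 1$, absorbing the shift into $\tClean$.

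More carefully, the real gap is this: a port that is $\OUT$ at time $t$ and became decided only in round $t-1$ need not have an $\OUT^{*}$ sibling at time $t-1$. So I would instead check that in that round the first two branches fail and the bit equals the coin. That would break item (2), since the bit should be $0$ but would be the coin. I would therefore verify, using the $\OUT$ rule, that such an $\OUT^{*}$ sibling exists already at time $t-1$: an $\OUT$ status at time $t$ requires a sibling with status $\OUT^{*}$ at time $t-1$, which is exactly what the second branch needs.

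\textbf{Item (3).} Suppose $v \in V^{t}$. The target is that neither earlier branch fires, so the third branch sets the bit to the coin.
\begin{itemize}
\item The first branch fails: condition~\ref{mis:analysis:clean-condition:in-ports}(III) says that if any port of $v$ were $\IN$ at time $t$, all of them would be, making $v$ an $\IN$-node. So $\varStatus^{t}_{p} \neq \IN$.
\item The second branch fails: if some sibling were $\OUT^{*}$ at time $t-1$, persistence \eqref{eq:mis:analysis:out-star-remains-out-star} would keep it $\OUT^{*}$ at time $t$. Then every other port of $v$ is $\OUT$ at time $t$ by the $\OUT$ rule, since none is $\IN$. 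So all ports of $v$ are $\OUT$ or $\OUT^{*}$ and $v \in \OUT^{t}$, a contradiction.
\end{itemize}
The only delicate point throughout is the time-index bookkeeping with respect to $\tClean$.
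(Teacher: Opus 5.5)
Your items (1) and (3) are correct, and they are the reasoning the paper leaves implicit (it says only that the observation ``holds due to the update rule''). With all unoriented flags down at time $t-1$, item (1) is the first branch of the rule verbatim. In item (3) you correctly exclude the second branch: a port of $v$ with status $\OUT^{*}$ at time $t-1$ would push every port of $v$ into $\{\OUT,\OUT^{*}\}$ at time $t$.

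Item (2), however, has a genuine gap. Your worry about a port that is $\OUT$ at time $t$ is unfounded: as you eventually note, such a port certifies an $\OUT^{*}$ sibling at time $t-1$. The case you never close is the one in which \emph{every} port of $v$ has status $\OUT^{*}$ at time $t$. There your chain of bullets ends in a tautology (``$p'$ was $\OUT^{*}$ at time $t-1$ unless it just switched''). Moreover, \eqref{eq:mis:analysis:in-remains-in}--\eqref{eq:mis:analysis:out-star-remains-out-star} only propagate statuses forward in time, so they cannot yield $\varStatus^{t-1}_{p'}=\OUT^{*}$.

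This case cannot be closed, because item (2) as stated fails there. Let $v$ have degree $1$ with edge $e=\{u,v\}$, and let $u$ become an $\IN$-node for the first time at time $t-1>\tClean$.
\begin{itemize}
\item At time $t-1$ we have $\varStatus^{t-1}_{(v,e)}=\bot$, since its counter-port was not $\IN$ at time $t-2$ and it has no siblings.
\item At time $t$ we have $\varStatus^{t}_{(v,e)}=\OUT^{*}$, so $v\in\OUT^{t}$.
\item In round $t-1$ the first branch fails because $\varStatus^{t}_{(v,e)}\neq\IN$, and the second branch fails because no port of $v$ has status $\OUT^{*}$ at time $t-1$.
\item Hence $\varReorientationBit^{t}_{(v,e)}=\varCoin^{t-1}_{p^{*}(v)}$, which equals $1$ with probability $1/2$.
\end{itemize}
The same happens whenever all neighbors of $v$ join the MIS in the same round, at any point of the execution. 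So your fallback of shifting $\tClean$ does not help.

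What your argument does prove is a one-round-delayed version: if $v\in\OUT^{t-1}$ with $t-1\geq\tClean$, then $\varReorientationBit^{t}_{p}=0$ for every $p\in\Ports(v)$. Indeed, condition~\ref{mis:analysis:clean-condition:out-ports} at time $t-1$ supplies a port of $v$ with status $\OUT^{*}$ at that time. Monotonicity gives $v\in\OUT^{t}$, so the first branch fails, and the second branch sets the bit to $0$. This weaker form is what the later uses (Observation~\ref{obs:mis:analysis:isolated-nodes} and Lemma~\ref{lem:mis:analysis:good-degree-decreases}) can rely on, at the cost of one extra round in their time windows.
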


We can now establish the following observation that allows us to focus on
$E^{t}$ instead of $V^{t}$.

\begin{observation}
\label{obs:mis:analysis:isolated-nodes}
For every time
$t > \tClean$
and node
$v \in V^{t}$,
if $v$ has no neighbors in $G^{t}$ (i.e., it is an isolated node), then
$v \notin V^{t + O (\log n)}$
w.h.p.
\end{observation}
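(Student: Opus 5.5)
We have to show that an undecided node $v$ with no undecided neighbors becomes decided within $O(\log n)$ rounds w.h.p. The plan is a per-round constant-probability argument followed by a union over $O(\log n)$ independent trials.

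\textbf{First step: $v$ stays isolated in $G^{t'}$.} Because $t > \tClean$, properties (3) and (4) of clean times give $\IN^{t} \subseteq \IN^{t'}$ and $\OUT^{t} \subseteq \OUT^{t'}$ for all $t' \geq t$. Hence every neighbor of $v$ that is decided at time $t$ remains decided, with the same status, at all later times. As long as $v$ itself is undecided, it therefore stays isolated in $G^{t'}$. If some neighbor $u$ of $v$ is in $\IN^{t'}$, then $v$'s port on edge $\{u,v\}$ sees $\varStatus_{\CounterPort{p}} = \IN$. One or two rounds later, the status update rule yields $\OUT^{*}$ on that port and then $\OUT$ on all its siblings, so $v$ is decided within $O(1)$ rounds. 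We may therefore assume that all neighbors of $v$ are in $\OUT^{t'}$.

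\textbf{Second step: $v$ becomes a sink within $O(1)$ rounds.} By Observation~\ref{obs:mis:analysis:node-reorientation-bit}, every port of an $\OUT$ neighbor has reorientation bit $0$. Each port of $v$ has reorientation bit $\varCoin_{p^{*}(v)}^{t'-1}$, which equals $1$ with probability $1/2$, independently across rounds. In any round where $v$'s bit is $1$, every edge incident on $v$ gets oriented toward $v$, since the bits differ and $\varLocalOrientation_{p} \gets 1$ on $v$'s side and $0$ on the neighbor side. In rounds where $v$'s bit is $0$, the bits agree and the orientation is unchanged. So once one such round has occurred, $v$ is a sink from then on.

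\textbf{Third step: $v$ joins $\IN$.} Consider a round $t''$ in which $v$ is a sink at time $t''$ and $\varReorientationBit^{t''} = 1$ on all ports of $v$. All edges are oriented, so conditions (I)--(IV) of the $\IN$ rule hold, and every port of $v$ adopts $\IN$ at time $t''+1$. The reorientation bits at time $t''$ are a fresh fair coin. Hence two consecutive rounds with coin value $1$ suffice: the first makes $v$ a sink, the second triggers the $\IN$ rule. Each disjoint pair of rounds succeeds with probability $1/4$, independently.

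\textbf{Conclusion and main obstacle.} After $c\log n$ pairs of rounds, the failure probability is $(3/4)^{c\log n} = n^{-\Omega(c)}$. The main point requiring care is checking that nothing else interferes with the $\IN$ rule. One concern is stale $\OUT^{*}$ statuses in $v$'s ports forcing reorientation bit $0$. Clean condition~\ref{mis:analysis:clean-condition:out-star-ports} excludes this, since $\OUT^{*}$ would require an $\IN$ neighbor, a case already handled. The other concern is the shared coin, which is well defined because leaders are fixed after $\tClean$.
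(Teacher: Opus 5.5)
Your proof is correct and follows essentially the same route as the paper: you first handle the case of an $\IN$ neighbor, and then use Observation~\ref{obs:mis:analysis:node-reorientation-bit} to fix the neighbors' reorientation bits at $0$ while $v$'s bit is a fresh fair coin each round. The differences are refinements in your favor: because the coin is shared by all ports of $v$, a single round with coin $1$ orients every edge of $v$ inward at once, so you avoid the paper's per-edge union bound; you also explicitly account for the additional coin-$1$ round needed to satisfy condition (IV) of the $\IN$ rule, which the paper's final sentence (``once all these edges are oriented toward $v$, node $v$ becomes an $\IN$-node'') glosses over.
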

\begin{proof}
If there exists a node
$u \in \Neighbors_{G}(v)$
such that 
$u \in \IN^{t}$,
then $v$ is guaranteed to become an $\OUT$ node by time
$t + 2$.
So, assume that
$\Neighbors_{G}(v) \subseteq \OUT^{t}$
and consider some edge
$e = (u, v)$
incident on $v$.
Observation~\ref{obs:mis:analysis:node-reorientation-bit} ensures that
$\varReorientationBit_{(u, e)}^{t'} = 0$
for every time
$t' \geq t$.
Moreover, if $v$ is still undecided at time $t'$, then
$\varReorientationBit_{(v, e)}^{t'}
=
\varCoin_{p^{*}(v)}^{t' - 1}
\sim
\UnifDist(\{ 0, 1 \})$.
Therefore, edge $e$ is guaranteed to be oriented toward $v$ no later than time
$t + O (\log n)$
w.h.p.\ and remain oriented toward $v$ subsequently.
The assertion follows by a union bound over all edges incident on $v$ as once
all these edges are oriented toward $v$, node $v$ becomes an $\IN$-node.
\end{proof}

Owing to Observation~\ref{obs:mis:analysis:isolated-nodes}, our goal is to
prove that
$m^{\tClean + O (\log^{2} n)} = 0$
w.h.p.
To this end, we prove that there exist universal constants
$\alpha \in \Integers_{> 0}$
and
$c > 0$
such that
\begin{equation}
\label{eq:mis:analysis:main}
\Ex \left( m^{t + \lceil \log n \rceil + \alpha} \mid m^{t} \right)
\leq
(1 - c) m^{t}
\qquad\text{a.s.}
\end{equation}
for every
$t > \tClean$.
Theorem~\ref{thm:mis} then follows as a corollary of
Lemma~\ref{lem:prelim:probabilistic-progress}.

The remainder of this section is dedicated to establishing
\eqref{eq:mis:analysis:main}.
As a first step toward this goal, we recall a classic combinatorial lemma of
Alon, Babai, and Itai~\cite{AlonBI1986mis} for which we need the following
definition.

\begin{definition*}[%
good nodes]
Consider an undirected graph $H$.
A node
$v \in V_{H}$
of degree
$d = \Degree_{H}(v)$
is said to be \emph{good} in $H$ if
$\left| \left\{
u \in \Neighbors_{H}(v) : \Degree_{H}(u) \leq d
\right\} \right|
\geq
d / 3$.
\end{definition*}

\begin{lemma}[\cite{AlonBI1986mis}]
\label{lem:mis:analysis:good-nodes}
Let $H$ be an undirected graph and let
$\Gamma \subseteq V_{H}$
be the set of good nodes in $H$.
Then,
$|\{ e \in E_{H} : e \cap \Gamma \neq \emptyset \}|
\geq
|E_{H}| / 2$.
\end{lemma}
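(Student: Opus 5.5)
The plan is to use the classic double-counting (charging) argument for good nodes. Orient every edge of $E_{H}$ toward its endpoint of larger degree, breaking ties by some fixed total order on $V_{H}$ (say, toward the endpoint that is larger in the order). Call a node \emph{bad} if it is not good. The aim is to show that the edges with both endpoints bad, i.e.\ those edges $e$ with $e \cap \Gamma = \emptyset$, number at most $|E_{H}|/2$. The key observation is this: a bad node $v$ of degree $d$ has fewer than $d/3$ neighbors of degree $\le d$, so more than $2d/3$ of its neighbors have strictly larger degree. Hence, in the orientation, the out-degree of $v$ is more than twice its in-degree.

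First I would define, for every edge $e = (u \to v)$ between two bad nodes, a charge that this edge pays. A bad node $v$ with in-degree $\mathrm{in}(v)$ and out-degree $\mathrm{out}(v) > 2\,\mathrm{in}(v)$ can distribute two units of charge on each of its in-edges to distinct pairs of its out-edges. Equivalently, I would build an injection from pairs $(e,\text{copy}\in\{1,2\})$, with $e$ entering a bad node, into edges of $E_{H}$ leaving that node. This gives $2 \cdot |\{\text{edges entering bad nodes}\}| \le |\{\text{edges leaving bad nodes}\}|$ summed over bad nodes, but only relative to in-edges. So it does not immediately bound the bad-bad edges, and one must iterate.

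The cleaner route I would actually pursue is the standard Alon--Babai--Itai counting. For a bad-bad edge $(u\to v)$, follow out-edges from $v$: by the out-degree $> 2\,$in-degree property, each bad node $v$ can map each in-edge to two out-edges injectively. The edges so reached are all distinct, since every edge has a unique tail. By induction along the acyclic orientation (degrees weakly increase, and ties are broken by the fixed order, so the orientation is a DAG), one obtains a map from bad-bad edges to edges incident on $\Gamma$ which is at most one-to-one. Concretely, process nodes in reverse topological order. For each bad node, the number of its out-edges is at least twice its in-edges, so its in-edges can be injectively assigned to out-edges. The resulting chains of in-edges starting from a bad-bad edge are injective paths, and they must terminate at an edge entering a good node, because a sink of the DAG has out-degree $0$ and therefore cannot be bad. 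This yields an injection from bad-bad edges into edges touching $\Gamma$, and the claim $|\{e : e\cap\Gamma\neq\emptyset\}| \ge |E_{H}|/2$ follows.

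The main obstacle is making the injection genuinely injective along chains. Two chains could merge at a bad node, but the in-edge-to-out-edge assignment at each bad node is injective, so merging is impossible. Termination follows from acyclicity. I would double-check the boundary case where a bad node has in-degree $0$; there is nothing to assign there, so the case is trivial. Only the factor $>2$ strictly matters if one wants the stronger bound via a potential argument; for the stated $1/2$, injectivity alone suffices.
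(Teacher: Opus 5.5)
Your final chain argument has a genuine gap: the map sending a bad--bad edge to the terminal edge of its chain is not injective. Take a chain $e_0, e_1 = g(e_0), \dots, e_k$. Each $e_j$ with $1 \le j < k$ has a bad tail (the head of $e_{j-1}$) and a bad head (otherwise the chain would have stopped at $e_j$). So $e_1, \dots, e_{k-1}$ are themselves bad--bad edges, and the chain started at $e_1$ is a suffix of the one started at $e_0$. Both chains end at $e_k$.

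Your ``merging is impossible'' check only excludes two chains entering a node through different in-edges. It does not exclude one chain being a suffix of another, which happens whenever $g(e_0)$ has a bad head. For example, let a leaf $u$ hang off a node $v$ of degree $4$ whose other three neighbors have degree $5$, each adjacent to four good nodes of large degree. Then $u$, $v$, and the degree-$5$ nodes are all bad, and $(u \to v)$ and $(v \to w)$ share a terminal edge.

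More fundamentally, the chain construction uses only an injective in-to-out assignment at bad nodes, i.e., out-degree at least in-degree, and that cannot suffice. Declare the internal nodes of a long directed path bad and its two ends good. Every premise your chain argument uses then holds: bad nodes have out-degree $\ge$ in-degree, the orientation is acyclic, and sinks are not bad. Yet the bad--bad edges far outnumber the two edges touching good nodes. So your closing remark that injectivity alone suffices for the factor $1/2$ is wrong. The factor $2$ in $\mathrm{out}(v) > 2\,\mathrm{in}(v)$ is exactly what produces it.

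The argument you set aside in your second paragraph already finishes the proof, and no iteration is needed. Let $E_B$ be the set of edges with both endpoints bad. The head of every $e \in E_B$ is bad, so $E_B$ is contained in the set of edges entering bad nodes. Out-edges of distinct nodes are distinct edges, since each edge has a unique tail. Your per-node injection therefore gives $2|E_B| \le 2\sum_{v \in V_{H} \setminus \Gamma} \mathrm{in}(v) \le \sum_{v \in V_{H} \setminus \Gamma} \mathrm{out}(v) \le |E_{H}|$. Equivalently, the pairs of out-edges assigned to the edges of $E_B$ are pairwise disjoint subsets of $E_{H}$. This is precisely the Alon--Babai--Itai double-counting argument, which the paper cites without reproving.
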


We establish \eqref{eq:mis:analysis:main} by combining
Lemma~\ref{lem:mis:analysis:good-nodes} and the following lemma.

\begin{lemma}
\label{lem:mis:analysis:good-degree-decreases}
Fix some time
$t > \tClean$
and the configuration
$C^{t}$
and consider a node
$v \in V^{t}$
of degree
$d = \Degree^{t}(v) > 0$
that is good in the graph $G^{t}$.
There exists a universal constant
$c > 0$
such that
\[
\Ex \left( \Degree^{t + \lceil \log d \rceil + 4}(v) \right)
\leq
(1 - c) d
\, .
\]
\end{lemma}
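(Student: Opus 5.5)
The plan is to show that with constant probability $v$ becomes an $\OUT$-node by time $t + \lceil \log d \rceil + 4$. In that case $\Degree^{t+\lceil\log d\rceil+4}(v) = 0$, and otherwise the degree can only shrink, because $\IN^{t}$ and $\OUT^{t}$ are monotone after $\tClean$. This yields $\Ex(\Degree^{t+\lceil \log d\rceil+4}(v)) \leq (1-c)d$. To make $v$ decided, it suffices that one of its low-degree undecided neighbors joins $\IN$. Let $L = \lceil \log d \rceil + 2$ and let $U = \{u \in \Neighbors_{G^t}(v) : \Degree^{t}(u) \leq d\}$, so $|U| \geq d/3$ since $v$ is good. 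By Observation~\ref{obs:mis:analysis:node-reorientation-bit}, while a node is undecided its reorientation bit in each round is the previous coin toss of its leader port. These coins are independent fair bits across nodes and rounds, and they are independent of $C^{t}$. Decided nodes have fixed bits: $1$ for $\IN$ and $0$ for $\OUT$.

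For $u \in U$, let $A_u$ be the event that the coins of $u$ in the $L$ rounds of the window after $t$ are all $1$. Define the disjoint events $E_u = A_u \cap \bigcap_{u' \in U - \{u\}} \overline{A_{u'}}$. Then $\Pr(A_u) = 2^{-L} \in [1/(8d), 1/(4d)]$ and
\[
\Pr(E_u) \geq 2^{-L}(1-2^{-L})^{|U|} \geq \tfrac{1}{8d}\bigl(1 - \tfrac{1}{4d}\bigr)^{d} \geq \tfrac{c_1}{d} .
\]
Summing over $|U| \geq d/3$ disjoint events gives $\Pr(\bigcup_u E_u) \geq c_1/3$. Next, conditioned on $E_u$, I want every undecided neighbor $w$ of $u$ to have at least one $0$ coin in the window. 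For $w \in U$ this is already implied by $E_u$. For the other neighbors (at most $\Degree^{t}(u) \leq d$ of them) the coins are independent of $E_u$, so a union bound gives failure probability at most $d \cdot 2^{-L} \leq 1/4$. Decided neighbors are harmless: an $\OUT$ neighbor keeps bit $0$, and an $\IN$ neighbor makes $u$ decided directly.

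On this good event, I argue deterministically that $u$ becomes an $\IN$-node by the end of the window. Since $u$ plays $1$ in every round, each edge $\{u,w\}$ is reoriented toward $u$ at the first round where $w$ plays $0$. After that, the edge stays oriented toward $u$, because its bits can differ again only when $w$ plays $0$. Hence at the final round of the window $u$ is a sink with reorientation bit $1$, and within the next round it adopts $\IN$ by the status rule, since clean-time conditions guarantee the unoriented flags are down. The neighbor $v$ then becomes $\OUT$ within two further rounds, which is exactly what the slack between $L$ and $\lceil\log d\rceil + 4$ must absorb. I will check this count carefully and enlarge the additive constant in $L$ if needed, which does not affect the constant $c$.

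The main obstacle is the interference just mentioned: a neighbor $w$ of $u$ (possibly not adjacent to $v$) might itself join $\IN$ during the window before it ever plays $0$. That would make $u$ an $\OUT$-node instead of an $\IN$-node. To handle this, I would note that such a $w$ must be a sink with bit $1$ at that moment, including on edge $\{u,w\}$. This forces $w$ to have played only $1$s up to then, so the first $0$ of $w$ must come after it already joined $\IN$. I would therefore strengthen the per-neighbor requirement to ``$w$ plays $0$ in the first round of the window''. If that costs too much, the fallback is to require that $w$ is not all-$1$ on a prefix of length $L$ and to extend the window by a constant to re-run the argument. This retains the union bound $d\cdot 2^{-L'} \leq 1/4$ as long as the requirement on each $w$ still fails with probability $O(2^{-L})$. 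The first alternative fails, as it has constant probability per neighbor, so I expect the real work is showing directly that a $w$ which joins $\IN$ before playing $0$ already fails to be all-$1$ in a way charged to the same $2^{-L}$ event. If not, I would fall back on a potential argument counting such early $\IN$-joins as degree decreases of $u$ that still leave the constant fraction for $v$.
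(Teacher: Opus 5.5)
\textbf{There is a genuine gap: the proposal stops exactly at the real difficulty.} Your event (all coins of $u$ equal $1$, no other vertex of $U$ lucky, every other undecided neighbor of $u$ plays some $0$) is essentially the paper's $A_i\land B_i$. It does not imply that $u$ joins $\IN$. A neighbor $w$ of $u$ may join $\IN$ before it ever plays $0$. Then $u$ becomes an $\OUT$-node and only the single edge $\{u,v\}$ leaves $G^{t}$, so your claim that an $\IN$ neighbor of $u$ is ``harmless'' is wrong. This interference cannot be charged to a $2^{-L}$ luck event, so neither of your proposed fixes can work. For example, if $w$ is an undecided sink in $C^{t}$ whose reorientation bits at time $t$ equal $1$, then by the status rule $w\in\IN^{t+1}$ regardless of all future coins, and $u$ is doomed. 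More generally, a neighbor that is (close to) a sink at time $t$ joins $\IN$ early with constant probability. Hence no union bound over the $\Degree^{t}(u)$ neighbors with cost $O(2^{-L})$ each is available.

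\textbf{The missing idea is the paper's weak/strong dichotomy, which credits such interference to $v$ instead of fighting it.} Let $t'=t+\lceil\log d\rceil+1$, and let $D(u)$ be the event that some $G$-neighbor of $u$ lies in $\IN^{t'}$; this forces $u\in\OUT^{t'+2}$. Call $u$ weak if $\Pr(D(u))\geq c_0$ for a suitable constant $c_0$ (the paper takes $1/8$), and strong otherwise.

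If at least half of the $\geq d/3$ low-degree neighbors of $v$ are weak, linearity of expectation already gives $\Ex(\Degree^{t+\lceil\log d\rceil+4}(v))\leq(1-c_0/6)d$, since each weak neighbor's edge to $v$ disappears with probability at least $c_0$.

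Otherwise there are $k\geq d/6$ strong neighbors $s_1,\dots,s_k$, and you run your luck argument only on them. The additional ingredient is showing that $\Pr(D(s_i)\mid A_i\land B_i)$ is below $1/2$ once $c_0$ is small enough. This follows from
$\Pr(D(s_i)\land A_i\land B_i)\leq\Pr(D(s_i)\land L(s_i))\leq\Pr(D(s_i))\Pr(L(s_i))$,
combined with $\Pr(A_i)\Pr(B_i)=\Omega(1/d)$, $\Pr(L(s_i))=O(1/d)$, and strongness. For the last inequality, the paper argues that all-$1$ coins at $s_i$ only push edges toward $s_i$, and therefore do not make its neighbors more likely to join $\IN$.

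This yields $\Pr(A_i\land B_i\land\neg D(s_i))=\Omega(1/d)$ for pairwise disjoint events. On each of them your deterministic orientation argument is valid: by monotonicity of $\IN$ after $\tClean$, no neighbor of $s_i$ is in $\IN$ at any time of the window.

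Your potential-style fallback targets the wrong quantity. What must be counted is $u$ itself being dominated, as a unit decrease of $v$'s degree, and that is exactly what the dichotomy does.
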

\begin{proof}
Let
$t' = t + \lceil \log d \rceil + 1$.
Given a node
$u \in V^{t}$,
we say that $u$ is \emph{out-destined} at time $t'$ if there exists a node
$w \in \Neighbors_{G}(u) \cap \IN^{t'}$.
Let $D(u)$ be the event that $u$ is out-destined at time $t'$ and recall that
the algorithm is designed so that $D(u)$ implies that $u$ is an $\OUT$-node at
time
$t' + 2 = t + \lceil \log d \rceil + 3$
(at the latest).
Node $u$ is regarded as \emph{weak} (with respect to configuration $C^{t}$) if
$\Pr (D(u)) \geq 1 / 8$;
otherwise, node $u$ is regarded as \emph{strong}.

Let $v$ be the node from the lemma's statement and let
$\Neighbors^{\leq}_{G^{t}}(v)
=
\{ u \in \Neighbors_{G^{t}}(v) : \Degree^{t}(u) \leq d \}$,
recalling that
$|\Neighbors^{\leq}_{G^{t}}(v)| \geq d / 3$
as $v$ is good in $G^{t}$;
let
$s_{1}, \dots, s_{k}$
be the strong nodes in
$\Neighbors^{\leq}_{G^{t}}(v)$.
If more than half of the nodes in
$\Neighbors^{\leq}_{G^{t}}(v)$
are weak, then
$\Ex (\Degree^{t + \lceil \log d \rceil + 4}(v))
\leq
\Ex (\Degree^{t + \lceil \log d \rceil + 3}(v))
<
d \left( 1 - \frac{1}{42} \right)$
by the linearity of expectation, thus establishing the assertion.
So, assume in what follows that at least half of the nodes in
$\Neighbors^{\leq}_{G^{t}}(v)$
are strong, i.e.,
$k
\geq
|\Neighbors^{\leq}_{G^{t}}(v)| / 2
\geq
d / 6$.

We say that a node
$u \in V^{t}$
is \emph{lucky} if
$\varCoin_{p^{*}(u)}^{\tau - 1} = 1$
for all
$t \leq \tau < t'$
and denote this event by
$L(u)$;
clearly, the events in
$\{ L(u) \}_{u \in V^{t}}$
are mutually independent and
\[
\frac{1}{2 d}
\leq
\Pr (L(u))
<
\frac{1}{4 d}
\]
as
$t' - t = \lceil \log d \rceil + 1$.
We can now introduce the following events for each
$1 \leq i \leq k$:
\[
A_{i} = L(s_{i}) \land \bigwedge_{1 \leq j \leq i - 1} \neg L(s_{j})
\, ,
\quad
B_{i}
=
\bigwedge_{u \in \Neighbors_{G^{t}}(s_{i}) - \{ s_{1}, \dots, s_{i - 1} \}}
\neg L(u)
\, ,
\quad\text{and}\quad
D_{i} = D(s_{i})
\, .
\]

The key observation now is that
$A_{i} \land B_{i} \land \neg D_{i}$
implies that all edges (in $E$) incident on $s_{i}$ are oriented toward
$s_{i}$ at time $t'$.
This in turn implies, by the update rule of the status variables, that
$s_{i} \in \IN^{t' + 1}$
with probability at least
$1 / 2$.
Recalling that
$s_{i} \in \IN^{t' + 1}
\Longrightarrow
v \in \OUT^{t' + 3}$,
we conclude that
$A_{i} \land B_{i} \land \neg D_{i}$
implies that
$\Degree^{t + \lceil \log d \rceil + 4}(v) = \Degree^{t' + 3}(v) = 0$
with probability at least
$1 / 2$.
The assertion is established by proving that
\begin{equation}
\label{eq:mis:analysis:good-degree-decreases:goal-1}
\Pr \left(
\bigvee_{1 \leq i \leq k} A_{i} \land B_{i} \land \neg D_{i}
\right)
>
\frac{1}{192}
\, .
\end{equation}

\sloppy
Since the events in
$\{ A_{i} \}_{1 \leq i \leq k}$
are pairwise disjoint, it follows that the events in
\mbox{$\{ A_{i} \land B_{i} \land \neg D_{i} \}_{1 \leq i \leq k}$}
are pairwise disjoint, hence
\[
\Pr \left(
\bigvee_{1 \leq i \leq k} A_{i} \land B_{i} \land \neg D_{i}
\right)
=
\sum_{1 \leq i \leq k} \Pr (A_{i} \land B_{i} \land \neg D_{i})
\, .
\]
We shall establish
\eqref{eq:mis:analysis:good-degree-decreases:goal-1}
by proving that
\begin{equation}
\label{eq:mis:analysis:good-degree-decreases:goal-2}
\Pr (A_{i} \land B_{i} \land \neg D_{i})
>
\frac{1}{32 d}
\geq
\frac{1}{192 k}
\end{equation}
for each
$1 \leq i \leq k$.
\par\fussy

To this end, consider some
$1 \leq i \leq k$
and develop
\begin{equation}
\label{eq:mis:analysis:good-degree-decreases:bound-A}
\Pr (A_{i})
=
\Pr (L(s_{i}))
\cdot
\left( 1 - \Pr \left( \bigvee_{1 \leq j \leq i - 1} L(s_{j}) \right) \right)
>
\frac{1}{4 d} \cdot \left( 1 - (i - 1) \cdot \frac{1}{2 d} \right)
>
\frac{1}{8 d}
\, ,
\end{equation}
where
the second transition is by the union bound
and
the last transition holds as
$i - 1 < k \leq d$,
and
\begin{equation}
\label{eq:mis:analysis:good-degree-decreases:bound-B}
\Pr (B_{i})
=
1 - \Pr \left(
\bigvee_{u \in \Neighbors_{G^{t}}(s_{i}) - \{ s_{1}, \dots, s_{i - 1} \}} L (u)
\right)
\geq
1 - \Degree^{t}(s_{i}) \cdot \frac{1}{2 d}
\geq
\frac{1}{2}
\, ,
\end{equation}
where
the second transition is by the union bound
and
the last transition holds as
$\Degree^{t}(s_{i}) \leq d$.
We further develop
\begin{align}
\Pr (D_{i} \mid A_{i} \land B_{i})
= \, &
\frac{\Pr (D_{i} \land A_{i} \land B_{i})}{\Pr (A_{i} \land B_{i})}
=
\frac{\Pr (D_{i} \land A_{i} \land B_{i})}%
{\Pr (A_{i}) \cdot \Pr (B_{i})}
<
16 d \cdot \Pr (D_{i} \land A_{i} \land B_{i})
\nonumber
\\
\leq \, &
16 d \cdot \Pr (D_{i} \land L(s_{i}))
=
16 d \cdot \Pr (D_{i} \mid L(s_{i})) \cdot \Pr (L(s_{i}))
\nonumber
\\
\leq \, &
16 d \cdot \Pr (D_{i}) \cdot \Pr (L(s_{i}))
<
16 d \cdot \frac{1}{8} \cdot \frac{1}{4 d}
=
\frac{1}{2}
\label{eq:mis:analysis:good-degree-decreases:bound-D-cond}
\, ,
\end{align}
where
the second transition holds as events $A_{i}$ and $B_{i}$ are independent,
the third transition follows from
\eqref{eq:mis:analysis:good-degree-decreases:bound-A} and
\eqref{eq:mis:analysis:good-degree-decreases:bound-B},
the sixth transition holds by the definition of event $L(s_{i})$,
and the penultimate transition holds by recalling that $s_{i}$ is strong.
%YE: the sixth transition requires a more elaborated explanation.
Inequality~%
\eqref{eq:mis:analysis:good-degree-decreases:goal-2}
is now established by developing
\begin{align*}
\Pr (A_{i} \land B_{i} \land \neg D_{i})
= \, &
\Pr (\neg D_{i} \mid A_{i} \land B_{i}) \cdot \Pr (A_{i} \land B_{i})
\\
= \, &
\Pr (\neg D_{i} \mid A_{i} \land B_{i})
\cdot
\Pr (A_{i}) \cdot \Pr(B_{i})
>
\frac{1}{2} \cdot \frac{1}{8 d} \cdot \frac{1}{2}
=
\frac{1}{32 d}
\, ,
\end{align*}
where
the second transition holds as events $A_{i}$ and $B_{i}$ are independent
and
the penultimate transition follows from
\eqref{eq:mis:analysis:good-degree-decreases:bound-D-cond},
\eqref{eq:mis:analysis:good-degree-decreases:bound-A},
and
\eqref{eq:mis:analysis:good-degree-decreases:bound-B}.
\end{proof}

%%%%%%%%%%%%%%%%%%%%%%%%%%%%%%%%%%%%%%%%%%%%%%%%%%%%%%%%%%%%%%%%%%%%%%%%%%%%%%
\section{Maximal Matching}
\label{sec:mm}
%%%%%%%%%%%%%%%%%%%%%%%%%%%%%%%%%%%%%%%%%%%%%%%%%%%%%%%%%%%%%%%%%%%%%%%%%%%%%%
An edge subset
$M \subseteq E$
is a \emph{matching} if
$e \cap e' = \emptyset$
for every distinct edges
$e, e' \in M$.
A \emph{maximal matching (MM)} is a matching
$M \subseteq E$
that is maximal in the sense that $M'$ is not a matching for
every
$M \subset M' \subseteq E$.
In the MM problem, the goal is to construct a MM
$M \subseteq E$.
Under the UP model, this is translated to using
$\{ \MATCHED, \UNMATCHED, \bot \}$
as the output label set so that an edge
$e = \{ v_{1}, v_{2} \} \in E$
is determined to be included in (resp., excluded from) the constructed MM $M$
in a configuration $C$ if
$\omega(C((v_{i}, e))) = \MATCHED$
(resp.,
$\omega(C((v_{i}, e))) = \UNMATCHED$)
for each
$i \in \{ 1, 2 \}$;
edge $e$ is regarded as \emph{undecided} in $C$ if it is neither included in,
nor excluded from, $M$.
A node
$v \in V$
is regarded as \emph{undecided} in $C$ if at least one of its incident edges
is undecided;
otherwise, $v$ is regarded as \emph{decided} in $C$.

\begin{theorem}
\label{thm:mm}
There exists a self-stabilizing UP algorithm that solves the MM problem and
stabilizes in
$O(\log^{5} n)$
time w.h.p.
\end{theorem}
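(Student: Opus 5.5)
Since Theorem~\ref{thm:mm} is existential, the plan is to design a self-stabilizing UP algorithm and bound its stabilization time. Each edge will be classified as matched, unmatched, or undecided. A port whose node is matched (some sibling is $\MATCHED$) declares $\UNMATCHED$. A port whose counter-port's node is matched also declares $\UNMATCHED$. This propagation is detectable in the UP model because a port sees its counter-port's state and the set of its siblings' states.

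Consistency checks in the style of the clean-time conditions of Section~\ref{sec:mis:analysis} will locally reset the following errors:
\begin{itemize}
\item two $\MATCHED$ ports at a node;
\item a $\MATCHED$ port whose counter-port is not $\MATCHED$;
\item an $\UNMATCHED$ port without justification.
\end{itemize}
As in Lemma~\ref{lem:mis:analysis:clean-time}, I will show there is a time $\tClean \le O(\log^2 n)$ w.h.p.\ after which every time is clean. The argument uses Lemma~\ref{lem:prelim:up-model:local-port-election} for a stable leader port and notes that each error is cleared deterministically within $O(1)$ rounds. From then on the matched set only grows and the decided edges form a matching.

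The matching step itself is a proposal mechanism run by undecided nodes in repeated tournaments. In each tournament, every undecided node flips a coin (through its leader port) to be a proposer or a receiver. A proposer selects one undecided incident port by running LLE restricted to its undecided ports; this is the random ``proposal'' port, which by Lemma~\ref{lem:prelim:up-model:local-port-election} is chosen within $O(\log^2 n)$ rounds. A receiver then runs LLE among its ports that received proposals and accepts the winner, and the edge becomes $\MATCHED$ on both sides.

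The main difficulty is that tournaments are not globally synchronized. A tournament has random length, geometric-like with $O(\log^2 n)$ tail, and neighbours may be in different tournaments. I would handle this with a local epoch or phase variable: a node finishes a tournament only when no sibling or counter-port is in an incompatible stage, so any two adjacent nodes overlap in an aligned window of each other's tournaments. The first thing to establish is that within $O(\log^2 n)$ rounds every undecided node completes a tournament aligned with its neighbours.

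For progress I will follow the MIS template and apply Lemma~\ref{lem:prelim:probabilistic-progress} to $m^t$, the number of undecided edges. The goal is
\[
\Ex\left(m^{t+T}\mid m^t\right)\le (1-c)m^t \quad\text{with } T=O(\log^{3} n),
\]
giving $O(\log^4 n)$ stages overall; the extra $\log n$ factor comes from a union bound over unsynchronised tournament boundaries, which yields $O(\log^5 n)$ in total. Per tournament, the contraction is the classical Israeli--Itai argument via a good-node lemma in the spirit of Lemma~\ref{lem:mis:analysis:good-nodes}. A good node $v$ of degree $d$ has at least $d/3$ neighbours of degree at most $d$. Each such neighbour $u$ proposes to $v$ with probability at least $1/(2\Degree(u)) \ge 1/(2d)$, because the leader port is uniform by symmetry of the LLE coin flips. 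So $v$ receives a proposal with constant probability and is matched with constant probability, and the edges incident to good nodes, at least half of all edges, are removed with constant probability.

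The hardest step is proving that the LLE-based proposal port choice is close enough to uniform, and independent across nodes, despite the adversarial initial states and asynchronous tournament starts. I would first argue that after a clean reset, every fresh LLE invocation is exchangeable over the participating ports, which gives uniformity.
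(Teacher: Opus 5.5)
There is a genuine gap at exactly the step you call ``the main difficulty'': you assume synchronization rather than achieve it. Your rule that ``a node finishes a tournament only when no sibling or counter-port is in an incompatible stage'' is a local synchronizer (a unison) with a constant-size stage counter. The natural instantiations of such waiting rules deadlock from adversarial initial configurations. For example, on a triangle whose nodes hold stages $0$, $1$, $2$ (mod $3$), every node sees a neighbor one stage behind, so no node ever advances. Repairing this means building a self-stabilizing unison, and known constructions use clock ranges that depend on graph parameters (diameter, cycle structure). Nothing of this sort is available for constant-size automata on general graphs. In the UP model it is harder still, because a leader port does not directly see the states of its neighbors' leader ports. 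The paper flags this obstacle explicitly: phases are hard to synchronize under self-stabilization, and known probabilistic phase synchronization needs fixed phase lengths. As a result, two of your claims are unsupported: that every undecided node completes a tournament ``aligned with its neighbours'' within $O(\log^{2} n)$ rounds, and that a good node ``receives a proposal with constant probability''. Without alignment, the set of ports holding proposals at a receiver changes over time, so your receiver-side LLE has no well-defined pool.

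The paper's missing idea is to not synchronize at all. Each phase is exactly two i.i.d.\ port tournaments. In the 1st epoch, the proposer's chosen port is the unique last candidate among its undecided ports. In the 2nd epoch, the proposer keeps proposing while that port remains a candidate, and the receiver accepts the unique last proposer. A proposal counts only if the two 2nd epochs start in the same round. Alignment is then obtained probabilistically through Lemma~\ref{lem:mm:analysis:repeated-geometric-tournament-mixing}. That lemma is proved via a local CLT for sums of i.i.d.\ tournament lengths, which is why making phase lengths independent matters. It says that for any target time at distance at least $\mu = \Theta(\log^{3} n)$, a node starts a tournament of prescribed parity at exactly that time with probability $\Omega(1/\log n)$. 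Combined with the good-node lemma and a case split on $d$ versus $\log n$, this yields only a $(1 - c/\log n)$ contraction of $m^{t}$ per $O(\log^{3} n)$ rounds. Lemma~\ref{lem:prelim:probabilistic-progress} then gives $O(\log n \cdot \log n \cdot \log^{3} n) = O(\log^{5} n)$. Your time accounting also does not hold together. A $(1-c)$ contraction per $O(\log^{3} n)$ rounds would give $O(\log^{4} n)$ in total, and a union bound cannot contribute a multiplicative $\log n$. In the paper, that extra factor is precisely the price of achieving alignment only with probability $\Omega(1/\log n)$.
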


The algorithm promised in Theorem~\ref{thm:mm} is presented in
Section~\ref{sec:mm:algorithm}, first at a high level and then, from the
perspective of the individual ports.
Section~\ref{sec:mm:analysis} is then dedicated to proving the correctness of
our algorithm and bounding its stabilization time, thus establishing
Theorem~\ref{thm:mm}.

%%%%%%%%%%%%%%%%%%%%%%%%%%%%%%%%%%%%%%%
\subsection{The MM Algorithm}
\label{sec:mm:algorithm}
%%%%%%%%%%%%%%%%%%%%%%%%%%%%%%%%%%%%%%%

%%%%%%%%%%%%%%%%%%%%%%%%%%%%%%%%%%%%%%%
\subsubsection*{High-Level Description}
%%%%%%%%%%%%%%%%%%%%%%%%%%%%%%%%%%%%%%%
Our MM algorithm is inspired by a classic MM algorithm of Israeli and Itai \cite{IsraelI1986mm} (in fact, by an adaptation thereof, presented in \cite{BittonEIK2024mm}).
The algorithm of \cite{IsraelI1986mm} divides the execution into \emph{phases},
where each phase lasts for a fixed (constant) number of rounds and is
performed in synchrony by all undecided nodes.
Each phase of an undecided node
$v \in V$
has a \emph{mode} picked by $v$ uniformly at
random from
$\{ \Proposing, \Receiving \}$.
In a $\Proposing$-phase, $v$ \emph{proposes} to an undecided neighbor picked
uniformly at random.
In a $\Receiving$-phase, if $v$ receives at least one proposal, then $v$
selects exactly one of the proposing neighbors
$u \in \Neighbors(u)$
(arbitrarily) and accepts $u$'s proposal, which results in adding the edge
$\{ u, v \}$
to the constructed matching and turning both $v$ and $u$ into decided nodes.

When trying to implement this general scheme as a self-stabilizing UP
algorithm, we encounter two obstacles:
First, in the realm of self-stabilizing algorithms, the phases of the
individual nodes do not necessarily run in synchrony.\footnote{%
Bitton, Emek, Izumi, and Kutten \cite{BittonEIK2024mm} deal with this obstacle by introducing a
general technique called probabilistic phase synchronization.
This technique however assumes a fixed phase length, a property that does not
hold in our case.}
Second, under the UP model, picking one neighbor (uniformly at random or
otherwise) is equivalent to picking one port;
this requires symmetry breaking among the sibling ports which generally
involves a stochastic process whose length is a random variable, thus we
cannot hope for fixed length phases.

Our UP MM algorithm adheres to the aforementioned phase structure, picking each phase of an undecided node
$v \in V$
to be either $\Proposing$ or $\Receiving$ uniformly at random, and to overcome the aforementioned obstacles, we divide the phase into two \emph{epochs}.
In a $\Proposing$-phase, the 1st epoch is dedicated to picking an undecided
neighbor
$u \in \Neighbors(v)$
uniformly at random;
the 2nd epoch is then dedicated to proposing to $u$ via port
$(v, \{ u, v \})$.
In a $\Receiving$-phase, the 1st epoch has no particular role and the 2nd
epoch is dedicated to picking the proposal of
exactly one neighbor
$u \in \Neighbors(v)$
(assuming that $v$ receives at least one proposal);
the proposal is then accepted via the port
$(v, \{ u, v \})$.

The length of each epoch is determined by means of a \emph{(port) tournament}:
initially, all ports in $\Ports(v)$ are \emph{candidates} and in every
round, each candidate port \emph{retires} --- i.e., stops being a candidate ---
independently with probability
$1 / 2$;
the tournament ends once all ports have retired.
This repeats itself in every epoch of every phase so that the epoch starts with all ports being candidates and ends once all ports have retired.

In a $\Proposing$-phase, the tournament structure of the 1st epoch
facilitates picking one undecided neighbor
$u \in \Neighbors(v)$
uniformly at random:
$u$ is picked if and only if port
$p = (v, \{ u, v \})$
remains the unique last candidate in $v$'s (1st epoch) tournament
among all ports
$(v, \{ u', v \})$
such that $u'$ is undecided.
(Notice that if all surviving candidates retire in the same round, then the last candidate is not unique, in which case, node $v$ fails to match in the current phase;
this is accounted for in the analysis, see Section~\ref{sec:mm:analysis}.)
Following that, during the 2nd epoch of the phase, $v$ attempts to propose
to $u$, however, this happens with one crucial condition:
the 2nd epoch of $v$ should be perfectly synchronized with that of $u$,
that is, the corresponding tournaments start at the same time.
Assuming that this condition is satisfied, the proposals of $v$ to $u$ are
carried out through port $p$ and continue as long as $p$ is a candidate in
$v$'s (2nd epoch) tournament.

In a $\Receiving$-phase, the 1st epoch has no particular role (although it still lasts for exactly one tournament), whereas the role of the 2nd epoch is to pick one out of possibly multiple incoming proposals.
To this end, we exploit, once again, the tournament structure:
Consider the set
$Q \subseteq \Neighbors(v)$
of undecided neighbors of $v$ that
(1)
start a 2nd epoch tournament in synchrony with the 2nd epoch tournament
of $v$;
and
(2)
propose to $v$ when the tournament begins.
Recall that the nodes
$u \in Q$
keep proposing to $v$ as long as port
$(u, \{ u, v \})$
is a candidate in the corresponding 2nd epoch tournament of $u$.
Node $v$ accepts the proposal of the node
$u \in Q$
that remains the unique last proposing node.
(As before, the last proposing node in $Q$ may not be unique, in which case,
node $v$ fails to match in the current phase;
this is accounted for in the analysis, see Section~\ref{sec:mm:analysis}.)

\begin{remark*}
On the face of it, the structure of the algorithm may seem overly complicated:
As mentioned earlier, the 1st epoch of a $\Receiving$-phase is ``redundant''.
Moreover, it is not clear why the 2nd epoch of a $\Proposing$-phase does not
end as soon as the port through which the proposals are carried out retires.
Put differently, is it really necessary that each phase lasts for (exactly)
two i.i.d.\ tournaments?

The answer to this question is categorically positive:
Recall that a matching between a proposing node $u$ and a receiving node $v$
can be finalized only if the 2nd epochs of their corresponding ($\Proposing$-
and $\Receiving$-)phases are synchronized.
The analysis presented in Section~\ref{sec:mm:analysis} crucially depends on
the argument that such
$(u, v)$-synchronization
events occur sufficiently often (see
Section~\ref{sec:mm:analysis:tournaments}).
It turns out that the key for establishing this argument is to ensure that the
lengths of the phases of $u$ and $v$ are independent, a property that we
obtain by insisting that each phase lasts for two i.i.d.\ tournaments.
\end{remark*}

%%%%%%%%%%%%%%%%%%%%%%%%%%%%%%%%%%%%%%%
\subsubsection*{Port-Level Description}
%%%%%%%%%%%%%%%%%%%%%%%%%%%%%%%%%%%%%%%
The mechanism behind the tournaments at a node
$v \in V$
is controlled by a variable
$p.\varTournament \in \{ 0, 1 \}$,
referred to as the \emph{tournament} variable, that each port
$p \in \Ports(v)$
maintains.
This variable is updated in each round according to the following rule:
if
$\varTournament_{p} = 1$,
then
$\varTournament_{p} \sim \UnifDist(\{ 0, 1 \})$,
independently;
if
$\varTournament_{p'} = 0$
for all
$p' \in \Siblings(p) \cup \{ p \}$,
then
$\varTournament_{p} \gets 1$;
otherwise, the value of $\varTournament_{p}$ remains unchanged (formally,
$\varTournament_{p} \gets \varTournament_{p}$).
This means that each tournament ends in a round during which all ports
$p \in \Ports(v)$
raise the values of their tournament variables from
$\varTournament_{p} = 0$
to
$\varTournament_{p} = 1$;
the next round is regarded as the first round of the subsequent
tournament.\footnote{%
In the language of the aforementioned high-level description, a port
$p \in \Ports(v)$
is regarded as a candidate in a tournament as long as
$\varTournament_{p} = 1$.
}

For each node
$v \in V$,
the actions of the ports in $\Ports(v)$ are orchestrated by a leader port
$p^{*}(v) \in \Ports(v)$
selected by running the LLE mechanism in $v$ (see
Section~\ref{sec:prelim}).
Port
$p^{*}(v)$
maintains two designated variables that the other (``ordinary'') ports do not
maintain:
variable
$\varPhaseMode_{p^{*}(v)} \in \{ \Proposing, \Receiving \}$,
referred to as the \emph{phase mode} variable, that holds the mode of $v$'s
current phase;
and
variable
$\varEpoch_{p^{*}(v)} \in \{ 1, 2, 2_{0} \}$,
referred to as the \emph{epoch} variable, that records the epoch of $v$'s
current phase, where the symbol $2_{0}$ indicates the first round of the 2nd
epoch (as we shall see soon, it is important to distinguish this round from
the rest of the rounds in the 2nd epoch).
To avoid cumbersome notation, we subsequently denote these variables by
$\varPhaseMode_{v}$ and $\varEpoch_{v}$ instead of $\varPhaseMode_{p^{*}(v)}$
and $\varEpoch_{p^{*}(v)}$, respectively.

The epoch variable of node $v$ is updated in each round according to the
following rule:
\begin{itemize}

\item
Assign
$\varEpoch_{v} \gets 1$
if
(I)
$\varTournament_{p} = 0$
for all
$p \in \Ports(v)$;
and
(II)
$\varEpoch_{v} = 2$.

\item
Otherwise, assign
$\varEpoch_{v} \gets 2_{0}$
if
(I)
$\varTournament_{p} = 0$
for all
$p \in \Ports(v)$;
and
(II)
$\varEpoch_{v} = 1$.

\item
Otherwise, assign
$\varEpoch_{v} \gets 2$
if
$\varEpoch_{v} = 2_{0}$.

\item
Otherwise, the value of $\varEpoch_{v}$ remains unchanged (formally,
$\varEpoch_{v} \gets \varEpoch_{v}$).

\end{itemize}
The phase mode variable $\varPhaseMode_{v}$ is updated whenever the value of
$\varEpoch_{v}$ changes from
$\varEpoch_{v} = 2$
to
$\varEpoch_{v} = 1$
(i.e.,
$\varEpoch_{v} = 2$
and
$\varEpoch^{+}_{v} = 1$);
this update consists of picking
$\varPhaseMode_{v} \sim \UnifDist(\{ \Proposing, \Receiving \})$,
independently.

The output label associated with each port
$p \in \Ports$
is determined based on the value of the designated variable
$\varStatus_{p} \in \{ \MATCHED, \UNMATCHED, \UNMATCHED^{*}, \bot \}$,
referred to as the \emph{status} of $p$.
Specifically, the output label associated with $p$ is
$\MATCHED$, $\UNMATCHED$, or $\bot$
if
$\varStatus_{p} = \MATCHED$,
$\varStatus_{p} \in \{ \UNMATCHED, \UNMATCHED^{*} \}$,
or
$\varStatus_{p} = \bot$,
respectively.
The update rule of this variable depends on three additional variables and is
therefore deferred until those three variables are presented.

In a $\Proposing$-phase of node $v$, the proposal mechanism is handled with
the help of a variable
$\varProposal_{p} \in \{ \Standby, \Proposing, \Proposing_{0}, \bot \}$,
referred to as the \emph{proposal} variable, that each port
$p \in \Ports(v)$
maintains as long as
$\varStatus_{p} = \bot$.
The semantics of these values is as follows:
$\varProposal_{p} = \Standby$
indicates that $p$ is selected as the proposing port of the current
($\Proposing$-)phase;
and
$\varProposal_{p} \in \{ \Proposing, \Proposing_{0} \}$
indicates that $p$ is actively proposing to its counter-port, where
$\Proposing_{0}$ is reserved for the first round of the 2nd epoch.
Specifically, this variable is updated in each round according to the
following rule:
\begin{itemize}

\item
Assign
$\varProposal_{p} \gets \Standby$
if and only if one of the following two conditions is satisfied:
\begin{itemize}
  \item
  (I)
  $\varPhaseMode_{v} = \Proposing$;
  (II)
  $\varEpoch_{v} = 1$;
  (III)
  $\varTournament_{p} = 1$;
  and
  (IV)
  $\varTournament_{p'} = 0$
  for all
  $p' \in \Siblings(p)$
  such that
  $\varStatus_{p'} = \bot$.

  \item
  (I)
  $\varPhaseMode_{v} = \Proposing$;
  (II)
  $\varEpoch_{v} = 1$;
  (III)
  $\varProposal_{p} = \Standby$
  (IV)
  there exists a port
  $p' \in \Siblings_{p}$
  such that
  $\varTournament_{p'} = 1$. 

\end{itemize}

\item
Otherwise, assign
$\varProposal_{p} \gets \Proposing_{0}$
if
(I)
$\varPhaseMode_{v} = \Proposing$;
(II)
$\varEpoch_{v} = 1$;
(III)
$\varProposal_{p} = \Standby$;
and
(IV)
$\varTournament_{p'} = 0$
for all
$p' \in \Siblings(p) \cup \{ p \}$.

\item
Otherwise, assign
$\varProposal_{p} \gets \Proposing$
if
(I)
$\varPhaseMode_{v} = \Proposing$;
(II)
$\varEpoch_{v} \in \{ 2, 2_{0} \}$;
(III)
$\varProposal_{p} \in \{ \Proposing, \Proposing_{0} \}$;
(IV)
$\varProposal_{p'} = \bot$
for all
$p' \in \Siblings(p)$;
and
(V)
$\varTournament^{+}_{p} = 1$.

\item
Otherwise, assign
$\varProposal_{p} \gets \bot$.

\end{itemize}

In a $\Receiving$-phase of node $v$, the mechanism that determines which
proposal is accepted (if any) is handled with the help of a variable
$\varReceivingFlag_{p} \in \{ \True, \False \}$,
referred to as the receiving flag, that each port
$p \in \Ports(v)$
maintains as long as
$\varStatus_{p} = \bot$.
Port $p$ updates this variable in each round by setting
$\varReceivingFlag_{p} \gets \True$
if and only if one of the following two conditions is satisfied:
\begin{itemize}

\item
(I)
$\varPhaseMode_{v} = \Receiving$;
(II)
$\varEpoch_{v} = 2_{0}$;
and
(III)
$\varProposal_{\CounterPort{p}} = \Proposing_{0}$.

\item
(I)
$\varPhaseMode_{v} = \Receiving$;
(II)
$\varEpoch_{v} = 2$;
(III)
$\varReceivingFlag_{p} = \True$;
and
(IV)
$\varProposal_{\CounterPort{p}} = \Proposing$.

\end{itemize}

The actual proposal acceptance mechanism is handled with the help of a
variable
$\varAcceptFlag_{p} \in \{ \True, \False \}$,
referred to as the \emph{acceptance flag}, that each port
$p \in \Ports(v)$
maintains as long as
$\varStatus_{p} = \bot$.
Port $p$ updates this variable in each round by setting
$\varAcceptFlag_{p} \gets \True$
if and only if
(I)
$\varPhaseMode_{v} = \Receiving$;
(II)
$\varEpoch_{v} = 2$;
(III)
$\varReceivingFlag_{p} = \True$;
(IV)
$\varReceivingFlag_{p'} = \False$
for all
$p' \in \Siblings(p)$
such that
$\varStatus_{p'} = \bot$;
(V)
there exists a port
$p' \in \Siblings(p) \cup \{ p \}$
such that
$\varTournament_{p'} = 1$;
and
(VI)
$\varProposal_{\CounterPort{p}} = \Proposing$.

We are now ready to present the update rule of the status variable
$\varStatus_{p}$ maintained by each port
$p \in \Ports(v)$:
\begin{itemize}

\item
Assign
$\varStatus_{p} \gets \MATCHED$
if
$\MATCHED \notin \{ \varStatus_{p'} : p' \in \Siblings(p) \}$
and
at least one of the following conditions is satisfied:
  \begin{itemize}

  \item
(I)
$\varPhaseMode_{v} = \Receiving$;
(II)
$\varEpoch_{v} = 2$;
and
(III)
$\varAcceptFlag_{p} = \True$.

  \item
(I)
$\varPhaseMode_{v} = \Proposing$;
(II)
$\varEpoch_{v} = 2$;
and
(III)
$\varAcceptFlag_{\CounterPort{p}} = \True$.

  \item
$\varStatus_{p} = \varStatus_{\CounterPort{p}} = \MATCHED$.

  \end{itemize}

\item
Otherwise, assign
$\varStatus_{p} \gets \UNMATCHED^{*}$
if there exists a port
$p' \in \Siblings(p)$
such that
$\varStatus_{p'} = \MATCHED$.

\item
Otherwise, assign
$\varStatus_{p} \gets \UNMATCHED$
if
$\varStatus_{\CounterPort{p}} = \UNMATCHED^{*}$.

\item
Otherwise, assign
$\varStatus_{p} \gets \bot$.

\end{itemize}

%%%%%%%%%%%%%%%%%%%%%%%%%%%%%%%%%%%%%%%
\subsection{Analysis}
\label{sec:mm:analysis}
%%%%%%%%%%%%%%%%%%%%%%%%%%%%%%%%%%%%%%%
Throughout the analysis, we fix an arbitrary initial configuration $C^{0}$ and
consider the (random) execution
$\eta = \{ C^{t} \}_{t \geq 0}$
of the MM algorithm starting from $C^{0}$.
A time
$t \in \Integers_{> 0}$
is said to be \emph{clean} with respect to $\eta$ if the following conditions
hold for every node
$v \in V$
and port
$p \in \Ports(v)$:
\begin{enumerate}

\item
\label{mm:analysis:clean-condition:leaders}
\label{mm:analysis:clean-condition:first}
Node $v$ admits a unique leader port in
$C^{t}$.

\item
If
$\varProposal_{p}^{t} \neq \bot$,
then
(I)
$\varPhaseMode_{v}^{t} = \Proposing$;
and
(II)
$\varProposal_{p'}^{t} = \bot$
for all
$p' \in \Siblings(p)$.

\item
If
$\varProposal_{p}^{t} = \Standby$,
then
$\varEpoch_{v}^{t} = 1$

\item
If
$\varProposal_{p}^{t} = \Proposing^{0}$,
then
$\varEpoch_{v}^{t} = 2_{0}$.

\item
If
$\varProposal_{p}^{t} = \Proposing$,
then
(I)
$\varEpoch_{v}^{t} = 2$;
and
(II)
$\varTournament_{p}^{t} = 1$.

\item
If
$\varReceivingFlag_{p} = \True$,
then
(I)
$\varPhaseMode_{v} = \Receiving$;
and
(II)
$\varEpoch_{v} = 2$.

\item
If
$\varAcceptFlag_{p}^{t} = \True$,
then
(I)
$\varReceivingFlag_{p} = \True$;
and
(II)
$\varReceivingFlag_{p'}^{t} = \False$
for all
$p' \in \Siblings(p)$
such that
$\varStatus_{p'} = \bot$.

\item
If
$\varStatus_{p}^{t} = \MATCHED$,
then
(I)
$\varStatus_{\CounterPort{p}}^{t} = \MATCHED$;
and
(II)
$\varStatus_{p'}^{t} \neq \MATCHED$
for all
$p' \in \Siblings(p)$.

\item
If
$\varStatus_{p}^{t} = \UNMATCHED^{*}$,
then there exists
$p' \in \Siblings(p)$
such that
$\varStatus_{p'}^{t} = \MATCHED$.

\item
If
$\varStatus_{p}^{t} = \UNMATCHED$,
then
$\varStatus_{\CounterPort{p}}^{t} = \UNMATCHED^{*}$.

\end{enumerate}
Lemma~\ref{lem:prelim:up-model:local-port-election} guarantees that there
exists a time
$t^{0} \in \Integers_{> 0}$
such that every node admits a (fixed) leader port at all times
$t \geq t^{0}$
and
$t^{0} \leq O (\log^{2} n)$
w.h.p.;
condition hereafter on this event.
The following observation is obtained by exhaustive case analysis.

\begin{observation}
\label{obs:mm:analysis:clean}
All times
$t > t_{0} + 3$
are clean with respect to $\eta$.
\end{observation}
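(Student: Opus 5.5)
The plan is to verify the ten clean conditions one at a time, assuming (as in the conditioning above) that every node $v$ has a fixed leader port $p^{*}(v)$ from time $t^{0}$ onward. Then $\varPhaseMode_{v}$ and $\varEpoch_{v}$ are well defined from $t^{0}$ on. Each remaining variable is updated by a rule whose preconditions name the very properties the clean conditions assert. So the approach mirrors the proof of Lemma~\ref{lem:mis:analysis:clean-time}: for each conditioned value at time $t$, look at the rule that produced it during round $t-1$ and read off what held at time $t-1$. Then push those facts forward one round using the update rules of $\varEpoch_{v}$, $\varPhaseMode_{v}$ and $\varTournament$. The induced layered dependency, in which each condition uses conditions one layer earlier, gives the $+3$ slack. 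Condition~\ref{mm:analysis:clean-condition:leaders} holds trivially for all $t \ge t^{0}$.

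\textbf{Layer 1 (times $t \geq t^{0}+2$): the proposal variable.} First, $\varProposal_{p} = \Standby$ at time $t$ forces $\varPhaseMode_{v} = \Proposing$ and $\varEpoch_{v}=1$ at time $t-1$, together with either "$p$ is the last tournament candidate" or "$p$ was already $\Standby$ and some sibling is still a candidate". In both cases some port is still a candidate, so $\varEpoch_{v}$ stays $1$ and $\varPhaseMode_{v}$ is unchanged. This gives condition 3 and half of condition 2. Next, $\Proposing_{0}$ is produced exactly when all tournament bits are $0$ with $\varEpoch_{v}=1$, which is the round in which $\varEpoch_{v}$ becomes $2_{0}$; this gives condition 4. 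Likewise, $\Proposing$ requires $\varEpoch_{v}\in\{2,2_{0}\}$ and $\varTournament^{+}_{p}=1$. Since $\varTournament^{+}_{p}=1$ means not all tournament bits were $0$, the epoch goes to $2$ and does not reset; this gives condition 5.

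\textbf{Uniqueness in condition 2 and the receiving flags.} Uniqueness of the proposing port needs care. Two ports become $\Standby$ by the first bullet only if each is the sole candidate among undecided siblings, which is impossible for two distinct ports, so a second $\Standby$ port can only be inherited. I would argue that the inheritance bullet preserves uniqueness. I would also check that the $\Proposing$ bullet requires all siblings to be $\bot$, and that $\Proposing_{0}$ can only come from a unique $\Standby$. Uniqueness therefore propagates once it holds, and it holds after one tournament end. This is the main obstacle: I expect to need one extra round because an arbitrary initial configuration may contain several $\Standby$ ports. I would handle it by observing that, with multiple $\Standby$ ports, none can advance to $\Proposing$ (condition (IV) fails), so they get flushed to $\bot$ at the next epoch change; this may require refining the time bound slightly. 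Conditions 6 and 7 follow directly from the $\varReceivingFlag$ and $\varAcceptFlag$ rules. The key fact is that $\varEpoch_{v}$ is $2_{0}$ or $2$ at time $t-1$ and becomes $2$ at time $t$ (for $\varAcceptFlag$ this uses precondition (V) that some tournament bit is $1$). The sibling-exclusivity part of condition 7 needs condition 6 at time $t-1$, which places this in layer 2.

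\textbf{Layer 3: the status conditions 8--10.} For condition 8(II), $\MATCHED$ is assigned only when no sibling is $\MATCHED$. Two siblings becoming $\MATCHED$ simultaneously is excluded: via the acceptance flags, condition 7 gives uniqueness on the receiving side, and condition 2 gives a unique proposing port on the proposing side. For condition 8(I), I would show that the two new-$\MATCHED$ bullets fire symmetrically on $p$ and $\CounterPort{p}$. This holds because $\varAcceptFlag_{p}$ requires $\varProposal_{\CounterPort{p}}=\Proposing$, hence $\varEpoch_{u}=2$ and mode $\Proposing$ at the counter-node by layer 1. The persistence bullet keeps both ends consistent. Conditions 9 and 10 then follow as in the MIS proof from the persistence of $\MATCHED$, and of $\UNMATCHED^{*}$ given condition 8.
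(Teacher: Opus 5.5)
Your strategy of reading each conditioned value back through the rule that produced it is what the paper's one-line ``exhaustive case analysis'' amounts to. But two of your steps fail, and they fail because the observation is not true as literally stated.

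\textbf{Stale $\Standby$ ports.} The obstacle you flag is not a one-round matter. The inheritance bullet of the $\Standby$ rule keeps a $\Standby$ port alive as long as $\varPhaseMode_v = \Proposing$, $\varEpoch_v = 1$, and some sibling still has $\varTournament = 1$. The $\Proposing_{0}$ rule does not check uniqueness at all. Suppose $C^{0}$ already has a unique leader at a node $v$ of degree $d$, $\varPhaseMode_v = \Proposing$, $\varEpoch_v = 1$, all tournament bits of $v$ equal to $1$, and two undecided ports $p, q \in \Ports(v)$ with $\varProposal = \Standby$. Then $p$ and $q$ both stay $\Standby$ until the tournament in progress ends, roughly $\log d$ rounds later. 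For large $d$, condition~2(II) therefore fails at time $t^{0} + 4$ with probability close to $1$, and no constant offset works for all graphs.

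Your remark that uniqueness ``holds after one tournament end'' is the correct repair. Let $T_v \geq t^{0}$ be the first time all tournament bits of $v$ are $0$. The stale ports become $\Proposing_{0}$ at $T_v + 1$. Each then sees a non-$\bot$ sibling, so precondition~(IV) of the $\Proposing$ rule fails and they become $\bot$ at $T_v + 2$. This gives cleanliness only from $\max_{v} T_v + O(1)$ on. By Lemma~\ref{lem:mm:analysis:geometric-tournament-length-upper-bound} and a union bound this is $t^{0} + O(\log n)$ w.h.p. That is harmless for Theorem~\ref{thm:mm}, but it is not $t^{0} + 3$.

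\textbf{Condition~6.} Your ``key fact'' that the epoch becomes $2$ is false for the persistence bullet of the receiving flag. That bullet checks $\varPhaseMode_v = \Receiving$, $\varEpoch_v = 2$, $\varReceivingFlag_p = \True$, and $\varProposal_{\CounterPort{p}} = \Proposing$. It does not check $v$'s tournament bits; there is no analogue of precondition~(V) of the acceptance rule. If all tournament bits of $v$ are $0$ at time $t-1$ while $\CounterPort{p}$ is still $\Proposing$, then $\varReceivingFlag_p^{t} = \True$ although $\varEpoch_v^{t} = 1$ and $\varPhaseMode_v$ has just been resampled.

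This happens with constant probability in an ordinary synchronized proposal. For example, take $\Degree(v) = 1$, and let $\varTournament_p$ drop to $0$ in the first round of the 2nd epoch while $\varTournament_{\CounterPort{p}}$ stays $1$. So it can occur at any time while $v$ is undecided, not only in an initial transient. Condition~6 therefore cannot be proved and has to be weakened. Such a flag is inert: every rule that reads a receiving flag also requires $\varEpoch_v = 2$, and the flag is cleared one round later.

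\textbf{Smaller slips.}
\begin{enumerate}
\item Your claim that $\varTournament^{+}_{p} = 1$ means not all tournament bits were $0$ is backwards, since an all-zero tournament resets every bit to $1$. Condition~5 nevertheless holds for $t \geq t^{0} + 2$, but the argument must look two rounds back. If $\varProposal_p^{t-1} = \Proposing$, that value was itself produced with $\varTournament_p^{t-1} = 1$ by precondition~(V). If $\varProposal_p^{t-1} = \Proposing_{0}$, condition~4 forces $\varEpoch_v^{t-1} = 2_{0}$.
\item On the proposing side of condition~8(II), what you need is uniqueness of $\Proposing$ ports. That follows directly from precondition~(IV) of the $\Proposing$ rule, not from condition~2, which, as shown above, can fail.
\end{enumerate}
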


Let
$\tClean = t_{0} + 3$.
Our goal in the remainder of this section is to prove that starting from time
$\tClean$, the algorithm stabilizes within
$O (\log^{5} n)$
rounds w.h.p.
To this end, we introduce the following additional definitions.

\begin{definition*}[%
$\IN^{t}$,
$\OUT^{t}$,
$G^{t} = (V^{t}, E^{t})$,
$n^{t}$,
$m^{t}$,
$\Degree^{t}(v)$]
For
$t \geq \tClean$,
let
$\IN^{t} \subseteq E$
(resp.,
$\OUT^{t} \subseteq E$)
be the set of edges
$e = \{ v_{1}, v_{2} \}$
such that
$\varStatus_{v_{i}, e}^{t} = \MATCHED$
(resp.,
$\varStatus_{v_{i}, e}^{t} \in \{ \UNMATCHED, \UNMATCHED^{*} \}$)
for each
$i \in \{ 1, 2 \}$.
Let
$E^{t} = E - (\IN^{t} \cup \OUT^{t})$
be the set of undecided edges at time $t$ and let
$G^{t} = (V^{t}, E^{t})$
be the subgraph of $G$ induced by $E^{t}$;
let
$n^{t} = |V^{t}|$
and
$m^{t} = |E^{t}|$
be the number of nodes and edges, respectively, in $G^{t}$.
Let
$\Degree^{t}(\cdot) = \Degree_{G^{t}}(\cdot)$
and extend the scope of this operator to all nodes in $V$ by defining
$\Degree^{t}(v) = 0$
for every node
$v \in V - V^{t}$.
\end{definition*}

The definition of clean times ensures that the following four properties hold
for every edge
$e \in E$
and time
$t \geq 0$:
(1)
if
$e \in \IN^{t}$,
then
$e' \notin \IN^{t}$
for every
$e' \in E$
such that
$e \cap e' \neq \emptyset$;
(2)
if
$e \in \OUT^{t}$,
then there exists
$e' \in E$
such that
$e \cap e' \neq \emptyset$
and
$e' \in \IN^{t}$;
(3)
$\IN^{t} \subseteq \IN^{t + 1}$;
and
(4)
$\OUT^{t} \subseteq \OUT^{t + 1}$.
Therefore, if all edges are decided at time
$t \geq \tClean$,
that is,
$E^{t} = \emptyset$,
then the algorithm has stabilized to a legal output assignment by time $t$.
The rest of the analysis is dedicated to proving that 
$E^{\tClean + O (\log^{5} n)} = \emptyset$,
or equivalently
$m^{\tClean + O (\log^{5} n)} = 0$,
w.h.p.
To this end, we prove that there exist universal constants
$\alpha \in \Integers_{> 0}$
and
$c > 0$
such that
\begin{equation}
\label{eq:mm:analysis:main}
\Ex \left( m^{t + \alpha \lceil \log^{3} n \rceil} \mid m^{t} \right)
\leq
\left( 1 - \frac{c}{\log n} \right) m^{t}
\qquad\text{a.s.}
\end{equation}
for every
$t > \tClean$.
Theorem~\ref{thm:mm} follows as a corollary of
Lemma~\ref{lem:prelim:probabilistic-progress}.

Recall the definition of good nodes from Section~\ref{sec:mis:analysis} and
the lemma of \cite{AlonBI1986mis} (Lemma~\ref{lem:mis:analysis:good-nodes})
ensuring that in any graph, at least half of the edges are incident on good
nodes.
Similarly to the proof structure in Section~\ref{sec:mis:analysis}, we
establish \eqref{eq:mm:analysis:main} by combining
Lemma~\ref{lem:mis:analysis:good-nodes} and the following lemma.

\begin{lemma}
\label{lem:mm:analysis:good-degree-decreases}
Fix some time
$t > \tClean$
and the configuration
$C^{t}$
and consider a node
$v \in V^{t}$
of degree
$d = \Degree^{t}(v) > 0$
that is good in the graph $G^{t}$.
There exist universal constants
$\alpha \in \Integers_{> 0}$
and
$c > 0$
such that
\[
\Ex \left( \Degree^{t + \alpha \lceil \log^{3} n \rceil}(v) \right)
\leq
\left( 1 - \frac{c}{\log n} \right) d
\, .
\]
\end{lemma}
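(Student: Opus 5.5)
We will follow the template of Lemma~\ref{lem:mis:analysis:good-degree-decreases}, with the MIS ``lucky'' events replaced by the Israeli--Itai propose/accept mechanism. Let $N^{\leq}(v)$ denote the neighbors $u$ of $v$ in $G^{t}$ with $\Degree^{t}(u) \leq d$; since $v$ is good, $|N^{\leq}(v)| \geq d/3$. It suffices to show that, within the window $[t, t + \alpha \lceil \log^{3} n \rceil)$, with probability $\Omega(1/\log n)$ a constant fraction of the edges of $v$ in $G^{t}$ become decided. Once an edge $\{u, v\}$ is matched, every port of $u$ and $v$ becomes $\UNMATCHED$ or $\MATCHED$ within a constant number of rounds, by the status update rules and the clean conditions.

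\textbf{Step 1: tournament and phase lengths.} First we bound a single tournament at a node of degree $\leq n$. Its length is the maximum of at most $n$ i.i.d.\ $\GeomDist{1/2}$ variables, so it is $O(\log n)$ w.h.p. Hence every phase lasts $O(\log n)$ rounds w.h.p., and the window contains $\Theta(\log^{2} n)$ phases of every node. This is also where we prove Lemma~\ref{lem:prelim:up-model:local-port-election}, by the same maximum-of-geometrics argument iterated $O(\log n)$ times.

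\textbf{Step 2: one successful attempt.} Fix a neighbor $u \in N^{\leq}(v)$. We want a lower bound on the probability that during some phase:
\begin{itemize}
\item $u$ is $\Proposing$ and its 1st-epoch tournament selects the port toward $v$ as the unique last candidate among its undecided ports; this has probability $\Theta(1/\Degree^{t}(u)) \geq \Theta(1/d)$, using that the last survivor is unique with constant probability;
\item $v$ is $\Receiving$ and its 2nd epoch starts in the same round as $u$'s 2nd epoch;
\item $u$'s proposal is the unique last surviving proposal at $v$.
\end{itemize}
Summing over the $\geq d/3$ neighbors with disjoint-event bookkeeping, in the spirit of the $A_{i}, B_{i}$ events, we aim for a constant probability that $v$ gets matched in a given ``synchronized'' phase pair. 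This requires that the competing proposals to $v$ number $O(1)$ in expectation, which holds because each neighbor of degree at most $d$ proposes toward $v$ with probability $O(1/d)$. As in the MIS proof, neighbors that are already likely to be matched elsewhere (weak neighbors) help directly, since they decide the edge to $v$.

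\textbf{Step 3: the main obstacle, synchronization.} The hardest part is showing that $u$ and $v$ begin their 2nd epochs in the same round often enough. The plan is to treat the sequence of phase-start times of each node as a renewal process with i.i.d.\ increments: each phase is the sum of two i.i.d.\ tournament lengths, which is exactly why the algorithm insists on two tournaments per phase. These increments are aperiodic, since geometric-type tails put mass on consecutive integers, and they have spread $\Theta(\log d)$. We would argue, via a coupling or local limit estimate for the difference of two independent renewal walks, that in any stretch of $O(\log^{2} n)$ rounds the probability that the 2nd-epoch start times of $u$ and $v$ coincide at least once is $\Omega(1/\log n)$. The heuristic is that the difference walk has step variance $O(\log^{2} n)$ and must hit $0$; over $\Theta(\log n)$ phases, its hitting probability is at least on the order of $1/\log n$. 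This yields the $1/\log n$ loss in the statement.

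\textbf{Step 4: assembly.} Finally, we need independence of the modes and of the selection randomness from the synchronization event. This holds because modes are drawn fresh at phase boundaries, and the tournament outcome determining which port survives is independent of the tournament's length given the structure. Combining Steps 2 and 3 gives $\Pr(\Degree^{t + \alpha\lceil\log^{3} n\rceil}(v) = 0) \geq c/\log n$. With monotonicity of decided edges, the expectation bound follows. The window length $\alpha \lceil \log^{3} n \rceil$ gives $\Theta(\log n)$ independent sync trials of length $O(\log^{2} n)$ each, leaving slack.
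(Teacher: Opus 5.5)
\textbf{The gap is in Step~3, the crux of the lemma.} Your premise that phase lengths have ``spread $\Theta(\log d)$'' is false. A tournament length is the maximum of $\kappa$ i.i.d.\ $1+\GeomDist{1/2}$ variables, and by Lemma~\ref{l:4.8} it is exponentially tight around $\log\kappa$: its \emph{mean} is $\Theta(\log\kappa)$, but its fluctuations are $O(1)$. Hence over $\Theta(\log n)$ phases, i.e., $O(\log^{2} n)$ rounds, the relative offset between the 2nd-epoch start times of $u$ and $v$ typically moves by only $O(\sqrt{\log n})$.

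Suppose $u$ and $v$ have the same degree $\kappa = n^{\Theta(1)}$ and $C^{t}$ places them half a phase apart. A coincidence within $O(\log^{2} n)$ rounds then requires a centered sum of $O(\log n)$ exponentially tight increments to deviate by $\Theta(\log n)$. That has probability $n^{-\Omega(1)}$, far below $1/\log n$. Three consequences follow:
\begin{itemize}
\item the claimed $\Omega(1/\log n)$ per stretch of $O(\log^{2} n)$ rounds fails for worst-case configurations;
\item your consecutive ``sync trials'' are strongly dependent, not independent;
\item the $\log^{3} n$ window is not slack. It is the mixing time of a node's phase renewal process: $\Theta(\log^{2}\kappa)$ phases of $\Theta(\log\kappa)$ rounds each, needed for the $O(1)$ per-phase fluctuations to accumulate to a full phase length.
\end{itemize}

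The missing ingredient is a local limit theorem for repeated geometric tournaments with constants uniform in $\kappa$. A classical renewal theorem for a fixed increment law does not suffice, since the law depends on the degree. The paper obtains it in Lemma~\ref{lem:lclt} from a Berry--Esseen-type local CLT, using Lemmas~\ref{l:4.8} and~\ref{l:4.9} to control the moments and the lattice term uniformly. It then deduces Lemma~\ref{lem:mm:analysis:repeated-geometric-tournament-mixing}: a phase start at any prescribed time at least $\mu=\Theta(\log^{3}\kappa)$ ahead occurs with probability $\Omega(1/\log\kappa)$.

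\textbf{How the paper assembles the argument.} With that lemma, no difference walk is needed. Let $t^{*}$ be $v$'s first 2nd-epoch start of a $\Receiving$-phase after $t+\mu+\max_{i}\lambda_{i}$. Different nodes' tournament processes are independent, so each low-degree neighbor $u_{i}$ independently starts a phase at $t^{*}-\lambda_{i}$ with probability $\Omega(1/\log n)$. With constant probability it then runs a $\Proposing$ 1st epoch of length exactly $\lambda_{i}$ with a unique winner. That winner is the port toward $v$ with probability $1/d_{i}\geq 1/d$. The expected number of synchronized proposers is therefore $\Omega(1/\log n)$. This is converted into a probability bound by Chernoff or by a product bound, depending on whether $d$ is above or below order $\log n$. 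The events $A$ and $A_{i}$ absorb the cases where $v$ or the $u_{i}$ become decided early.

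\textbf{Step~2 also needs repair.}
\begin{itemize}
\item The competing proposals need not be $O(1)$ in number. $v$ picks among them by its own 2nd-epoch tournament, which succeeds with constant probability for any nonempty proposer set, provided $v$'s tournament outlasts the proposers' (Lemma~\ref{lem:mm:analysis:two-geometric-tournaments}). You omit this outlasting condition.
\item Your justification is backwards: a neighbor of degree at most $d$ proposes toward $v$ with probability at least, not at most, of order $1/d$.
\item A constant match probability per phase of $v$ is neither needed nor attainable in general. The paper needs only $\Omega(1/\log n)$ for the single phase starting at $t^{*}$.
\end{itemize}
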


Before we can prove
Lemma~\ref{lem:mm:analysis:good-degree-decreases}, we have
to introduce some additional definitions.

\begin{definition*}[%
geometric tournament,
length,
$Q$-winner,
repeated geometric tournament]
Fix some integer
$\kappa \in \Integers_{> 0}$
and consider $\kappa$ independent
$1 + \GeomDist{1 / 2}$
random variables
$X_{1}, \dots, X_{\kappa}$
(that is, each $X_{i}$ is defined over the integers
$x \geq 2$
so that
$\Pr (X_{i} = x) = 2^{-(x - 1)}$).
A \emph{geometric tournament} $\T$ consists of sampling
$X_{1}, \dots, X_{\kappa}$,
defining the \emph{length} of $\T$ to be
$L = L(\T) = \max_{i \in [\kappa]} X_{i}$.
Given a non-empty subset
$Q \subseteq \{ X_{1}, \dots, X_{\kappa} \}$,
we say that the random variable $X_{i}$ \emph{wins} the geometric tournament
$\T$ among $Q$, or, alternatively, that $X_{i}$ is the \emph{$Q$-winner} of
$\T$, if
$X_{i} > X_{i'}$
for all
$X_{i'} \in Q - \{ X_{i} \}$.
A \emph{repeated geometric tournament} is a stochastic process
$S = \{ S_{j} \}_{j = 0}^{\infty}$
over
$\Integers_{\geq 0}$
defined by sampling countably many independent geometric tournaments
$\T_{1}, \T_{2}, \dots$
and setting
$S_{j}
=
\sum_{h = 1}^{j} L(\T_{h})$.
\end{definition*}

The connection to the algorithm reveals itself by observing that each (port)
tournament of a node
$v \in V$
in the algorithm is a geometric tournament over the
$\{ 0, 1 \}$-coin
tosses of the $\varTournament$ variables associated with the $\Degree_{G}(v)$
ports in
$\Ports(v)$;
the notions of length and $Q$-winner translate accordingly.
Moreover, the sequence of starting times of $v$'s tournaments is a repeated
geometric tournament over $\Degree_{G}(v)$ random variables, where each phase
includes two consecutive tournaments.
The proof of
Lemma~\ref{lem:mm:analysis:good-degree-decreases} relies
on a careful analysis of the geometric tournaments;
this analysis is carried out in Section~\ref{sec:mm:analysis:tournaments},
culminating in Lemmas
\ref{lem:mm:analysis:geometric-tournament-length-upper-bound},
\ref{lem:mm:analysis:geometric-tournament-exact-length-and-winner},
\ref{lem:mm:analysis:two-geometric-tournaments},
and
\ref{lem:mm:analysis:repeated-geometric-tournament-mixing}.

\begin{lemma}
\label{lem:mm:analysis:geometric-tournament-length-upper-bound}
Consider an integer
$\kappa \in \Integers_{> 0}$,
$\kappa \leq n$,
and a geometric tournament $\T$ over the random variables
$X_{1}, \dots, X_{\kappa}$.
There exists an integer
$\nu = \nu(n) \in \Integers_{> 0}$
of order
$\nu = \Theta (\log n)$
such that the length of $\T$ is at most $\nu$
w.h.p.
\end{lemma}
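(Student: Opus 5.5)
The plan is a union bound over the $\kappa$ random variables combined with the explicit tail of the geometric distribution. Each $X_{i} = 1 + Y_{i}$ with $Y_{i} \sim \GeomDist{1/2}$, so for every integer $x \geq 2$ we have
\[
\Pr (X_{i} > x) \, = \, \Pr (Y_{i} > x - 1) \, = \, 2^{-(x - 1)} \, .
\]
Since $L(\T) = \max_{i \in [\kappa]} X_{i}$, the event $L(\T) > \nu$ is the union of the events $X_{i} > \nu$ over $i \in [\kappa]$. The union bound therefore gives $\Pr (L(\T) > \nu) \leq \kappa \cdot 2^{-(\nu - 1)} \leq n \cdot 2^{-(\nu - 1)}$, using $\kappa \leq n$.

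Fix the target constant $c > 0$ from the w.h.p.\ requirement and set
\[
\nu \, = \, \lceil (c + 1) \log n \rceil + 1 \, .
\]
Then $n \cdot 2^{-(\nu - 1)} \leq n \cdot n^{-(c + 1)} = n^{-c}$, as required. Moreover $\nu = \Theta (\log n)$, where the hidden constant depends only on $c$; this matches the paper's convention that the constant in the w.h.p.\ statement may be chosen arbitrarily large, with $\nu$ scaling accordingly. Independence of the $X_{i}$ is not even needed, since the union bound suffices.

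I do not expect a real obstacle; the only points needing care are the bookkeeping of the ``$1+$'' offset in the tail (i.e., $\Pr (X_{i} > x) = 2^{-(x-1)}$ rather than $2^{-x}$) and stating explicitly that $\nu$ depends on the chosen constant $c$. The application to Lemma~\ref{lem:prelim:up-model:local-port-election} will later combine this bound with a union bound over the $O(\log n)$ tournaments of the LLE mechanism, but that is not part of the present statement.
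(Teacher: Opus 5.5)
Your proof is correct and is essentially the paper's argument: the paper gets the lemma from the right-tail bound $\Pr(L_\kappa > m_\kappa + t) \le 4 \cdot 2^{-t}$ of Lemma~\ref{l:4.8}, which is itself proved by exactly your union bound over the $\kappa$ geometric tails, together with $\kappa \le n$. Your explicit choice $\nu = \lceil (c+1)\log n \rceil + 1$ handles the dependence on the w.h.p.\ constant $c$ correctly; the paper's literal substitution $t = \log \kappa$ would only give failure probability $O(1/\kappa)$, so taking $t$ of order $c \log n$, as you effectively do, is the right reading.
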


\begin{lemma}
\label{lem:mm:analysis:geometric-tournament-exact-length-and-winner}
Consider an integer
$\kappa \in \Integers_{> 0}$
and a geometric tournament $\T$ over the random variables
$X_{1}, \dots, X_{\kappa}$.
There exists a universal constant
$c > 0$
and an integer
$\lambda = \lambda(\kappa) \in \Integers_{> 0}$
of order
$\lambda = \Theta (\log \kappa)$
such that for every non-empty subset
$Q \subseteq \{ X_{1}, \dots, X_{\kappa} \}$,
the following two conditions hold (simultaneously) with probability at least
$c$:
(I)
the length of $\T$ is $\lambda$;
and
(II)
$\T$ has a $Q$-winner.
Moreover, conditioned on the event that $\T$ has a $Q$-winner, each
$X_{i} \in Q$
is the winner with probability
$1 / |Q|$.
\end{lemma}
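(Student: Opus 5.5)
Set $Y_{i} = X_{i} - 1$, so the $Y_{i}$ are i.i.d.\ $\GeomDist{1/2}$ with $\Pr(Y_{i} \geq y) = 2^{-(y-1)}$. Take $\lambda = \lceil \log \kappa \rceil + 1$, so that $2^{\lambda - 1}$ is between $\kappa$ and $2\kappa$. The proof then has three parts: the symmetry claim, a constant lower bound on $\Pr(L = \lambda)$, and a constant lower bound on having a $Q$-winner given $L = \lambda$.

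\textbf{Symmetry.} The ``moreover'' part follows from exchangeability. The event that $X_{i}$ is the unique maximum over $Q$ is the image of the event that $X_{j}$ is the unique maximum over $Q$ under the measure-preserving swap of $X_{i}$ and $X_{j}$. These events are disjoint, and their union is the event that $\T$ has a $Q$-winner. Hence each $X_{i} \in Q$ wins with probability $1/|Q|$, conditioned on that event.

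\textbf{The length is exactly $\lambda$.} Write $\Pr(L = \lambda) = \Pr(L \leq \lambda) - \Pr(L \leq \lambda - 1)$. Using $(1 - 2^{-y})^{\kappa}$ for $\Pr(L \leq y+1)$, this equals
\[
(1 - 2^{-(\lambda-1)})^{\kappa} - (1 - 2^{-(\lambda-2)})^{\kappa}.
\]
With $2^{\lambda-1} \in [\kappa, 2\kappa)$, the first term is at least about $e^{-1}$ (and at least $1/4$ for small $\kappa$). The second term is at most $e^{-\kappa/2^{\lambda-2}} \leq e^{-1}$. That gap is too thin to get a clean constant from these bounds.

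I would widen it by choosing $\lambda = \lceil \log \kappa \rceil + 2$. Then $2^{\lambda-1} \geq 2\kappa$, so the first term is at least $(1 - 1/(2\kappa))^{\kappa} \geq 1/2$. Also $2^{\lambda-2} < 2\kappa$, so the second term is at most $e^{-1/2} \approx 0.61$. This still fails to separate the terms.

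The better route is to compute $\Pr(L = \lambda)$ directly. Let $a = 2^{-(\lambda-1)}$, which is $\Theta(1/\kappa)$. Then
\[
\Pr(L = \lambda) = (1-a)^{\kappa} - (1-2a)^{\kappa} = (1-a)^{\kappa}\left(1 - \left(\tfrac{1-2a}{1-a}\right)^{\kappa}\right).
\]
Since $a\kappa \in [1/2, 1]$ with the first choice of $\lambda$, both factors are bounded below by positive constants: roughly $e^{-1}$ for the first and $1 - e^{-1/2}$ for the second. The cases $\kappa \leq 2$ can be checked by hand.

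\textbf{A $Q$-winner exists given $L = \lambda$.} Condition on $L = \lambda$ and reveal which indices attain the maximum. Let $M$ be the set of maximizers.
\begin{itemize}
\item If $Q \cap M = \emptyset$, then the $Q$-winner question is about $\max_{Q} Y$. Conditioned on all of $Q$ being below $\lambda$, the values in $Q$ are i.i.d.\ truncated geometrics.
\item If $Q \cap M \neq \emptyset$, a $Q$-winner exists exactly when $|Q \cap M| = 1$.
\end{itemize}

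The main tool for both cases is a standard fact: the maximum of $k$ i.i.d.\ $\GeomDist{1/2}$ variables is unique with probability at least some absolute constant, uniformly in $k$. This holds because at each level the survivors roughly halve. It also holds under truncation, since the truncation only removes the top levels.

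The step I expect to be the main obstacle is combining the length event with uniqueness for an arbitrary $Q$, because the two events are correlated. I would handle it by case analysis on $|Q|$ against $2^{\lambda}$:
\begin{itemize}
\item \textbf{Small $|Q|$, below $\epsilon\kappa$.} With constant probability, $Q$ has a unique maximum that is strictly below $\lambda$. Independently, the complement attains length exactly $\lambda$, by the computation above applied to $\kappa - |Q| \geq (1-\epsilon)\kappa$ variables.
\item \textbf{Large $|Q|$.} Aim for $|M| = 1$ and $M \subseteq Q$. Conditioned on $L = \lambda$, the number of maximizers is approximately $1 + \mathrm{Poisson}(\Theta(1))$, so $|M| = 1$ has constant probability. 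By symmetry, the unique maximizer lies in $Q$ with probability $|Q|/\kappa \geq \epsilon$.
\end{itemize}

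Multiplying the constants from the length bound and the relevant case gives the constant $c$ in the statement.
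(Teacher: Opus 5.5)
Your strategy works, but your specific choice $\lambda = \lceil \log \kappa \rceil + 1$ fails for $\kappa \leq 2$. So the remark that these cases ``can be checked by hand'' is wrong as stated.

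For $\kappa = 1$ you get $\lambda = 1$, whereas $X_{1} \geq 2$ always, so $\Pr(L = \lambda) = 0$. For $\kappa = 2$ you get $\lambda = 2$ and $a = 1/2$. Then $L = 2$ forces $X_{1} = X_{2} = 2$, so for $Q = \{ X_{1}, X_{2} \}$ the joint event has probability $0$. Equivalently, the probability $\kappa a (1 - 2a)^{\kappa - 1}$ of a unique maximizer at level $\lambda$, which drives your large-$|Q|$ case, vanishes. Since $\lambda(\kappa)$ may be chosen freely for bounded $\kappa$, the repair is immediate: take $\lambda = \max \{ \lceil \log \kappa \rceil + 1, 3 \}$, in the spirit of the paper's $\lambda = \lceil \log \kappa \rceil \vee 3$.

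A smaller point: uniqueness of the maximum does \emph{not} survive truncation in general, since truncating well below $\log |Q|$ piles all of $Q$ onto the top level. In your small-$|Q|$ case it holds only because $|Q| \leq \epsilon \kappa$, and it is cleaner to argue directly. By the union bound, $\Pr(\max_{Q} X \geq \lambda) \leq 2 a |Q| \leq 2 \epsilon$. Separately, $Q$ has a unique maximum with probability at least an absolute constant $c_{0} > 0$. So for $\epsilon < c_{0}/4$, both hold with probability at least $c_{0}/2$.

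With this fix the argument is a correct proof, close to the paper's. The paper derives the statement from Lemma~\ref{l:4.9}. It first computes $\Pr(X_{j} = \lambda,\ X_{i} < \lambda \ \forall i \neq j)$ for $Q = \{ X_{1}, \dots, X_{\kappa} \}$. It then splits on $|Q| \geq \kappa/2$ versus $|Q| < \kappa/2$, and in \emph{both} cases uses the independence of $Q$ and its complement: the larger side carries the length and the other side is kept at or below it.

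Your small-$|Q|$ case is the paper's second case, with threshold $\epsilon \kappa$ instead of $\kappa/2$. Your large-$|Q|$ case is simpler than the paper's. Summing the base computation over $j \in Q$ (your exchangeability phrasing) gives $|Q| \, a (1 - 2a)^{\kappa - 1} \geq \epsilon \kappa a (1 - 2a)^{\kappa - 1} = \Omega(1)$ directly. There is no need to re-apply the bound to a sub-population at a shifted level.

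What the paper's route buys is generality. Lemma~\ref{l:4.9} holds for every offset $t$ with a per-$X_{j}$ bound $c \, 2^{-t} / |Q|$, and it is reused for Lemma~\ref{lem:mm:analysis:two-geometric-tournaments} and the local limit argument. Conversely, you give an explicit exchangeability argument for the ``moreover'' part, which the paper leaves implicit. That argument also holds conditionally on $L = \lambda$, as the application in Lemma~\ref{lem:mm:analysis:good-degree-decreases} implicitly uses.
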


\begin{lemma}
\label{lem:mm:analysis:two-geometric-tournaments}
Consider two integers
$\kappa, \kappa' \in \Integers_{> 0}$,
$\kappa \leq \kappa'$,
and geometric tournaments $\T$ and $\T'$ over the random variables
$X_{1}, \dots, X_{\kappa}$
and
$X'_{1}, \dots, X'_{\kappa'}$,
respectively.
For every constant
$\alpha \in \Integers_{\geq 0}$,
there exists a constant
$c = c(\alpha) > 0$
such that
for every non-empty subset
$Q \subseteq \{ X_{1}, \dots, X_{\kappa} \}$,
the following two conditions hold (simultaneously) with probability at least
$c$:
(I)
$L(\T') \geq L(\T) + \alpha$;
and
(II)
$\T$ has a $Q$-winner.
\end{lemma}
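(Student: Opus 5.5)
The plan is to reduce the statement to Lemma~\ref{lem:mm:analysis:geometric-tournament-exact-length-and-winner} applied to $\T$, combined with a lower bound on the tail of $L(\T')$ that holds independently of $\T$. Since $\T$ and $\T'$ are sampled independently, the events concerning $\T$ and the events concerning $\T'$ are independent. This lets us multiply probabilities.

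First, let $\lambda = \lambda(\kappa) = \Theta(\log \kappa)$ be the integer promised by Lemma~\ref{lem:mm:analysis:geometric-tournament-exact-length-and-winner}. That lemma gives a universal constant $c_1 > 0$ such that, with probability at least $c_1$, both $L(\T) = \lambda$ and $\T$ has a $Q$-winner. On this event, condition (I) reduces to $L(\T') \geq \lambda + \alpha$. By independence it therefore suffices to show that
\[
\Pr\left(L(\T') \geq \lambda(\kappa) + \alpha\right) \geq c_2(\alpha) > 0
\]
holds uniformly in $\kappa \leq \kappa'$; then $c = c_1 c_2$ works.

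Second, to lower bound this tail we use $\kappa' \geq \kappa$ and monotonicity. The maximum of $\kappa'$ i.i.d.\ variables stochastically dominates the maximum of $\kappa$ of them, so it suffices to treat $\kappa' = \kappa$. For a single variable, $\Pr(X_i \geq x) = 2^{-(x-2)}$ for $x \geq 2$. Hence
\[
\Pr(L \geq x) = 1 - \left(1 - 2^{-(x-2)}\right)^{\kappa} \geq 1 - \exp\left(-\kappa 2^{-(x-2)}\right).
\]
Substituting $x = \lambda + \alpha$ gives a lower bound of order $\min\{1, \kappa 2^{-\lambda - \alpha + 2}\}$. This is bounded below by a positive constant depending only on $\alpha$, provided $2^{\lambda} \leq C\kappa$ for a universal constant $C$.

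The main obstacle is this last step. The statement of Lemma~\ref{lem:mm:analysis:geometric-tournament-exact-length-and-winner} only says $\lambda = \Theta(\log \kappa)$, whereas here we need the sharper fact that $\lambda \leq \log \kappa + O(1)$; otherwise $\kappa 2^{-\lambda}$ could vanish. I would resolve this by going into the construction of $\lambda$ in that lemma, which is naturally $\lambda = \lceil \log \kappa \rceil + 2$ or similar, since the length concentrates around $\log \kappa$.

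If the construction does not give this, I would argue directly. For a single fixed value, $\Pr(L(\T) = \lambda) \leq \Pr(L(\T) \geq \lambda) \leq \kappa 2^{-(\lambda-2)}$. Since Lemma~\ref{lem:mm:analysis:geometric-tournament-exact-length-and-winner} guarantees that this probability is at least $c_1$, we get $2^{\lambda} \leq 4\kappa / c_1$. This yields $c_2 \geq 1 - \exp(-c_1 2^{-\alpha})$, which is positive and depends only on $\alpha$, so the direct argument closes the gap without needing the exact form of $\lambda$.
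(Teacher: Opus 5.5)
Your proof is correct and uses the same decomposition as the paper. By independence, you multiply the probability that $L(\T)$ equals the specific value $\lambda$ with a $Q$-winner (the content of Lemma~\ref{lem:mm:analysis:geometric-tournament-exact-length-and-winner}) by the probability that $L(\T')$ is at least $\lambda + \alpha$. The only difference is in this second factor. The paper uses the explicit $\lambda = \max\{\lceil \log \kappa \rceil, 3\}$ and pins $L(\T')$ to exactly $\max\{\lceil \log \kappa' \rceil, 3\} + \alpha$ via the point-probability bound of Lemma~\ref{l:4.9}. You instead bound the tail of $L(\T')$ directly and extract $2^{\lambda} \leq 4\kappa / c_{1}$ from the black-box guarantee itself, which avoids relying on the specific construction of $\lambda$.
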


\begin{lemma}
\label{lem:mm:analysis:repeated-geometric-tournament-mixing}
Consider an integer
$\kappa \in \Integers_{> 0}$
and a repeated geometric tournament
$S = \{ S_{j} \}_{j = 0}^{\infty}$
over the random variables
$X_{1}, \dots, X_{\kappa}$.
There exist an integer
$\mu = \mu(\kappa) \in \Integers_{> 0}$
of order
$\mu = \Theta (\log^{3} \kappa)$
and a universal constant
$c > 0$
such that for every integers $s$ and 
$j \geq 0$,
and for every integer
$z \geq \mu$,
conditioned on
$S_{j} = s$,
the probability that
$S_{j + h} = s + z$
for some even (resp., odd) integer
$h > 0$
is at least
$\frac{c}{\log \kappa}$.
\end{lemma}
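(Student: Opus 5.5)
The plan is to view the statement as a quantitative local renewal estimate. By the renewal structure, $\{S_{j+h}-S_j\}_{h\ge0}$ is independent of $S_j$ and has the same law as $\{S_h\}_{h\ge0}$, so we may take $j=0$ and $s=0$. The goal becomes: for every $z\ge\mu$,
\[
\Pr\left(\exists\, h>0 \text{ even (resp., odd)} : S_h=z\right)\ge \frac{c}{\log\kappa}.
\]

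For even $h=2k$, let $Y_i=L(\T_{2i-1})+L(\T_{2i})$ and work with the renewal process $R_k=\sum_{i\le k}Y_i$. For odd $h$, first condition on $L(\T_1)=\ell$ with $\ell\le 2\log\kappa+O(1)$. By Lemma~\ref{lem:mm:analysis:geometric-tournament-length-upper-bound}, in its $\kappa$-dependent form $\Pr(L>\log\kappa+x)\le 2^{-x}$, this happens with probability at least $1/2$. It then suffices to apply the even case to the target $z-\ell\ge\mu/2$, adjusting the constant in $\mu$. So everything reduces to showing $\Pr(\exists k: R_k=z)\ge c/\log\kappa$ for all $z\ge\mu'=\Theta(\log^3\kappa)$.

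The first step is to collect three facts about $L=L(\T)$:
\begin{enumerate}
\item $\Ex L=\log\kappa+O(1)$.
\item $L$ has exponentially light tails around $\log\kappa$ on both sides: the upper tail is bounded by $2^{-x}$, and the lower tail by $(1-2^{-(\log\kappa-x)})^{\kappa}\le e^{-2^{x}}$. Hence $\mathrm{Var}(L)=O(1)$.
\item There is a $\lambda=\Theta(\log\kappa)$ with $\Pr(L=\lambda)\ge c$ and $\Pr(L=\lambda+1)\ge c$. This follows from the explicit formula $\Pr(L\le x)=(1-2^{-(x-1)})^\kappa$, in the spirit of Lemma~\ref{lem:mm:analysis:geometric-tournament-exact-length-and-winner}.
\end{enumerate}
Fact 3 gives a Bernoulli decomposition of each $Y_i$: with probability $\beta\ge c$ (an indicator $I_i$), $Y_i=2\lambda+\xi_i$ with $\xi_i\sim\UnifDist(\{0,1\})$. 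Otherwise $Y_i$ follows some residual law. Everything is independent across $i$.

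The second step is a local lower bound for a single $R_k$. Fix $k$ and condition on the indicators $I_1,\dots,I_k$ and on all non-Bernoulli parts. Then $R_k=W_k+B_k$, where $W_k$ is determined by the conditioning and $B_k\sim\mathrm{Bin}(N_k,1/2)$ with $N_k=\sum_i I_i\ge\beta k/2$ w.h.p.\ (Chernoff). The binomial gives $\Pr(B_k=b)\ge c/\sqrt{k}$ for every $b$ within $\sqrt{N_k}$ of $N_k/2$. By Chebyshev, using $\mathrm{Var}(Y)=O(1)$, the quantity $W_k+N_k/2$ lies within $C\sqrt{k}$ of $k\,\Ex Y$ with probability at least $1/2$. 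Combining the two gives
\[
\Pr(R_k=z)\ge \frac{c}{\sqrt k}
\qquad\text{whenever } |z-k\,\Ex Y|\le c'\sqrt k .
\]
To get this for the same window width, I would enlarge the window by splitting the binomial smoothing into blocks, or equivalently absorb the constant $C$ into $c$.

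The third step sums over $k$ without double counting. Since $R_k$ is strictly increasing, the events $\{R_k=z\}$ are disjoint across $k$. Therefore $\Pr(\exists k: R_k=z)=\sum_k\Pr(R_k=z)$. Set $k_0=z/\Ex Y=\Theta(z/\log\kappa)$. The number of integers $k$ with $|z-k\,\Ex Y|\le c'\sqrt{k}$ is about $c'\sqrt{k_0}/\Ex Y$, and this is at least $1$ exactly when $k_0\gtrsim\log^2\kappa$, i.e.\ $z\gtrsim\log^3\kappa$. This is the source of $\mu=\Theta(\log^3\kappa)$. For such $z$ the sum is at least
\[
\frac{\sqrt{k_0}}{\log\kappa}\cdot\frac{c}{\sqrt{k_0}}=\frac{c}{\log\kappa}.
\]
For small $\kappa$ all quantities are constants, and the claim follows from aperiodicity, since $L$ takes two consecutive values with positive probability.

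The main obstacle is the second step: obtaining a uniform, non-asymptotic local lower bound $c/\sqrt{k}$ on $\Pr(R_k=z)$ with constants independent of $\kappa$, despite $Y$ being $\kappa$-dependent and not lattice-centered. I would handle it exactly via the Bernoulli-part decomposition plus Chebyshev above, rather than invoking a general local CLT, making sure the Chebyshev window $C\sqrt{k}$ is matched by the binomial's support window. If needed, I would do this by conditioning on the $W$-part landing in a window of width $\sqrt{k}$ and using the binomial's flatness over $\Theta(\sqrt{k})$ values.
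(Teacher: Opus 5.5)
Your proof is correct. Its skeleton matches the paper's: reduce to $j=s=0$, show that $\Pr(S_h=z)$ is of order $1/\sqrt{h}$ for all $h$ in a window of width $\Theta(\sqrt{h_0}/\log\kappa)$ around $h_0\approx z/\Ex L$, and sum over these events, which are disjoint because $S_h$ is strictly increasing. In both proofs, $\mu=\Theta(\log^{3}\kappa)$ comes from requiring this window to contain an admissible integer.

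The real difference is how you get the local estimate. The paper imports a quantitative local limit theorem (Lemma~\ref{lem:lclt}). This gives a two-sided Gaussian approximation of $\Pr(S_h=s)$ with error $C/h$, uniform in $\kappa$, and the paper then sums directly over the even (resp.\ odd) $h$ in the window. You prove only the lower bound you need, via the Bernoulli-part decomposition (your Fact~3), Chebyshev, and a binomial local bound. Your route is elementary and self-contained, and its uniformity in $\kappa$ is transparent: only $\operatorname{Var}(L)=O(1)$ and two adjacent atoms of mass $\Omega(1)$ enter. That second ingredient is exactly the quantity $\sum_{l}\Pr(L'_\kappa=l)\Pr(L'_\kappa=l+1)$ that the paper must bound away from zero to apply the cited theorem. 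The paper's route is shorter and gives a two-sided estimate, though only the lower bound is used.

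Your parity device (pairing tournaments into $Y_i$ and peeling off $L(\T_1)$ for odd $h$) is valid but unnecessary. Your second step applies verbatim to $S_h$ itself, after which you can sum over even or odd $h$ as the paper does.

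Two details should be stated explicitly.
\begin{itemize}
\item The Chebyshev step concerns $Z_i=I_i(2\lambda+1/2)+(1-I_i)V_i$, with $V_i$ the residual part of $Y_i$. Its variance is at most $\operatorname{Var}(Y)$, since $Y_i-Z_i=I_i(\xi_i-1/2)$ is uncorrelated with $Z_i$.
\item For bounded $\kappa$ and bounded $z$, aperiodicity alone is not quite enough. Use instead that $L$ charges every integer $\geq 2$, so every $z\geq 4$ is hit with positive probability at $h\in\{1,2\}$.
\end{itemize}
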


We are now ready to establish
Lemma~\ref{lem:mm:analysis:good-degree-decreases}.

\begin{proof}[Proof of
Lemma~\ref{lem:mm:analysis:good-degree-decreases}]
Let $v$ be the node from the lemma's statement and let
$\{ u_{1}, \dots, u_{k} \}
=
\{ u \in \Neighbors_{G^{t}}(v) : \Degree^{t}(u) \leq d \}$,
recalling that
$k \geq d / 3$
as $v$ is good in $G^{t}$.
For
$i \in [k]$,
let
$d^{0}_{i} = \Degree_{G}(u_{i})$
and
$d_{i} = \Degree^{t}(u_{i})$.
Let
$\lambda_{i} = \Theta (\log d^{0}_{i})$
be the integer
$\lambda = \lambda(d^{0}_{i})$
promised in
Lemma~\ref{lem:mm:analysis:geometric-tournament-exact-length-and-winner}.
Let
$Q_{i} = \{ (u_{i}, \{ u_{i}, x \}) : x \in V_{t} \}$
and recall that
$|Q_{i}| = d_{i}$.

Let
$\mu = \mu(n) = \Theta (\log^{3} n)$
be the integer promised in
Lemma~\ref{lem:mm:analysis:repeated-geometric-tournament-mixing} and let
$t^{*}$ be the earliest time
$t^{*} \geq t + \mu + \max_{i \in [k]} \lambda_{i}$
such that $v$ starts a 2nd epoch tournament at time $t^{*}$ as part of a
$\Receiving$-phase.
Since each phase of $v$ consists of two tournaments and is a
$\Receiving$-phase with probability
$1 / 2$,
it follows by Lemmas
\ref{lem:mm:analysis:geometric-tournament-length-upper-bound} and
\ref{lem:mm:analysis:repeated-geometric-tournament-mixing} that
$t^{*} -t \leq O (\log^{3} n)$
w.h.p.;
condition hereafter on this event.

Let
$\hat{t} = t^{*} + \alpha \lceil \log n \rceil$
for a sufficiently large constant
$\alpha \in \Integers_{> 0}$
to be determined along the proof and let
$Y = d - \Degree^{\hat{t}}(v)$
be a random variable that captures the number of edges incident on $v$ that
become decided between time $t$ and time $\hat{t}$;
our goal is to prove that
$\Ex (Y) \geq \frac{c d}{\log n}$
for some (no matter how small) constant
$c > 0$.

Let $A$ be the event that
$v \in V^{t^{*}}$.
By definition,
$\neg A$
implies that
$\Degree^{\hat{t}} \leq \Degree^{t^{*}}(v) = 0$.
Therefore, if
$\Pr (\neg A) \geq \frac{c}{\log n}$
for any constant
$c > 0$,
then
\[
\Ex (Y)
\geq
\Ex (Y \mid \neg A) \cdot \Pr (\neg A)
\geq
d \cdot \frac{c}{\log n}
\, ,
\]
thus completing the proof.
Assume hereafter that
\begin{equation}
\label{eq:mm:analysis:good-degree-decreases:event-A}
\Pr (A)
>
1 - \frac{c}{\log n}
\end{equation}
for an arbitrarily small constant
$c > 0$.

For
$i \in [k]$,
let
$t_{i} = t^{*} - \lambda_{i}$
and let $A_{i}$ be the event that
$u_{i} \in V^{t_{i}}$.
Since
$Y \geq \sum_{i \in [k]} \Indicator_{\neg A_{i}}$,
it follows that
$\Ex (Y) \geq \Ex \left( \sum_{i \in [k]} \Indicator_{\neg A_{i}} \right)$,
thus if there exists some
$z > 0$
such that
$\Pr \left(
\sum_{i \in [k]} \Indicator_{\neg A_{i}} \geq \frac{z d}{\log n}
\right)
\geq
\frac{c}{z}$
for any constant
$c > 0$,
then
$\Ex (Y) \geq \frac{c d}{\log n}$
by Markov's inequality.
Assume hereafter that
\begin{equation}
\label{eq:mm:analysis:good-degree-decreases:events-neg-A-i}
\Pr \left(
\sum_{i \in [k]} \Indicator_{\neg A_{i}} \geq \frac{z d}{\log n}
\right)
<
\frac{c}{z}
\end{equation}
for any
$z > 0$
and for any arbitrarily small constant
$c > 0$.

We further define the following events for
$i \in [k]$:
Let $B_{i}$ be the event that node $u_{i}$ starts a phase at time $t_{i}$.
Let $C_{i}$ be the event that
(I)
a 1st epoch tournament $\T$ of $u_{i}$ ends at time $t^{*}$;
(II)
$\T$ belongs to a $\Proposing$-phase (of $u_{i}$);
and
(III)
$\T$ has a $Q_{i}$-winner.
Let $D_{i}$ be the event that port
$(u_{i}, \{ u_{i}, v \})$
is the $Q_{i}$-winner of a tournament of $u_{i}$ that ends at time $t^{*}$.

By Lemma~\ref{lem:mm:analysis:repeated-geometric-tournament-mixing},
we know that
$\Pr (B_{i})
\geq
\Omega \left( \frac{1}{\log d^{0}_{i}} \right)
\geq
\Omega \left( \frac{1}{\log n} \right)$,
whereas Lemma~\ref{lem:mm:analysis:geometric-tournament-exact-length-and-winner}
ensures that
$\Pr (C_{i} \mid B_{i})
\geq
\Omega (1)$.
Therefore,
\begin{equation}
\label{eq:mm:analysis:good-degree-decreases:event-B-i-and-C-i}
\Pr (B_{i} \land C_{i})
=
\Pr (C_{i} \mid B_{i}) \cdot \Pr (B_{i})
\geq
\frac{\phi}{\log n}
\end{equation}
for some universal constant
$\phi > 0$.
Moreover,
Lemma~\ref{lem:mm:analysis:geometric-tournament-exact-length-and-winner} also
ensures that
\begin{equation}
\label{eq:mm:analysis:good-degree-decreases:event-D-i-conditioned}
\Pr (D_{i} \mid B_{i} \land C_{i})
=
\frac{1}{|Q_{i}|}
=
\frac{1}{d_{i}}
\geq
\frac{1}{d}
\, ,
\end{equation}
where conditioned on
$B_{i} \land C_{i}$,
event $D_{i}$ is fully determined by the coin tosses of the ports incident on
$u_{i}$ during the time interval
$[t_{i}, t^{*})$.

Let $R_{i}$ be the event that
$\varProposal_{(u_{i} \{ u_{i}, v \})}^{t^{*}} = \Proposing_{0}$
and let
$R_{\lor} = \bigvee_{i \in [k]} R_{i}$.
The key observation now is that
\[
A_{i} \land B_{i} \land C_{i} \land D_{i}
\, \Longrightarrow \,
R_{i}
\qquad\text{and}\qquad
A \land R_{i}
\, \Longrightarrow \,
\varReceivingFlag_{(v, \{ u_{i}, v \})}^{t^{*} + 1} = \True
\, .
\]
Taking
$Q
=
\{ (u_{i}, \{ u_{i}, v \}) :
\varReceivingFlag_{(v, \{ u_{i}, v \})}^{t^{*} + 1} = \True \}$,
we can view the process of picking an incoming proposal that runs during the
2nd epoch of $v$ (the one that starts at time $t^{*}$) as two geometric
tournaments:
the first one is defined over the coin tosses of the ports in $\Ports(v)$;
the second one is defined over the coin tosses of the ports in $Q$ (that is, a
subset of the counter-ports of the ports in $\Ports(v)$).
As
$|\Ports(v)| \geq |Q|$,
we can apply Lemma~\ref{lem:mm:analysis:two-geometric-tournaments} to conclude
that there exists a universal constant
$c > 0$,
such that if
$A \land R_{\lor}$
occurs, then $v$ accepts a proposal during the phase with probability at
least $c$, thus
$\Degree^{\hat{t}}(v) = 0$
with probability at least $c$.
Therefore, our goal is to prove that
\[
\Pr \left(
A \land R_{\lor}
\right)
\geq
\frac{c}{\log n}
\]
for some (no matter how small) constant
$c > 0$.
As
\[
\Pr (A \land R_{\lor})
=
\Pr (A) - \Pr (A \land \neg R_{\lor})
\geq
\Pr (A) - \Pr (\neg R_{\lor})
\, ,
\]
we can apply \eqref{eq:mm:analysis:good-degree-decreases:event-A} to conclude
that to establish the assertion, it suffices to prove that
\begin{equation}
\label{eq:mm:analysis:good-degree-decreases:goal}
\Pr (R_{\lor})
\geq
\frac{c}{\log n}
\end{equation}
for an arbitrarily small constant
$c > 0$.

At this stage, the proof diverges to two cases according to the value of $d$:

\par\noindent
\textbf{Case 1:
$d \geq \frac{24}{\phi} \log n$.}
Inequality~\eqref{eq:mm:analysis:good-degree-decreases:event-B-i-and-C-i}
ensures that
$\Ex \left(
\sum_{i \in k} \Indicator_{B_{i} \land C_{i}}
\right)
\geq
\frac{\phi k}{\log n}
\geq
\frac{\phi d}{3 \log n}$.
Since the events in
$\{ B_{i} \land C_{i} \}_{i \in [k]}$
are mutually independent, it follows, by Chernoff's bound, that
\[
\Pr \left(
\sum_{i \in [k]} \Indicator_{B_{i} \land C_{i}} \leq \frac{\phi d}{6 \log n}
\right)
\leq
\exp \left( -\frac{\phi d}{24 \log n} \right)
\leq
e^{-1}
\, .
\]
By plugging
$z = \frac{\phi}{12}$
and
$c = \frac{\phi}{12} \left( \frac{1}{2} - e^{-1} \right)$
into \eqref{eq:mm:analysis:good-degree-decreases:events-neg-A-i}, we conclude
that
\[
\Pr \left(
\sum_{i \in [k]} \Indicator_{\neg A_{i}} \geq \frac{\phi d}{12 \log n}
\right)
<
\frac{1}{2} - e^{-1}
\, ,
\]
hence, be the union bound,
\[
\Pr \left(
\sum_{i \in [k]} \Indicator_{A_{i} \land B_{i} \land C_{i}}
>
\frac{\phi d}{12 \log n}
\right)
\geq
\Pr \left(
\sum_{i \in [k]} \Indicator_{B_{i} \land C_{i}} > \frac{\phi d}{6 \log n}
\land
\sum_{i \in [k]} \Indicator_{\neg A_{i}} < \frac{\phi d}{12 \log n}
\right)
>
\frac{1}{2}
\, .
\]
Conditioning on this event and taking
$J = \{ i \in [k] : A_{i} \land B_{i} \land C_{i} \}$,
inequality~\eqref{eq:mm:analysis:good-degree-decreases:event-D-i-conditioned}
implies that the probability of
$\bigwedge_{j \in J} \neg D_{i}$
is bounded from above by
\[
\left( 1 - \frac{1}{d} \right)^{|J|}
<
\left( 1 - \frac{1}{d} \right)^{\frac{\phi d}{12 \log n}}
<
\exp \left( -\frac{\phi}{12 \log n} \right)
<
1 - \frac{\phi}{24 \log n}
\, ,
\]
where the last transition holds as
$e^{-z} < 1 - \frac{z}{2}$
for all
$0 < z < 1$.
We can now establish \eqref{eq:mm:analysis:good-degree-decreases:goal} by
developing
\begin{align*}
\Pr (R_{\lor})
\geq \, &
\Pr \left(
\bigvee_{i \in [k]} A_{i} \land B_{i} \land C_{i} \land D_{i}
\right)
\\
\geq \, &
\Pr \left(
\bigvee_{i \in [k]} A_{i} \land B_{i} \land C_{i} \land D_{i}
\mid
\sum_{i \in [k]} \Indicator_{A_{i} \land B_{i} \land C_{i}}
>
\frac{\phi d}{12 \log n}
\right)
\cdot
\\
&
\Pr \left(
\sum_{i \in [k]} \Indicator_{A_{i} \land B_{i} \land C_{i}}
>
\frac{\phi d}{12 \log n}
\right)
\\
> \, &
\frac{\phi}{24 \log n} \cdot \frac{1}{2}
=
\frac{\phi}{48 \log n}
\, .
\end{align*}

\par\noindent
\textbf{Case 2:
$d < \frac{24}{\phi} \log n$.}
Since the events in
$\{ B_{i} \land C_{i} \}_{i \in [k]}$
are mutually independent, it follows, by
\eqref{eq:mm:analysis:good-degree-decreases:event-B-i-and-C-i}, that
\[
\Pr \left(
\bigwedge_{i \in [k]} \neg (B_{i} \land C_{i})
\right)
\leq
\left( 1 - \frac{\phi}{\log n} \right)^{k}
\leq
\left( 1 - \frac{\phi}{\log n} \right)^{d / 3}
<
\exp \left( -\frac{\phi d}{3 \log n} \right)
<
1 - \frac{\phi d}{48 \log n}
\, ,
\]
where the last transition holds as
$e^{-z} < 1 - \frac{z}{2}$
for all
$0 < z < 1$.
By plugging
$z = \frac{\log n}{d}$
and
$c = \frac{\phi}{96}$
into \eqref{eq:mm:analysis:good-degree-decreases:events-neg-A-i}, we conclude
that
\[
\Pr \left(
\bigvee_{i \in [k]} \neg A_{i}
\right)
=
\Pr \left( \sum_{i \in [k]} \Indicator_{\neg A_{i}} \geq 1 \right)
<
\frac{\phi d}{96 \log n}
\, .
\]
Therefore,
\begin{align*}
\Pr \left(
\bigvee_{i \in [k]} A_{i} \land B_{i} \land C_{i}
\right)
\geq \, &
\Pr \left(
\bigwedge_{i \in [k]} A_{i}
\land
\bigvee_{i \in [k]} B_{i} \land C_{i}
\right)
\\
= \, &
\Pr \left(
\bigwedge_{i \in [k]} A_{i}
\right)
-
\Pr \left(
\bigwedge_{i \in [k]} A_{i} \land \neg \bigvee_{i \in [k]} B_{i} \land C_{i}
\right)
\\
\geq \, &
\Pr \left(
\bigwedge_{i \in [k]} A_{i}
\right)
-
\Pr \left(
\neg \bigvee_{i \in [k]} B_{i} \land C_{i}
\right)
\\
> \, &
\left( 1 - \frac{\phi d}{96 \log n} \right)
-
\left( 1 - \frac{\phi d}{48 \log n} \right)
=
\frac{\phi d}{96 \log n}
\, .
\end{align*}
We can now establish \eqref{eq:mm:analysis:good-degree-decreases:goal} by
developing
\begin{align*}
\Pr (R_{\lor})
\geq \, &
\Pr \left(
\bigvee_{i \in [k]} A_{i} \land B_{i} \land C_{i} \land D_{i}
\right)
\\
\geq \, &
\Pr \left(
\bigvee_{i \in [k]} A_{i} \land B_{i} \land C_{i} \land D_{i}
\mid
\bigvee_{i \in [k]} A_{i} \land B_{i} \land C_{i}
\right)
\cdot
\Pr \left(
\bigvee_{i \in [k]} A_{i} \land B_{i} \land C_{i}
\right)
\\
> \, &
\frac{1}{d} \cdot \frac{\phi d}{96 \log n}
=
\frac{\phi}{96 \log n}
\, ,
\end{align*}
where the penultimate transition follows from
\eqref{eq:mm:analysis:good-degree-decreases:event-D-i-conditioned}.
\end{proof}

%%%%%%%%%%%%%%%%%%%%%%%%%%%%%%%%%%%%%%%
\subsubsection{Stochastic Analysis of the Tournaments}
\label{sec:mm:analysis:tournaments}
%%%%%%%%%%%%%%%%%%%%%%%%%%%%%%%%%%%%%%%
%YE: streamline (Oren)
Henceforth, for integer
$\kappa > 0$
and
$Q \subseteq \{ X_{1}, \dots, X_{\kappa} \}$,
we let $L_\kappa(Q)$ and $J_\kappa(Q)$ be the length and winner of a
tournament $\T$ restricted only to those random variables in $Q$.
That is, $L_\kappa(Q) = \max_{X_i \in Q} X_i$ and $J_\kappa(Q)$ is the random
variable $X_j \in Q$ such that $X_j > X_i$ for all $X_i \in Q - \{ X_j \}$ if
such exists; if not, we set $J_\kappa(Q) = \bot$.
We also abbreviate $L_\kappa \equiv L_\kappa(\{ X_{1}, \dots, X_{\kappa} \})$ and
$J_\kappa \equiv J_\kappa (\{ X_{1}, \dots, X_{\kappa} \})$, and write
$Q^\rmc := \{ X_{1}, \dots, X_{\kappa} \} - Q$.
Finally, we let
\begin{equation*}
	m_\kappa := \log \kappa \,,	
\end{equation*}
for the typical value of $L_\kappa$ up to $O(1)$. 

The following lemma upper bounds the right and left tails of $L_\kappa$ around $m_\kappa$ and, in particular, establishes exponential tightness.
\begin{lemma}
\label{l:4.8}
For all integer $\kappa \geq 1$ and $t \geq 0$,
\begin{equation}
\label{eq:mm:analysis:oren-1}
	\Pr(L_\kappa > m_{\kappa} + t) \leq 4 \cdot 2^{-t}
\end{equation}
and
\begin{equation}
\label{eq:mm:analysis:oren-2}
	\Pr(L_\kappa < m_{\kappa} - t) \leq \rme^{-2^{t}} \,.
\end{equation}
In particular, the sequence of random variables $\{L_\kappa - m_{\kappa}\}$ is exponentially tight.
\end{lemma}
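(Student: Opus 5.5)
The plan is to use independence and the explicit tail of a single $1+\GeomDist{1/2}$ variable. For each $i$ and integer $x \geq 1$,
\[
\Pr(X_i > x) = 2^{-(x-1)} ,
\]
since $X_i > x$ means the first $x-1$ Bernoulli trials all fail. Because $L_\kappa = \max_i X_i$ and the $X_i$ are independent, both tails reduce to this one expression: the right tail by a union bound, the left tail by the product formula for the maximum.

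\textbf{Right tail.} By the union bound,
\[
\Pr(L_\kappa > y) \leq \kappa \cdot \Pr(X_1 > y) .
\]
Take $y = m_\kappa + t = \log\kappa + t$. Since $X_1$ is integer valued, $X_1 > y$ is equivalent to $X_1 > \lfloor y \rfloor$, and $\lfloor y \rfloor \geq y - 1$. Hence
\[
\Pr(X_1 > y) \leq 2^{-(\lfloor y \rfloor - 1)} \leq 2^{-(y-2)} = 4 \cdot 2^{-t}/\kappa .
\]
Multiplying by $\kappa$ gives \eqref{eq:mm:analysis:oren-1}. If $\lfloor y \rfloor < 1$ the bound is trivially at least $1$, so nothing more is needed. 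The constant $4$ absorbs exactly the floor loss plus the shift by one in the definition of $X_i$.

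\textbf{Left tail.} By independence,
\[
\Pr(L_\kappa < y) = \Pr(X_1 < y)^\kappa = \bigl(1 - \Pr(X_1 \geq y)\bigr)^\kappa \leq \exp\bigl(-\kappa \Pr(X_1 \geq y)\bigr) .
\]
Take $y = \log\kappa - t$. If $y \leq 2$, the event $L_\kappa < y$ has probability $0$ when $y \leq 2$ is attained at the support minimum; more precisely, it is empty unless $y > 2$, since every $X_i \geq 2$. So assume $y > 2$. Then
\[
\Pr(X_1 \geq y) = \Pr\bigl(X_1 > \lceil y\rceil - 1\bigr) = 2^{-(\lceil y \rceil - 2)} \geq 2^{-(y-1)} = 2 \cdot 2^{t}/\kappa .
\]
Therefore $\Pr(L_\kappa < y) \leq \rme^{-2 \cdot 2^t} \leq \rme^{-2^t}$, which is \eqref{eq:mm:analysis:oren-2}.

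\textbf{Exponential tightness.} Since $m_\kappa$ is only a real number, the statement concerns the shifted variables $L_\kappa - m_\kappa$. Combining the two bounds,
\[
\Pr\bigl(|L_\kappa - m_\kappa| > t\bigr) \leq 4 \cdot 2^{-t} + \rme^{-2^t} \leq 5 \cdot 2^{-t}
\]
uniformly in $\kappa$. This is exponential tightness.

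The main obstacle is only bookkeeping. One has to handle the floor and ceiling when comparing the integer-valued $X_i$ with the non-integer threshold $\log\kappa \pm t$, and track the off-by-one shift from the $1+$ in $1+\GeomDist{1/2}$. The constants $4$ and $\rme^{-2^t}$ leave enough slack for both.
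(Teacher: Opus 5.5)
Your proof is correct and takes the same approach as the paper: a union bound over the $\kappa$ variables for the right tail, and the product formula for the maximum together with $1 - x \leq \rme^{-x}$ for the left tail. Your explicit floor/ceiling bookkeeping, including the degenerate cases $\lfloor y \rfloor < 1$ and $y \leq 2$, and the combined bound $4 \cdot 2^{-t} + \rme^{-2^{t}} \leq 5 \cdot 2^{-t}$, make precise what the paper's proof leaves implicit.
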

\begin{proof}
By the union bound the probability in \eqref{eq:mm:analysis:oren-1} is upper bounded by
\begin{equation*}
	\kappa \Pr (X_1 > m_\kappa + t) \leq \kappa 2^{-(m_\kappa + t - 2)} \,, 
\end{equation*}
which is bounded by the right hand side in \eqref{eq:mm:analysis:oren-1}.

On the other hand, by independence, the probability in \eqref{eq:mm:analysis:oren-2} is equal to
\begin{equation*}
\big(\Pr (X_1 < m_\kappa - t)\big)^\kappa
\leq \Big( 1 - 2^{-(m_\kappa - t)} \Big)^\kappa
\leq \rme^{-\kappa 2^{-(m_\kappa - t)}} \,,
\end{equation*}
which is upper bounded by the right hand side of \eqref{eq:mm:analysis:oren-2}.
\end{proof}

The next lemma lower bounds the right tail of $L_\kappa$ around $m_\kappa$, together with the value of $J_\kappa(Q)$ for all $Q$.
\begin{lemma}
\label{l:4.9}
There exists $c>0$, such that for all $\kappa \geq 1$ integer, all $t \geq 0$ such that $m_\kappa + t \geq 3$ is an integer, all $\emptyset \subsetneq Q \subseteq \{ X_{1}, \dots, X_{\kappa} \}$ and all $X_j \in Q$,
\begin{equation}
\label{eq:mm:analysis:oren-3}
\Pr \Big( L_\kappa = m_\kappa + t \,,\,\, J_\kappa(Q) = X_j \Big) \geq c\frac{1}{|Q|} 2^{-t} \, .
\end{equation}
In particular, under the same conditions,
\begin{equation*}
	\Pr \Big( L_\kappa = m_\kappa + t \,,\,\, J_\kappa(Q) \neq \bot \Big) \geq c 2^{-t} \,.
\end{equation*}
\end{lemma}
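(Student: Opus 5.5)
Let $\ell = m_\kappa + t \geq 3$ be the target integer value. The plan is to write the event in \eqref{eq:mm:analysis:oren-3} as a disjoint union that decouples $Q$ from $Q^{\rmc}$. We have $L_\kappa = \ell$ and $J_\kappa(Q) = X_j$ if and only if
\[
X_j = \ell \,,\quad X_i < \ell \text{ for all } X_i \in Q - \{X_j\} \,,\quad X_i \leq \ell \text{ for all } X_i \in Q^{\rmc} \,.
\]
It suffices to bound the probability of this sub-event from below. The three requirements involve disjoint sets of independent variables, so the probability factorizes:
\[
\Pr(X_j = \ell) \cdot \Pr(X_1 < \ell)^{|Q| - 1} \cdot \Pr(X_1 \leq \ell)^{|Q^{\rmc}|} \,.
\]

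Each factor is explicit. First, $\Pr(X_1 = \ell) = 2^{-(\ell - 1)} = 2 \cdot 2^{-m_\kappa} 2^{-t} = 2^{1-t}/\kappa$. Second, $\Pr(X_1 < \ell) = 1 - 2^{-(\ell - 2)}$ and $\Pr(X_1 \leq \ell) = 1 - 2^{-(\ell-1)} \geq 1 - 2^{-(\ell-2)}$. Since $\ell \geq 3$, we have $2^{-(\ell-2)} \leq 1/2$, so $1 - x \geq e^{-2x}$ applies with $x = 2^{-(\ell-2)}$. Hence the product of the last two factors is at least
\[
\exp\left(-2 \kappa \, 2^{-(\ell-2)}\right) = \exp\left(-8 \cdot 2^{-t}\right) \geq e^{-8} \,.
\]
This gives probability at least $e^{-8} \cdot 2^{1-t}/\kappa$. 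Note that $t \geq 0$ may be non-integer, but only $\ell$ needs to be an integer, and the computation is unaffected.

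The main obstacle is that this yields $1/\kappa$ rather than the claimed $1/|Q|$. When $|Q| \ll \kappa$, the lower bound must exploit that $Q$ is small. I would fix this by not pinning the winner at exactly $\ell$. Instead, split according to whether the maximum over $Q^{\rmc}$ equals $\ell$ and the winner of $Q$ is anywhere in $[2, \ell]$, or the maximum is achieved by $X_j$. The cleaner route is symmetry plus a bound on the probability that $Q$ has a winner at all:
\begin{itemize}
\item By exchangeability of the variables in $Q$ (and independence from $Q^{\rmc}$), $\Pr(L_\kappa = \ell, J_\kappa(Q) = X_j) = \frac{1}{|Q|}\Pr(L_\kappa = \ell, J_\kappa(Q) \neq \bot)$. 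This reduces the first claim to the second claim (up to the same constant).
\item It then suffices to show $\Pr(L_\kappa = \ell, J_\kappa(Q)\neq\bot) \geq c2^{-t}$.
\end{itemize}

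For the second claim, condition on $L_\kappa(Q^{\rmc})$ and on $L_\kappa(Q)$, and consider two cases.

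\textbf{Case 1: $|Q| \geq \kappa/2$.} The computation above, summed over the $|Q|$ choices of $X_j$ (these events are disjoint), gives $|Q| e^{-8} 2^{1-t}/\kappa \geq e^{-8} 2^{-t}$.

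\textbf{Case 2: $|Q| < \kappa/2$.} Then $|Q^{\rmc}| \geq \kappa/2$, and I would lower bound the event that $L_\kappa(Q^{\rmc}) = \ell$, that $Q$ has a unique maximum, and that this maximum is at most $\ell$. A unique maximum among $|Q|$ i.i.d.\ $1+\GeomDist{1/2}$ variables occurs with probability bounded below by a universal constant. This is a standard fact; I would prove it by summing over the level $a$ of the unique top, giving roughly $\sum_a |Q| 2^{-(a-1)}(1-2^{-(a-2)})^{|Q|-1}$, which is $\Omega(1)$ by picking $a \approx \log|Q|$. This is the step requiring the most care. Combining it with $\Pr(L(Q^{\rmc}) = \ell) \geq c 2^{-t}$ (the same product computation restricted to $Q^{\rmc}$, using $|Q^{\rmc}| \geq \kappa/2$), with $\Pr(L(Q) \leq \ell) \geq e^{-8}$, and using positive correlation between "unique max" and the bound "$\leq \ell$", or handling this directly in the sum over $a \leq \ell$, yields the claim with a universal $c$.
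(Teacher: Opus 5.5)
Your proposal is correct and is essentially the paper's proof. It uses the same split at $|Q|\geq\kappa/2$ versus $|Q|<\kappa/2$: in the first case the $Q$-winner is placed exactly at $\ell$, and in the second $Q^{\rmc}$ attains $\ell$ while $Q$ has a unique maximum at a level of order $\log|Q|$. Your exchangeability reduction simply runs the paper's ``sum over $X_j$'' step in reverse. Your use of $\ell\geq 3$ to get $x\leq 1/2$ cleanly replaces the paper's ``small enough / check by hand'' step.

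Two small corrections are needed. First, the opening ``if and only if'' is only an inclusion, since the overall maximum may lie in $Q^{\rmc}$; this is exactly what your Case~2 exploits. Second, in Case~2 drop the ``positive correlation'' option, because it is false. Conditionally on $L_\kappa(Q)=a$, a unique maximum becomes more likely as $a$ grows, so it is non-positively correlated with $\{L_\kappa(Q)\leq\ell\}$. Use the direct route instead and keep only the single level $a=\max\{\lceil\log|Q|\rceil,3\}$. This level satisfies $a\leq\ell$ because $|Q|<\kappa/2$ and $\ell\geq\max\{m_\kappa,3\}$, and it contributes $|Q|\,2^{-(a-1)}\bigl(1-2^{-(a-2)}\bigr)^{|Q|-1}\geq \rme^{-8}/4$.
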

\begin{proof}
The second statement follows from the first by summation over $X_j \in Q$. Turning to the first, assume initially that $Q = \{ X_{1}, \dots, X_{\kappa} \}$. Then the desired probability is equal to
\begin{equation*}
\Pr \big(X_j = m_\kappa + t\big) \prod_{i \neq j} \Pr \big(X_i < m_\kappa + t \big)
= 2^{-(m_\kappa+t-2)} \Big(1-2^{-(m_\kappa+t-2)}\Big)^{\kappa-1}
\geq \frac{1}{\kappa} \times 2^{-t} \rme^{-2^{-(t-3)}} \,,
\end{equation*}	
whenever $2^{-t+2}/\kappa$ is small enough, so that we can use the inequality
$1-x > \rme^{-2x}$. The right hand side is larger than the right hand side in
\eqref{eq:mm:analysis:oren-3} with a suitable $c > 0$. If $2^{-t+2}/\kappa$ is not small enough, it must be that $t < t_0$ and $\kappa < \kappa_0$ for some fixed $t_0$, $\kappa_0$. In all these cases, it can be easily checked by hand that the desired probability is lower bounded by a positive constant, thanks to the requirement on $m_{\kappa}+t$. Modifying $c$ as needed, this shows \eqref{eq:mm:analysis:oren-3}.

Now for general $Q$, if $|Q| \geq \kappa/2$, use independence of the $X$-s to
lower bound the probability in \eqref{eq:mm:analysis:oren-3} by
\begin{equation*}
\Pr \Big( L_\kappa(Q) = m_{|Q|} + (t + m_\kappa - m_{|Q|}) \,,\,\, J_\kappa(Q) = X_j \Big)
\Pr \Big( L_\kappa(Q^\rmc) = \lceil m_{|Q^\rmc|} \rceil \Big) \,,
\end{equation*}
since $\lceil m_{|Q^\rmc|} \rceil \leq m_\kappa + t$. Since $m_{\kappa} - m_{|Q|} \in [0,1]$ and 
the $X$-s are identically distributed, the first probability is at least $(c/2) |Q|^{-1} 2^{-t}$ and the second at least $c/2$, by what we have shown before. On the other hand, if $|Q| < \kappa/2$, we lower bound the desired probability by
\begin{equation*}
\Pr \Big( L_\kappa(Q^\rmc) = m_{|Q^{\rmc}|} + (t + m_\kappa - m_{|Q^\rmc|}) \Big)
\Pr \Big( L_\kappa(Q) = \lceil m_{|Q|} \rceil ,\, J_\kappa(Q) = X_j \Big) \,.
\end{equation*}
Again, the first probability is at least $(c/2) 2^{-t}$ and the second $(c/2)
|Q|^{-1}$. In both cases the product is lower bounded by the right hand side of \eqref{eq:mm:analysis:oren-3} with $c^2/4 > 0$ in place of $c$.
\end{proof}

We are now ready to prove the first three of the four lemmas.
\begin{proof}[Proof of Lemma~\ref{lem:mm:analysis:geometric-tournament-length-upper-bound}]
Plug in $t=\log \kappa$ in the first statement of Lemma~\ref{l:4.8} and note that $\kappa \leq n$.
\end{proof}

\begin{proof}[Proof of Lemma~\ref{lem:mm:analysis:geometric-tournament-exact-length-and-winner}]
Take $\lambda := \lceil m_\kappa \rceil \vee 3$ and use Lemma~\ref{l:4.9} with $t := \lambda - m_\kappa \in [0, 3]$.
\end{proof}

\begin{proof}[Proof of Lemma~\ref{lem:mm:analysis:two-geometric-tournaments}]
Take $\lambda := \lceil m_\kappa \rceil \vee 3$, $\lambda' := (\lceil m_{\kappa'} \rceil \vee 3) + \alpha$,
so that $t := \lambda - m_\kappa \in [0,3]$ and $t' := \lambda' - m_{\kappa'} \in [0, 3+\alpha]$.
Then the desired probability is lower bounded by 
\begin{equation*}
	\Pr \big(L_\kappa = \lambda\,,\,\, J_\kappa(Q) \neq \bot \big)
	\Pr \big(L_{\kappa'} = \lambda' \big) \,.
\end{equation*}
By Lemma~\ref{l:4.9}, the first term is lower bounded by a universal positive constant and the second by an $\alpha$ dependent positive one.
\end{proof}

Next we turn to the last lemma. To this end, we shall need a local-central-limit type result. 
\begin{lemma}[Local Central Limit Theorem Approximation] \label{lem:lclt}
There exists $C < \infty$ such that for all integers $\kappa \geq 1$, $h \geq 1$ and $s$, if $S$ is a repeated geometric tournament with $\kappa$ random variables, then
\begin{equation*}
\left| \Pr(S_h = s) - \frac{1}{\sigma \sqrt{2\pi h}} \rme^{-\frac{(s - h \mu)^2}{2h\sigma^2}} \right| 
\leq \frac{C}{h}
\end{equation*}
where $\mu$ and $\sigma^2$ are the mean and variance of $L_\kappa$. Moreover, for all $\kappa \geq 1$,
\begin{equation*}
	|\mu - m_\kappa| \leq C \quad ; \qquad \frac{1}{C} \leq \sigma \leq C
\end{equation*}

\end{lemma}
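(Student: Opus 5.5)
The plan is to prove the second part (the moment bounds) first, because the local limit estimate needs them to be uniform in $\kappa$. After that, $S_h$ is handled as a sum of $h$ i.i.d.\ copies of $L_\kappa$ whose law, after centering, lies in a compact, uniformly well-behaved family.

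\textbf{Moments.} Write $L_\kappa = m_\kappa + Z_\kappa$. By Lemma~\ref{l:4.8}, $\Pr(|Z_\kappa| > t) \leq 4\cdot 2^{-t} + \rme^{-2^t}$. Integrating this tail bound shows that $\Ex|Z_\kappa|$ and $\Ex Z_\kappa^2$ are bounded uniformly in $\kappa$, and that in fact $\Ex \rme^{\theta |Z_\kappa|} \leq C$ for some fixed $\theta > 0$. This gives $|\mu - m_\kappa| \leq C$ and $\sigma \leq C$.

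For the lower bound on $\sigma$, I use Lemma~\ref{l:4.9} with $Q$ equal to the whole set and two consecutive admissible integers, $\lceil m_\kappa \rceil \vee 3$ and that value plus one. Each is attained by $L_\kappa$ with probability at least a constant $c'$. Since $L_\kappa$ puts mass at least $c'$ on two integers at distance $1$, $\mathrm{Var}(L_\kappa) \geq c'/2 \cdot (1/2)^2$ or so, which bounds $\sigma$ from below.

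\textbf{Local limit theorem.} Let $Y_i = L(\T_i) - \mu$, so the $Y_i$ are i.i.d., integer-shifted, with variance $\sigma^2 \in [C^{-2}, C^2]$ and uniformly bounded third (indeed exponential) moments. Let $\varphi(u) = \Ex \rme^{iuY_1}$. Fourier inversion on the lattice gives $\Pr(S_h = s) = \frac{1}{2\pi}\int_{-\pi}^{\pi} \varphi(u)^h \rme^{-iu(s-h\mu)}\,du$. Following the standard proof (e.g., Petrov, or Lawler--Limic), split the integral into $|u| \leq \epsilon$ and $\epsilon < |u| \leq \pi$.

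\emph{Near zero.} For $|u| \leq \epsilon$, Taylor expansion with uniform third moments gives $\varphi(u) = 1 - \sigma^2 u^2/2 + O(|u|^3)$. Hence $|\varphi(u)^h - \rme^{-h\sigma^2u^2/2}| \leq C h|u|^3 \rme^{-h\sigma^2u^2/4}$. Integrating yields $O(1/h)$. The difference between the Gaussian integral over $[-\epsilon,\epsilon]$ and over $\mathbb{R}$ is exponentially small. The integral over $\mathbb{R}$ is exactly the Gaussian density in the statement.

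\emph{Away from zero.} The main obstacle is a uniform aperiodicity bound: $\sup_{\epsilon \leq |u| \leq \pi} |\varphi(u)| \leq 1 - \delta$ with $\delta$ independent of $\kappa$. I would get this again from the two adjacent atoms of mass at least $c'$ supplied by Lemma~\ref{l:4.9}. Writing $\varphi$ as a convex combination, the pair of atoms contributes $c'(\rme^{iua} + \rme^{iu(a+1)})$, whose modulus is at most $2c'|\cos(u/2)|$. Therefore $|\varphi(u)| \leq 1 - 2c'(1 - |\cos(u/2)|) \leq 1-\delta(\epsilon)$. The far region then contributes at most $(1-\delta)^h = O(1/h)$.

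One more check: the Taylor step also needs uniformity, so $\epsilon$ must be chosen independent of $\kappa$. This works because the third-moment bound is uniform. Combining the two regions gives the $C/h$ error with a universal constant $C$.
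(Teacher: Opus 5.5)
Your proof is correct, but it takes a different route from the paper's. The paper does not do the Fourier analysis itself. It shifts to the integer-valued variable $L'_\kappa = L_\kappa - \lfloor m_\kappa \rfloor$ and invokes a quantitative local limit theorem for i.i.d.\ integer-valued summands (Theorem~1.2 of \cite{siripraparat2021improvement}). That theorem's error term is controlled by $\|L'_\kappa\|_2$, $\|L'_\kappa\|_3$ and the overlap sum $\sum_l \Pr(L'_\kappa = l)\Pr(L'_\kappa = l+1)$. The paper then uses Lemmas~\ref{l:4.8} and~\ref{l:4.9} to bound these quantities uniformly in $\kappa$, and gets $|\mu - m_\kappa| \le C$ by Jensen.

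You instead prove the local limit theorem directly, by lattice Fourier inversion split into $|u| \le \epsilon$ and $\epsilon < |u| \le \pi$. The substantive inputs are the same in both routes:
\begin{itemize}
\item Your uniform exponential moments from Lemma~\ref{l:4.8} play the role of the cited theorem's moment hypotheses.
\item Your two adjacent atoms of mass at least $c'$ from Lemma~\ref{l:4.9} are exactly what makes the paper's overlap sum at least $c'^2$. In your argument they instead give the uniform aperiodicity bound $|\varphi(u)| \le 1 - 2c'(1-\cos(\epsilon/2))$ on $\epsilon \le |u| \le \pi$.
\end{itemize}

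The trade-off is this. The paper's argument is shorter, but it rests on matching the hypotheses and constants of an external result. Yours is self-contained and makes the uniformity in $\kappa$ fully explicit: every constant, including the choice of $\epsilon$, depends only on $c'$, the lower bound on $\sigma$, and the third-moment bound. Your two-atom argument for $\sigma \ge 1/C$ is also a cleaner justification than the paper's brief appeal to exponential tightness.

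The remaining details are routine. For $h \ge 2$, use $\rme^{-(h-1)\sigma^2u^2/4} \le \rme^{-h\sigma^2u^2/8}$ when integrating the near-zero error. Handle $h = 1$ trivially, since both terms are bounded once $\sigma \ge 1/C$.
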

\begin{proof}
The desired statement does not change if we replace $S, \sigma, \mu$ by $S', \sigma', \mu'$, defined as the former only with $L_\kappa' := L_\kappa - \lfloor m_\kappa \rfloor$ in place of $L_\kappa$. Then, by Theorem 1.2 from~\cite{siripraparat2021improvement}
%YE: one of the reviewers asked to cite the theorem (I guess that they meant to present its statement). 
the desired difference is bounded in absolute value
for all $h,s$, by a constant times
\begin{equation}
\label{eq:mm:analysis:oren-4}
\frac{\exp(-c\, \tau^2 \alpha)}{\tau \alpha} + \frac{1}{h} \frac{\sigma_3^3}{\sigma_2^4} \,,
\end{equation}
where
\begin{equation}
\label{eq:mm:analysis:oren-5}
\alpha := h \cdot\sum_{l = -\infty}^{\infty} \Pr \big(L'_\kappa = l\big)\, \Pr \big(L'_\kappa = l+1\big) \quad ; \qquad  \tau := \sigma_3^{-1} h^{-1/3}
\quad; \qquad \sigma_p := \|L'_\kappa\|_p  \,,
\end{equation}
and $c > 0$.

By the exponential tightness of $L'_\kappa$, as shown by Lemma~\ref{l:4.8} and the second part of Lemma~\ref{l:4.9}, all positive moments of $L'_\kappa$ and its variance are bounded from above and away from zero uniformly in $\kappa$. 
These two lemmas also show that the sum in \eqref{eq:mm:analysis:oren-5} is
bounded away from zero uniformly in $\kappa$. This bounds the expression in \eqref{eq:mm:analysis:oren-4} by $C h^{-1}$ for a properly chosen $C < \infty$, which is uniform in $\kappa$, and also proves the desired bound on $\sigma$. It also gives the bound on $\mu$ via an application of Jensen's Inequality.
\end{proof}

\begin{proof}[Proof of Lemma~\ref{lem:mm:analysis:repeated-geometric-tournament-mixing}]
Assume without loss of generality that $j=s=0$.
We can also assume that $z$ is arbitrarily large.
Indeed, if it is not, then $\kappa$ must also be small and then one can check
by hand that the desired probability is uniformly positive for all such
$\kappa$-s and $z$-s.
Set $h_0 = \lceil z/\mu \rceil$, $h_1 = \lceil h_0 + \sigma \sqrt{h_0}/\mu \rceil$ where $\mu$, $\sigma$ are as in Lemma~\ref{lem:lclt}. By Lemma~\ref{lem:lclt},
\begin{equation*}
	\Pr \big(\exists h \in 2\Integers_{\geq 0} :\: S_h = z \big) 
\geq \sum_{\substack{h=h_0\\h \in 2\Integers}}^{h_1} \Pr \big(S_h = z \big) 
\geq c \,\frac{h_1 - h_0}{\sqrt{h_0}} \geq c' \frac{\sigma}{\mu} \,.
\end{equation*}
Above, we have used that $(z - h \mu)^2/(h\sigma^2)$ is uniformly upper bounded, which holds whenever $z$ and hence $h_0$ are sufficiently large.
The result follows, in view of the estimates on $\mu$ and $\sigma$ in Lemma~\ref{lem:lclt}. The case of odd $h$ is proved exactly in the same way.
\end{proof}

%%%%%%%%%%%%%%%%%%%%%%%%%%%%%%%%%%%%%%%%%%%%%%%%%%%%%%%%%%%%%%%%%%%%%%%%%%%%%%
\section{Sinkless Orientation}
\label{sec:so}
%%%%%%%%%%%%%%%%%%%%%%%%%%%%%%%%%%%%%%%%%%%%%%%%%%%%%%%%%%%%%%%%%%%%%%%%%%%%%%
An \emph{orientation} of the graph
$G = (V, E)$
is a set
$\vec{E} \subseteq V \times V$
of size
$|\vec{E}| = |E|$
such that
$(u, v) \in \vec{E} \Longrightarrow \{ u, v \} \in E$;
an edge
$\{ u, v \} \in E$
is regarded as oriented outward of $u$ and toward $v$ in $\vec{E}$ if
$(u, v) \in \vec{E}$.
The goal in the \emph{sinkless orientation (SO)} problem is to construct an
orientation $\vec{E}$ of $G$ such that for every node
$v \in V$
of degree
$\Degree_{G}(v) \geq 3$,
the out-degree of $v$ in the digraph
$\vec{G} = (V, \vec{E})$
is at least $1$ (i.e., $v$ is not a sink in
$\vec{G}$).
Under the UP model, this is translated to using
$\{ \TO, \FROM \}$
as the output label set so that an edge
$\{ u, v \} \in E$
is oriented toward $v$ in a configuration $C$ if
$\omega(C((u, \{ u, v \}))) = \FROM$
and
$\omega(C((v, \{ u, v \}))) = \TO$.

\begin{theorem}
\label{thm:so}
There exists a self-stabilizing UP algorithm that solves the SO problem and
stabilizes in
$O(\log^{2} n)$
time w.h.p.
\end{theorem}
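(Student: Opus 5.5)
The plan is to design an algorithm in the spirit of the MIS algorithm of Section~\ref{sec:mis}. Each port $p$ holds a local orientation bit $\varLocalOrientation_{p}$ and reads it as an output: $\TO$ if $\varLocalOrientation_{p}=1$, $\FROM$ otherwise. The update rule is the same as before: when $\varReorientationBit_{p} \neq \varReorientationBit_{\CounterPort{p}}$, set $\varLocalOrientation_{p} \gets \varReorientationBit_{p}$. Unoriented edges are fixed exactly as in Lemma~\ref{lem:mis:analysis:clean-time}.

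The node-level behaviour is driven by a leader port $p^{*}(v)$, obtained through the LLE mechanism, and its coin $\varCoin_{p^{*}(v)}$:
\begin{itemize}
\item A node that is a \emph{sink} and has degree at least $3$ sets every reorientation bit to $\varCoin_{p^{*}(v)}$, exactly as an undecided node in the MIS algorithm. It can detect that it is a sink by a flag on each port that is raised when that port's edge points inward; the set of sibling flags is observable.
\item A non-sink node selects a single outgoing edge to \emph{protect}. This is done with a designated variable $\varDesignated$ and an LLE restricted to the outgoing ports. The protected port keeps its reorientation bit at $0$, while all other ports set theirs to $1$. As a result, the protected edge can only be reversed by a neighbour that also holds $0$, which cannot happen. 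Non-protected out-edges are free to be flipped toward the node.
\item To keep degree-$\le 2$ nodes from interfering, they fix their bits to $1$. Since they do not need to be non-sinks, they simply absorb edges.
\end{itemize}

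The analysis starts with a clean time $\tClean \le O(\log^2 n)$, obtained as in Lemma~\ref{lem:mis:analysis:clean-time}. By then all leaders and protected ports have been elected (by Lemma~\ref{lem:prelim:up-model:local-port-election} for the LLEs, plus a union bound) and all edges are oriented. The key monotonicity claim is that once $v$ has a protected out-edge $e=\{v,u\}$, the edge stays oriented away from $v$. The reason is that $\varReorientationBit_{(v,e)}=0$ is permanent, so the edge can only flip if $u$ plays $0$ on it. That happens only if $(u,e)$ is itself protected, but $(u,e)$ is an in-edge of $u$ and hence is never protected. Consequently the set of non-sinks among degree-$\ge 3$ nodes is monotone, and a legal output is reached exactly when no sinks of degree $\ge 3$ remain.

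The progress argument must show that each sink escapes quickly. Consider a sink $v$ of degree $\ge 3$. Each neighbour $u$ plays on edge $\{u,v\}$ either the constant $1$ (non-protected or low degree) or its own fair coin (if $u$ is also a sink). In the second case, $v$ playing $1$ against $u$'s $0$ changes nothing, since the edge is already in-oriented at $v$. Instead, $v$ playing $0$ while $u$ plays $1$ orients the edge away from $v$, and this occurs with probability $1/4$ per round. When the neighbour is constant $1$, $v$ playing $0$ flips the edge outward immediately, with probability $1/2$.

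The main obstacle is the adversarial cascade. When $v$ escapes by flipping an edge toward a non-sink $u$, this never creates a new sink, because $u$'s protected edge is untouched. However, when $v$ flips toward another sink $w$, $w$ stays a sink. So the number of sinks never increases, and each sink escapes with constant probability per round. Such escapes are independent across rounds given the past, because they depend only on $v$'s own coin together with the neighbour's bits.

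Applying Lemma~\ref{lem:prelim:probabilistic-progress} to $X_i$, the number of degree-$\ge 3$ sinks at time $\tClean+i$ (with $r<1$ constant), gives that all sinks vanish within $O(\log n)$ rounds w.h.p. The one subtle point to verify carefully is the brief window after $v$ gains an out-edge but before its protected port is elected, since this election takes $O(\log^2 n)$ rounds. I would handle it by freezing the bits of all of $v$'s outgoing ports at $0$ until election completes. Then no out-edge of $v$ can be reversed during the election, at the cost of an extra $O(\log^2 n)$ delay that is absorbed once per node, which yields the overall $O(\log^2 n)$ bound.
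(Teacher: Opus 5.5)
There is a genuine gap: your algorithm can deadlock, so it is not self-stabilizing.

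In your progress argument, a neighbour $u$ of a sink $v$ plays on $\{u,v\}$ either the constant $1$ or a fair coin ``if $u$ is also a sink''. The second case cannot occur: $\{u,v\}$ is an out-edge of $u$, so $u$ is never a sink. The case you omit is the dangerous one. $\{u,v\}$ may be exactly the edge that $u$ protects, in which case $u$ plays the constant $0$ on it, and $v$'s coin can never reorient it: $0$ against $0$ changes nothing, and $1$ against $0$ keeps it pointing at $v$.

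Concretely, take $K_{4}$ on $v,u_{1},u_{2},u_{3}$. Orient $u_{i}\to v$ for all $i$ and orient the triangle cyclically. Start from a configuration (legitimate, since the initial configuration is arbitrary) in which every $u_{i}$ protects its port toward $v$. Then every triangle edge sees $1$ against $1$ forever, and every edge $\{u_{i},v\}$ sees $0$ at $u_{i}$ forever. So $v$ is a degree-$3$ sink forever, and the output is never legal.

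This cannot be patched locally. Your monotonicity claim (protected edges never flip, so no new sinks are created) is exactly what makes permanent sinks possible. Suppose a sink $v$ has in-edges that are each the unique out-edge of their tail. Then $v$ can only be repaired by turning some neighbour into a sink. In other words, sink-ness must be allowed to travel along paths until it reaches a node with spare out-degree or a low-degree node. Any correct algorithm has to permit this and then control the resulting cascades, and that is where the real work in the paper lies:
\begin{itemize}
\item Each node fixes three designated edges.
\item A designated sink flips one of them, chosen one round in advance, and uniformly among the two designated in-edges whenever the node is susceptible.
\item The resulting token passing process is coupled with independent walks on full binary trees whose internal vertices are injectively labelled by susceptible nodes, so the trees have total size less than $2n$.
\item Each walk therefore reaches a leaf within $O(\log n)$ steps w.h.p., and the number of tokens halves every $O(\log n)$ rounds, giving $O(\log^{2} n)$ overall.
\end{itemize}

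Separately, your claim that the $O(\log^{2} n)$ election delays are ``absorbed once per node'' is unjustified, since such delays can chain. This is secondary to the deadlock.
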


The algorithm promised in Theorem~\ref{thm:so} is presented in
Section~\ref{sec:so:algorithm}, first at a high level and then, from the
perspective of the individual ports.
Section~\ref{sec:so:analysis} is then dedicated to proving the correctness of
our algorithm and bounding its stabilization time, thus establishing
Theorem~\ref{thm:so}.

%%%%%%%%%%%%%%%%%%%%%%%%%%%%%%%%%%%%%%%
\subsection{The SO Algorithm}
\label{sec:so:algorithm}
%%%%%%%%%%%%%%%%%%%%%%%%%%%%%%%%%%%%%%%

%%%%%%%%%%%%%%%%%%%%%%%%%%%%%%%%%%%%%%%
\subsubsection*{High-Level Description}
%%%%%%%%%%%%%%%%%%%%%%%%%%%%%%%%%%%%%%%
As a preliminary step, our SO algorithm orients all edges arbitrarily and
selects three incident edges
$e_{1}, e_{2}, e_{3}$
for each node
$v \in V$
of degree
$\Degree(v) \geq 3$,
referred to in the scope of this section as the
\emph{designated edges} of $v$.
Following that, the goal of the algorithm is to ensure that at least one of
the designated edges of $v$ is oriented outward of $v$.
To this end, whenever $v$ becomes a sink with respect to its designated edges,
node $v$ flips the orientation of edge
$e_{i}$ for some
$i \in \{ 1, 2, 3 \}$.
The choice of edge $e_{i}$, referred to as the \emph{next flip} of $v$, is
made in the previous round (as a preparation for the event that $v$ becomes a
sink) according to the following rule:
if exactly $2$ of $v$'s designated edges
$e_{1}, e_{2}, e_{3}$
are oriented toward $v$, then $v$ picks the next flip uniformly at
random among these two designated edges;
otherwise (zero, one, or three of $v$'s designated edges are oriented toward
$v$), $v$ picks the next flip uniformly at random among its three designated
edges.

%%%%%%%%%%%%%%%%%%%%%%%%%%%%%%%%%%%%%%%
\subsubsection*{Port-Level Description}
%%%%%%%%%%%%%%%%%%%%%%%%%%%%%%%%%%%%%%%
The edge orientation is maintained by means of variables
$\varLocalOrientation_{p} \in \{ 0, 1 \}$,
referred to as the \emph{local orientation} variables, that each port
$p \in \Ports$
maintains.
An edge
$e = \{ v_{1}, v_{2} \} \in E$
is considered to be oriented toward $v_{i}$,
$i \in \{ 1, 2 \}$,
if
$\varLocalOrientation_{(v_{i}, \{ v_{1}, v_{2} \})} = 1$
and
$\varLocalOrientation_{(v_{3 - i}, \{ v_{1}, v_{2} \})} = 0$.
As long as edge $e$ is unoriented (that is,
$\varLocalOrientation_{(v_{1}, \{ v_{1}, v_{2} \})} =
\varLocalOrientation_{(v_{2}, \{ v_{1}, v_{2} \})}$),
port
$(v_{i}, \{ v_{1}, v_{2} \})$,
$i \in \{ 1, 2 \}$,
updates its local orientation variable by picking
$\varLocalOrientation_{(v_{i}, \{ v_{1}, v_{2} \})} \sim \UnifDist(\{ 0, 1 \})$.
The algorithm is designed so that once $e$ becomes oriented, it remains
oriented, although the orientation may flip if certain conditions,
presented in the sequel, are satisfied.

Consider a node
$v \in V$
and an incident edge
$e = \{ u, v \}$
and let
$p = (v, e)$
be the port incident on $v$ and $e$.
Port $p$ maintains a variable
$\varDesignated_{p} \in \{ 1, 2, 3, \bot \}$,
referred to as the \emph{designated edge} of $p$.
The semantics of this variable is as follows:
$\varDesignated_{p} = i$
for some
$i \in \{ 1, 2, 3 \}$
indicates that $e$ is the $i$-th designated edge of $v$;
$\varDesignated_{p} = \bot$
indicates that $e$ is not one of the three designated edges of $v$.
The mechanism that controls the selection of exactly
$\min \{ 3, \Degree(v) \}$
designated edges for $v$ works as follows:
\begin{itemize}

\item
If
$\varDesignated_{p} \in \{ 1, 2 ,3 \}$
and there exists
$p' \in \Siblings(p)$
such that
$\varDesignated_{p'} = \varDesignated_{p}$,
then $p$ resets
$\varDesignated_{p} \gets \bot$
with probability
$1 / 2$;
and
keeps the current value of $\varDesignated_{p}$ (formally
$\varDesignated_{p} \gets \varDesignated_{p}$)
with probability
$1 / 2$.

\item
Otherwise, if
$\varDesignated_{p} = \bot$
and
$\{ 1, 2, 3\} \nsubseteq \{ \varDesignated_{p'} : p' \in \Siblings(p) \}$,
then $p$ assigns
$\varDesignated_{p} \gets i_{\min}$,
where $i_{\min}$ is the smallest
$i \in \{ 1, 2, 3 \}$
such that
$i \notin \{ \varDesignated_{p'} : p' \in \Siblings(p) \}$.

\item
Otherwise, $p$ keeps the current value of $\varDesignated_{p}$ (formally
$\varDesignated_{p} \gets \varDesignated_{p}$).

\end{itemize}

The aforementioned mechanism guarantees that for each node
$v \in V$,
once the designated edge variables of the ports in $\Ports(v)$ have
stabilized, there exists exactly one port
$p \in \Ports(v)$
with
$\varDesignated_{p} = i$
for each
$1 \leq i \leq \min \{ 3, \Degree(v) \}$
--- we subsequently refer to this port $p$ as \emph{the $i$-th designated
port} of $v$ and denote it by
$p_{i}(v)$.
The ports in
$p \in \Ports(v)$
can distinguish between the cases
$\Degree(v) = 1$,
$\Degree(v) = 2$,
and
$\Degree(v) \geq 3$
based on the maximum index $i$ for which $p_{i}(v)$ exists (notice that
$p_{1}(v)$ must exist as
$\Degree(v) \geq 1$).
Assuming that
$\Degree(v) \geq 3$,
let
$D(v) = \{ i \in \{ 1, 2, 3 \} : \varLocalOrientation_{p_{i}(v)} = 1 \}$;
node $v$ is regarded as a \emph{designated sink} if $|D(v)| = 3$.

Consider a node
$v \in V$
of degree
$\Degree(v) \geq 3$.
Port $p_{1}(v)$ maintains a variable
$\varNextFlip_{p_{1}(v)} \in \{ 1, 2, 3 \}$,
referred to as the \emph{next flip} of $v$, that determines the designated
edge whose orientation is to be flipped in case $v$ becomes a designated sink.
Variable
$\varNextFlip_{p_{1}(v)}$
is updated in each round according to the
following simple rule:
if
$|D(v)| = 2$,
then
$\varNextFlip_{p_{1}(v)} \sim \UnifDist(D(v))$;
otherwise
($|D(v)| = 0$,
$|D(v)| = 1$,
or
$|D(v)| = 3$),
$\varNextFlip_{p_{1}(v)} \sim \UnifDist(\{ 1, 2, 3 \})$.

The edge orientation flipping mechanism is controlled by a Boolean
variable
$\varFlipFlag_{p_{i}(v)} \in \{ \True, \False \}$,
referred to as the \emph{flip flag}, that each designated port $p_{i}(v)$,
$i \in \{ 1, 2, 3 \}$,
maintains.
Port $p_{i}(v)$ turns its flip flag on, setting
$\varFlipFlag_{p_{i}(v)} \gets \True$
if and only if
(I)
$D(v) = 3$,
that is, $v$ is a designated sink;
(II)
$\varNextFlip_{p_{1}(v)} = i$;
and
(III)
$\varFlipFlag_{p_{i}(v)} = \False$.

An orientation flip of an oriented edge
$e \in E$
is triggered by, and only by, a turned on flip flag:
Consider a port
$p \in \Ports$
and assume that the edge incident on $p$ is oriented, i.e.,
$\varLocalOrientation_{p} \neq \varLocalOrientation_{\CounterPort{p}}$.
Port $p$ changes the value of its local orientation variable from
$\varLocalOrientation_{p} = 1$
to
$\varLocalOrientation_{p} = 0$
(resp., from
$\varLocalOrientation_{p} = 0$
to
$\varLocalOrientation_{p} = 1$)
if and only if
$\varFlipFlag_{p} = \True$
(resp.,
$\varFlipFlag_{\CounterPort{p}} = \True$).

To complete the algorithm's description, the output label associated with each
port
$p \in \Ports$
is determined directly from its local orientation variable:
the output label associated with $p$ is $\FROM$ or $\TO$ if
$\varLocalOrientation_{p} = 0$
or
$\varLocalOrientation_{p} = 1$,
respectively.

%%%%%%%%%%%%%%%%%%%%%%%%%%%%%%%%%%%%%%%
\subsection{Analysis}
\label{sec:so:analysis}
%%%%%%%%%%%%%%%%%%%%%%%%%%%%%%%%%%%%%%%
Throughout the analysis, we fix an arbitrary initial configuration $C^{0}$ and
consider the (random) execution
$\eta = \{ C^{t} \}_{t \geq 0}$
of the SO algorithm starting from $C^{0}$.
A time
$t \in \Integers_{> 0}$
is said to be \emph{clean} with respect to $\eta$ if the following conditions
hold:
\begin{enumerate}

\item
\label{so:analysis:clean-condition:oriented-edges}
All edges are oriented in $C^{t}$,
i.e.,
$\varLocalOrientation_{(v_{1}, e)}^{t}
\neq
\varLocalOrientation_{(v_{2}, e)}^{t}$
for every
$e = \{ v_{1}, v_{2} \} \in E$.

\item
\label{so:analysis:clean-condition:designated-ports}
For every node
$v \in V$
and index
$1 \leq i \leq \min \{ 3, \Degree(v) \}$,
there exists exactly one port
$p \in \Ports(v)$
such that
$\varDesignated_{p}^{t} = i$.

\item
\label{so:analysis:clean-condition:flip-flag}
For every node
$v \in V$
and port
$p \in \Ports(v)$
such that
$\varDesignated_{p}^{t} \in \{ 1, 2, 3 \}$,
if
$\varFlipFlag_{p}^{t} = \True$,
then
(I)
$|\{
p' \in \Ports(v)
:
\varDesignated_{p'}^{t} \in \{ 1, 2, 3 \}
\land
\varLocalOrientation_{p'}^{t} = 1
\}| = 3$;
and
(II)
$\varFlipFlag_{p'}^{t} = \False$
for every
$p' \in \Siblings(p)$
such that
$\varDesignated_{p'}^{t} \in \{ 1, 2, 3 \}$.

\end{enumerate}
The following lemma allows us to designate a clean suffix of execution $\eta$;
the subsequent analysis is then dedicated to analyzing that suffix.

\begin{lemma}
\label{lem:so:analysis:clean-time}
There exists a  time
$\tClean \in \Integers_{> 0}$
such that all times
$t \geq \tClean$
are clean with respect to $\eta$ a.s.
Moreover,
$\tClean \leq O (\log^{2} n)$
w.h.p.
\end{lemma}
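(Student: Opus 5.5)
We follow the template of Lemma~\ref{lem:mis:analysis:clean-time}: handle the three clean conditions one after the other, each from a time after which the previous ones are already guaranteed.

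\textbf{Condition~\ref{so:analysis:clean-condition:oriented-edges}.} We use two observations.
\begin{itemize}
\item An oriented edge stays oriented. A flip of $e=\{u,v\}$ is triggered only by a raised flip flag on one of its ports, say $\varFlipFlag_{(v,e)}=\True$. Then $\varLocalOrientation_{(v,e)}$ goes from $1$ to $0$ and $\varLocalOrientation_{(u,e)}$ goes from $0$ to $1$ in the same round, so the two values still differ.
\item Such a flip requires the other port not to flip simultaneously. This needs care in early (dirty) rounds, where both $\varFlipFlag_{(v,e)}$ and $\varFlipFlag_{(u,e)}$ might be $\True$. In that case each port moves toward the value dictated by the flag controlling it. I would check the rule carefully: port $p$ with $\varLocalOrientation_p=1$ moves to $0$ iff $\varFlipFlag_p$, and port $\bar p$ with value $0$ moves to $1$ iff $\varFlipFlag_{p}$. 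Since both updates are governed by the same flag, orientedness is preserved regardless.
\end{itemize}
While $e$ is unoriented, both endpoints resample uniformly and independently. So $e$ becomes oriented in each round with probability $1/2$. A union bound over the $m\le n^2$ edges gives a time $t^1=O(\log n)$ w.h.p.\ after which all edges are oriented forever.

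\textbf{Condition~\ref{so:analysis:clean-condition:designated-ports}.} This is an LLE-like process run simultaneously for the three indices. I would argue per node $v$ and per index $i\in\{1,2,3\}$.
\begin{itemize}
\item Once exactly one port holds value $i$, it keeps it forever. The first rule does not fire on it, and no port can newly take $i$, since the second rule only assigns a value absent among the siblings.
\item Duplicated values behave exactly like the LLE coin-dropping process, so they resolve within $O(\log^2 n)$ rounds w.h.p.\ by the argument of Lemma~\ref{lem:prelim:up-model:local-port-election} (equivalently Lemma~\ref{lem:mm:analysis:geometric-tournament-length-upper-bound}).
\end{itemize}
The subtle point is interaction between indices. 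A port holding $\bot$ may grab $i_{\min}$ while the holders of $i_{\min}$ are all simultaneously resetting to $\bot$. Several $\bot$-ports may also grab the same $i_{\min}$ at once, which recreates a duplicate.

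I would therefore argue in order $i=1,2,3$. Once value $1$ is uniquely held, no port ever grabs $1$ again. Before that, every round in which $1$ is missing, all $\bot$-ports grab $1$, which restarts an LLE on value $1$. The concern is that LLE restarts. I would handle it with the tournament view: a restart occurs only when all current holders drop at once, which is the end of a geometric tournament whose unique-winner probability is a constant (Lemma~\ref{lem:mm:analysis:geometric-tournament-exact-length-and-winner}). Hence $O(\log n)$ restarts of $O(\log n)$ length each suffice w.h.p., giving $O(\log^2 n)$.

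While value $1$ is unresolved, ports holding $2$ or $3$ may also churn, but that does not matter. After $1$ stabilizes, repeat the argument for $2$ and then for $3$. This yields $t^2\le t^1+O(\log^2 n)$ w.h.p., followed by a union bound over nodes. This inter-index interference is the step I expect to be the main obstacle.

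\textbf{Condition~\ref{so:analysis:clean-condition:flip-flag}.} After $t^2$ the designated ports are fixed, and $\varNextFlip_{p_1(v)}$ is fresh after one more round.
\begin{itemize}
\item A flag raised at time $t$ was set in round $t-1$ only if $v$ was a designated sink at $t-1$ and $\varNextFlip=i$ at $t-1$. Hence at most one designated port raises its flag in any round.
\item A flag that is $\True$ resets to $\False$ in the next round, by condition (III) of the flag rule. Thus stale flags vanish after two rounds, and from then on at most one flag is up at $v$ at any time.
\item Part~(I) requires $v$ to still be a designated sink at time $t$. The flags raised at $t-1$ have not yet acted, since flips take effect in the round where the flag is $\True$. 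We must also rule out a neighbor's flip in round $t-1$ changing one of $v$'s designated edges. A neighbor flipping edge $\{u,v\}$ toward $v$ keeps $v$ a sink; a flip away from $v$ would require a flag at $v$'s own port. So $|D(v)|=3$ persists.
\end{itemize}
Setting $\tClean=t^2+O(1)$ completes the argument, with $\tClean=O(\log^2 n)$ w.h.p.
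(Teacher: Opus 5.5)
You follow the same route as the paper: edges first, then the designated ports index by index via the LLE argument, then condition~3 deterministically. Your treatment of condition~1 is correct.

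\textbf{Condition~3(I).} This is the step that fails. You show only that at most one designated flag of $v$ is up at any single time, and then assert that no designated edge of $v$ is flipped away from $v$ during round $t-1$. That needs \emph{all} of $v$'s designated flags to be down at time $t-1$, whereas rule~(III) only checks the flag of $p_i(v)$ itself. Here is a concrete failure. Let $v$ be a designated sink at time $s$ with all its flags down and $\varNextFlip^{s}_{p_1(v)} = j$; this is exactly the situation right after a token is shifted onto $v$.
\begin{enumerate}
\item $\varFlipFlag^{s+1}_{p_j(v)} = \True$, and $v$ is still a designated sink at time $s+1$.
\item $\varNextFlip^{s+1}_{p_1(v)}$ is drawn from $\UnifDist(\{1,2,3\})$, so with probability $2/3$ it equals some $i \neq j$.
\item In that case $\varFlipFlag^{s+2}_{p_i(v)} = \True$, while $e_j(v)$ was already flipped away from $v$ in round $s+1$.
\end{enumerate}
Hence $|I^{s+2}(v)| = 2$, condition~3(I) is violated at time $s+2$, and $v$ goes on to flip a second edge. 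So under the rules as written, condition~3 cannot hold deterministically from $t^{2}+O(1)$ onward.

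The paper's proof makes the same one-line assertion, so the defect lies in the flip-flag rule itself. If (III) is strengthened to require $\varFlipFlag_{p'} = \False$ for every designated $p' \in \Ports(v)$, your argument for (I) goes through once you invoke that strengthened condition at time $t-1$.

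\textbf{Condition~2.} Your claim that in every round in which index $1$ is missing all $\bot$-ports grab it presupposes that a $\bot$-port exists. For $\Degree(v) \geq 3$ this is harmless: a missing index with no $\bot$-port forces a duplicate, which yields a $\bot$-port within $O(\log n)$ rounds w.h.p. For $\Degree(v) \leq 2$, however, there are absorbing states that violate condition~2 forever:
\begin{enumerate}
\item a degree-$1$ node whose single port holds $2$;
\item a degree-$2$ node whose ports hold $1$ and $3$.
\end{enumerate}
The second state is even reached with probability $1/2$ from a degree-$2$ node whose two ports both hold $3$. The paper's appeal to LLE glosses over this as well. You must either restrict condition~2 to nodes of degree at least $3$, or modify the designation rule. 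The restriction suffices for the later analysis, since lower-degree nodes can never be designated sinks and hence never hold tokens.
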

\begin{proof}
Consider an edge
$e \in E$.
The update rule of the local orientation variables ensures that once $e$
becomes oriented, it remains oriented at all subsequent times.
Moreover, if $e$ is not oriented at time
$t > 0$,
then $e$ becomes oriented in round $t$ with probability
$1 / 2$,
independently.
Therefore, there exists a time
$t^{1} \in \Integers_{> 0}$
such that condition~(\ref{so:analysis:clean-condition:oriented-edges}) holds
at all times
$t \geq t^{1}$
and
$t^{1} \leq O (\log n)$
w.h.p.;
condition hereafter on this event.

Consider a node
$v \in V$
and an index
$1 \leq i \leq \min \{ 3, \Degree(v) \}$
and suppose that the designated port
$p_{i'}(v)$
has already been elected for all
$1 \leq i' < i$.
The process behind the election of the designated port
$p_{i}(v)$
is equivalent to an invocation of the LLE mechanism (see
Section~\ref{sec:prelim}) and can be analyzed in the same way to
conclude that there exists a time
$t^{2} \in \Integers_{> 0}$
such that condition~(\ref{so:analysis:clean-condition:designated-ports}) holds
at all times
$t \geq t^{2}$
and
$t^{2} \leq O (\log^{2} n)$
w.h.p.;
condition hereafter on this event.

The proof is completed by observing that
condition~(\ref{so:analysis:clean-condition:flip-flag}) holds
(deterministically) at all times
$t > \max \{ t^{1}, t^{2} \}$.
\end{proof}

Let
$\tClean \in \Integers_{> 0}$
be the time promised in Lemma~\ref{lem:so:analysis:clean-time}.
Our goal in the remainder of this section is to prove that starting from time
$\tClean$, the algorithm stabilizes within
$O (\log^{2} n)$
rounds w.h.p.
To this end, we introduce the following additional definitions.

\begin{definition*}[%
$p_{i}(v)$,
$e_{i}(v)$,
two-sided designated edge]
Consider a node
$v \in V$
and an index
$1 \leq i \leq \min \{ 3, \Degree(v) \}$
and recall that port
$p = (v, \{ u, v \}) \in \Ports(v)$
(resp., edge
$\{ u, v \}$)
is regarded as the $i$-th designated port (resp., edge) of $v$ at time $t$ if
$\varDesignated_{p}^{t} = i$.
Let $p_{i}(v)$ (resp., $e_{i}(v)$) be the $i$-th designated port (resp., edge)
of $v$ at time $\tClean$;
observe that $p_{i}(v)$ (resp., $e_{i}(v)$) remains the $i$-th designated port
(resp., edge) of $v$ at all times
$t \geq \tClean$
as the values of the designated edge variables do not change from time
$\tClean$ onward.
We say that an edge
$e = (v_{1}, v_{2}) \in E$
is \emph{two-sided designated} if there exist some
$1 \leq i_{1} \leq \min \{ 3, \Degree(v_{1}) \}$
and
$1 \leq i_{2} \leq \min \{ 3, \Degree(v_{2}) \}$
such that
$e = e_{i_{1}}(v_{1})$
and
$e = e_{i_{2}}(v_{2})$.
\end{definition*}

\begin{definition*}[%
$I^{t}(v)$,
designated sink,
susceptible node]
For a node
$v \in V$
of degree
$\Degree(v) \geq 3$
and a time
$t > \tClean$,
let
$I^{t}(v)
=
\{
e_{i}(v) : i \in \{ 1, 2, 3 \} \land \varLocalOrientation_{p_{i}(v)}^{t} = 1
\}$
be the set of designated edges of $v$ oriented toward $v$ at time $t$.
We say that $v$ is a \emph{designated sink} at time $t$ if
$|I^{t}(v)| = 3$.
We say that $v$ is \emph{susceptible} at time $t$ if
$|I^{t}(v)| = 2$.
\end{definition*}

The following observation states that after time $\tClean$, no node serves as
a designated sink for more than two consecutive rounds.

\begin{observation}
\label{obs:so:analysis:designated-sink-interval}
Consider a node
$v \in V$
of degree
$\Degree(v) \geq 3$
and a time
$t > \tClean$.
If $v$ is a designated sink throughout the time interval
$[t, t')$,
then
$t' \leq t + 2$.
\end{observation}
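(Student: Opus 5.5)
The proof is a deterministic trace of the update rules over two consecutive rounds. Fix a node $v$ with $\Degree(v) \geq 3$ and a time $t > \tClean$, and suppose $v$ is a designated sink at times $t$ and $t+1$. The goal is to show that $v$ is not a designated sink at time $t+2$. Since $t > \tClean$, all edges are oriented, the designated ports $p_{1}(v), p_{2}(v), p_{3}(v)$ are fixed, and condition~(\ref{so:analysis:clean-condition:flip-flag}) of clean times holds. The key point is that $\varNextFlip_{p_{1}(v)}$ always holds some value $i \in \{1,2,3\}$. This value was sampled in the previous round and, whatever it is, it indexes a designated edge oriented toward $v$ while $v$ is a designated sink.

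The first step is to look at round $t$. Since $|D(v)| = 3$ at time $t$, port $p_{i}(v)$ with $i = \varNextFlip^{t}_{p_{1}(v)}$ sets its flip flag to $\True$ in round $t$, provided $\varFlipFlag^{t}_{p_{i}(v)} = \False$. There are therefore two cases according to the flip flags at time $t$.

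\textbf{Case (a): some designated port $p_{j}(v)$ has $\varFlipFlag^{t}_{p_{j}(v)} = \True$.} Then $\varLocalOrientation_{p_{j}(v)}^{t} = 1$ and the edge is oriented, so by the flipping rule $p_{j}(v)$ sets its local orientation to $0$ in round $t$. Simultaneously the counter-port changes from $0$ to $1$, because it observes the raised flag. Hence at time $t+1$ the edge $e_{j}(v)$ is oriented outward of $v$, and $v$ is not a designated sink at time $t+1$. This already gives $t' \leq t+1$.

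\textbf{Case (b): no flip flag of $v$ is raised at time $t$.} Then $p_{i}(v)$ raises its flag in round $t$, so $\varFlipFlag^{t+1}_{p_{i}(v)} = \True$. Applying Case (a) at time $t+1$ shows that $e_{i}(v)$ points away from $v$ at time $t+2$. So $v$ is a designated sink at no more than the two times $t$ and $t+1$, which gives $t' \leq t+2$.

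The main obstacle is to verify that nothing at the other endpoint undoes or contradicts the flip. Suppose the edge $e_{j}(v) = \{u,v\}$ is two-sided designated and $u$ also has a raised flag on it. By condition~(\ref{so:analysis:clean-condition:flip-flag}), a raised flag at $u$ requires $\varLocalOrientation_{(u,e)} = 1$. But the edge is oriented toward $v$, so $\varLocalOrientation_{(u,e)} = 0$, and only $v$'s flag can be up on this edge. Consequently the two ports update consistently: one goes $1 \to 0$ and the other $0 \to 1$, and the edge stays oriented.

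I would also confirm from the flip-flag rule that a raised flag is lowered in the next round, via condition (III). This is not needed for the bound itself, but it keeps condition~(\ref{so:analysis:clean-condition:flip-flag}) consistent with the argument above.
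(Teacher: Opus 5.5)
Your proof is correct and is the direct trace of the update rules that the paper leaves implicit, since it states the observation without proof: a raised flag on a designated port of a designated sink flips that edge away from $v$ immediately, and otherwise the port indexed by $\varNextFlip_{p_1(v)}$ raises its flag at time $t$ and flips at $t+1$. Your paragraph about the counter-port is unnecessary, because whether $v$ is a designated sink depends only on $\varLocalOrientation_{p_j(v)}$, which drops from $1$ to $0$ by $p_j(v)$'s own flag regardless of $\CounterPort{p_j(v)}$, so you only need clean conditions (1) and (2) and not condition (3) --- which is just as well, because under the literal flag rule, condition (3)(I) can fail: a sibling may raise its flag in the very round in which $p_j(v)$ flips, leaving a raised flag at a node that is no longer a designated sink.
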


It is convenient to view the execution after time $\tClean$ through the lens
of a virtual \emph{token passing process} over the graph $G$, defined by
placing a token on node
$v \in V$
at time
$t > \tClean$
if and only if $v$ is a designated sink at time $t$ (recall that by
definition, this requires that
$\Degree(v) \geq 3$).
We shall bound the time it takes for the algorithm to stabilize by bounding
the time it takes for all tokens to be deleted from $G$.

Consider a node
$v \in V$
that holds a token $\tau$ at time
$\hat{t} > \tClean$
and notice that Observation~\ref{obs:so:analysis:designated-sink-interval}
guarantees that $v$ cannot hold $\tau$ for more than two consecutive rounds.
In particular, node $v$ disposes of $\tau$ in round
$\hat{t} \leq t \leq \hat{t} + 1$
by flipping the orientation of some edge
$e = \{ u, v \} \in I^{t}(v)$
(from the perspective of the ports, this means that
$\varLocalOrientation_{(v, e)}^{t}
=
1
=
\varLocalOrientation_{(u, e)}^{t + 1}$
and
$\varLocalOrientation_{(u, e)}^{t}
=
0
=
\varLocalOrientation_{(v, e)}^{t + 1}$,
thus turning into a susceptible node.
From the perspective of $\tau$, edge $e$'s orientation flip in round $t$
results in one of three possible outcomes:
\begin{itemize}

\item
A \emph{token shifting} outcome:
We think of token $\tau$ as being shifted from $v$ to $u$ in round $t$ (so that
$\tau$ is placed on $u$ at time
$t + 1$) if
(I)
$\Degree(u) \geq 3$;
(II)
$e$ is two-sided designated;
and
(III)
$u$ is susceptible at time $t$ (which means that $e$ is the only designated
edge of $u$ that is oriented outward of $u$ at time $t$).

\item
A \emph{token deletion} outcome:
We think of token $\tau$ as being deleted from the graph in round $t$ if at
least one of the following conditions is satisfied:
(I)
$\Degree(u) < 3$;
(II)
$e$ is not two-sided designated;
or
(III)
$u$ is not a designated sink at time
$t + 1$.\footnote{%
Notice that a token deletion outcome does not rule out the possibility that
node $u$ becomes a designated sink at some time
$t' > t + 1$;
this however is reflected in the token passing process by $u$ holding
another token
$\tau' \neq \tau$,
which, in the grand scheme of things, allows us to argue that the total number
of tokens decreases by a factor of (at least) $2$.}

\item
A \emph{token merging} outcome:
We think of token $\tau$ as being merged with other (at
least one) tokens in round $t$, if
(I)
$\Degree(u) \geq 3$;
(II)
$e$ is a two-sided designated;
(III)
$|I^{t}(u)| < 2$
(that is, $u$ is neither susceptible, nor a designated sink, at time $t$);
and
(IV)
$u$ is a designated sink at time
$t + 1$.

\end{itemize}

We establish Theorem~\ref{thm:so} by proving the following lemma.

\begin{lemma}
\label{lem:so:analysis:number-of-tokens-decreases}
Fix some time
$t > \tClean$
and the configuration $C^{t}$ and let $\kappa$ be the number of tokens in the
graph at time $t$.
Then, at time
$t + O (\log n)$,
the graph includes at most
$\kappa / 2$
tokens w.h.p.
\end{lemma}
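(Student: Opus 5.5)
We track each of the $\kappa$ tokens present at time $t$ individually through the token passing process and show that, within $O(\log n)$ rounds, each token is deleted or merged with probability that tends to $1$ fast enough. Merging is the easy case to count: when $\tau$ merges into $u$ at time $t'+1$, $u$ becomes a designated sink only because at least two designated edges of $u$ were flipped toward $u$ in the same round. So every merge event consumes at least two incoming tokens and yields one token. Hence, if we can show that w.h.p.\ every token alive at time $t$ is \emph{involved} in a deletion or merge by time $t+O(\log n)$, the token count drops from $\kappa$ to at most $\kappa/2$. Concretely, each surviving token at the end is a merge product absorbing $\ge 2$ original lineages, or a new token created at a node that was not a sink. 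That last case should not happen: a node becomes a designated sink only via a flip of one of its designated edges, which is triggered by a neighbouring token. I will therefore define ``lineages'' and charge each final token to at least two original tokens.

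The core step is to show that a single token survives $T$ shifting steps only with probability $2^{-\Omega(T)}$. Suppose $\tau$ sits on $v$ at time $t'$ and is shifted along a two-sided designated edge $e$ to a susceptible node $u$. Then, at $u$, exactly two designated edges point inward, namely $e$ and some $f$, and $u$'s next flip was chosen in the previous round (when $|D(u)|=2$) uniformly from those two inward edges. Hence, with probability $1/2$, $u$ flips $e$ back toward $v$ rather than passing the token along $f$. The flipped-back token arrives at $v$. By then $v$ has flipped $e$, so $v$ is susceptible with $e$ as its unique outward designated edge, and the token lands again. Thus the token tends to bounce rather than travel, so this alone is not a decay.

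I need the right potential. The key fact is that each round the token holder picks its flip with fresh randomness, independent of everything else. Along a path $v_0,v_1,v_2,\dots$ the token at $v_i$ (now susceptible on arrival, with $D=\{\text{edge back},\text{other}\}$) moves forward or backward with probability $1/2$ each, like a random walk. Deletion occurs when it steps to a non-susceptible endpoint, i.e., condition (III) fails. I will instead argue that within each constant window of rounds, the token is deleted or merged with some constant probability $q>0$, independently of the past given the configuration. If so, by a geometric bound, the survival of a given token for $c\log n$ rounds has probability $\le n^{-3}$, and a union bound over at most $n$ tokens gives the w.h.p.\ claim.

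The main obstacle is exactly this constant-probability step, since bouncing back and forth on a single edge $e=\{u,v\}$ appears to preserve the token forever. To break it, I will exploit the $\varNextFlip$ rule at the moment $v$ is a designated sink. When the token returns to $v$, the inward edges at $v$ are $e$ plus the two others that were inward before, and $v$'s next flip was sampled while $|D(v)|=2$, i.e., uniformly from its two inward edges excluding $e$ (since $e$ was outward then). So the token is forced off $e$ onto a third edge. I will therefore do a case analysis on the local structure, showing that with probability $\ge 1/4$ per two moves the token reaches a node that is non-susceptible (deletion), is of degree $<3$, or uses a non-two-sided edge. Failing that, it performs a shift into a node where a second token simultaneously arrives (merge). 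I expect the delicate part to be handling synchronization between a node becoming a sink and its $\varNextFlip$ having been sampled from the correct set, using Observation~\ref{obs:so:analysis:designated-sink-interval}.
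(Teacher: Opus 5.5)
Your proposal has a genuine gap at its core step. You plan to show, via a local case analysis, that in every constant window of rounds a token is deleted or merged with some constant probability $q>0$. That claim is false in general.

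Fix $C^{t}$ so that the edge a token-holder $u$ is about to flip leads to a node $v$ that roots a complete binary tree of depth $\Theta(\log n)$. Every internal node of this tree has degree $3$ and is susceptible at time $t$. Its unique outward designated edge points to its parent, it is joined to its children by two-sided designated edges, and there are no other tokens nearby. Under the $\varNextFlip$ rule, which you state correctly in your last paragraph (on arrival, the flip is drawn from the two inward edges other than the arrival edge), the token simply descends this tree. It lands on a susceptible node with probability $1$ at each of its first $\Theta(\log n)$ steps, so no per-window constant death probability exists.

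The missing idea is a global potential instead of a local one. The paper freezes the structure at time $t$. For each token-holder $u$ it builds a full binary tree $T_{u}$ whose internal vertices are susceptible nodes; the children of a vertex labeled $v$ are the far endpoints of the two edges of $I^{t}(v)$. A susceptible node has a unique outward designated edge, so it is an internal vertex of at most one tree, and hence $\sum_{u} |T_{u}| < 2n$. The token is coupled with a uniform root-to-leaf walk in $T_{u}$. Every internal vertex of a full binary tree has a child whose subtree is less than half its own, so the walker's current subtree halves with probability at least $1/2$ per step. The walker therefore reaches a leaf within $O(\log n)$ steps w.h.p., and your planned union bound over tokens then completes the argument.

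There are two further issues. First, your second paragraph claims the token bounces back over $e$ with probability $1/2$; this is wrong. When $u$ is susceptible, $e$ is its unique outward designated edge, so $\varNextFlip$ was sampled from the other two edges and the token never backtracks. For the same reason, the ``return to $v$'' scenario of your last paragraph never arises.

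Second, and more substantively, your counting requires every original token itself to be deleted or merged within $O(\log n)$ rounds, and this need not happen. A token can shift into a node that was not susceptible at time $t$ but became susceptible because a different token was deleted there, and it can then keep moving beyond the time-$t$ structure. The paper does not try to kill such a token. Instead it charges its survival to that distinct deletion. This charging, combined with your (correct) observations that merges consume at least two tokens and no token is ever created, gives the factor-$2$ decrease once all walkers have reached leaves.
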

\begin{proof}
Let
$U \subseteq V$
be the set of nodes that hold a token at time $t$.
For each node
$u \in U$,
we construct a rooted full binary tree $T_{u}$, augmented with functions
$\ell^{V}_{u}$
and
$\ell^{E}_{u}$
that assign labels
$\ell^{V}_{u}(x) \in V$
and
$\ell^{E}_{u}(x) \in E$,
respectively,
to each vertex $x$ of $T_{u}$.
The proof will proceed by coupling between the token passing process in $G$
and a process defined over the rooted trees $T_{u}$,
$u \in U$,
presented in the sequel.

Fix some node
$u \in U$
and let $\tau_{u}$ be the token held by $u$ at time $t$.
Notice that configuration $C^{t}$ determines the designated edge
$e = e_{i}(u) = \{ u, v \}$
of $u$,
$i \in \{ 1, 2, 3 \}$,
whose orientation is flipped (in round $t$ or
$t + 1$)
when $u$ disposes of $\tau_{u}$.
Add a new vertex $x$ as the root of $T_{u}$ and set
$\ell^{V}_{u}(x) \gets v$
and
$\ell^{E}_{u}(x) \gets e$.
The construction of $T_{u}$, $\ell^{V}_{u}$, and $\ell^{E}_{u}$ proceeds by
applying the following inductive rule to each newly added vertex $x$ with
labels
$\ell^{V}_{u}(x) = v$
and
$\ell^{E}_{u}(x) = e$:
\begin{itemize}

\item
Add new vertices $x_{a}$ and $x_{b}$ as children of $x$ in $T_{u}$ and set
$\ell^{V}_{u}(x_{j}) \gets w_{j}$
and
$\ell^{E}_{u}(x_{j}) \gets \{ v, w_{j} \}$
for
$j \in \{ a, b \}$
if
(I)
$\Degree(v) \geq 3$;
(II)
$e$ is two-sided designated;
and
(III)
$v$ is susceptible at time $t$ with
$I^{t}(v) = \{ \{ v, w_{a} \}, \{ v, w_{b} \}  \}$.

\item
Otherwise
($\Degree(v) < 3$,
$e$ is not two-sided designated,
or
$v$ is not susceptible at time $t$), make $x$ a leaf in $T_{u}$.

\end{itemize}

Let $S$ be the set of nodes that are susceptible at time $t$ and recall that
$\ell^{V}_{u}$ maps each internal vertex $x$ in $T_{u}$ to a node
$\ell^{V}_{u}(x) \in S$.
Since for each node
$v \in S$,
exactly one of the designated edges
$e_{i}(v)$,
$i \in \{ 1, 2, 3 \}$,
is oriented outward of $v$ at time $t$, it follows that there exists at most
one
$u \in U$
and at most one internal vertex
$x \in T_{u}$
such that
$\ell^{V}_{u}(x) = v$.
In other words, the restriction of the $\ell^{V}_{\cdot}$ functions to the
internal vertices in
$\bigcup_{u \in U} T_{u}$
is injective.
(This is in contrast to the restriction of the $\ell^{V}_{\cdot}$ functions to the
leaves in
$\bigcup_{u \in U} T_{u}$
that may map many leaves to the same node in $V$.)
We conclude that the total number of internal vertices in
$\bigcup_{u \in U} T_{u}$
is smaller than $n$ (recall that the nodes in $U$ are excluded from $S$),
hence
$\sum_{u \in U} |T_{u}| < 2 n$.

Consider a random experiment, referred to as the \emph{tree walk
process}, defined as follows:
Initially, a walker is placed on the root of $T_{u}$ for each
$u \in U$.
In every (discrete) step, if the walker of tree $T_{u}$ is placed on an
internal vertex
$x \in T_{u}$,
then the walker moves to a vertex
$x' \in T_{u}$
picked uniformly at random and independently among the two children of $x$ in
$T_{u}$.
The tree walk process in $T_{u}$ stops once the walker reaches a leaf.

Consider a node
$u \in U$
and an internal vertex
$x \in T_{u}$
and denote the size of the subtree of $T_{u}$ rooted at $x$ by $s(x)$.
Since $T_{u}$ is a (finite) full binary tree, it follows that
(1)
$s(y) < s(x)$
for each child $y$ of $x$ in $T_{u}$;
and
(2)
there exists a child $y$ of $x$ in $T_{u}$ such that
$s(y) < s(x) / 2$.
We conclude, by standard probabilistic arguments, that the walker in $T_{u}$
reaches a leaf within
$O (\log |T_{u}|) \leq O (\log n)$
steps w.h.p.

The key observation now is that we can couple between the tree walk process
over the trees $T_{u}$,
$u \in U$,
and the token passing process over $G$, running the former with the same
random source as the latter.
Specifically, we keep track of the token $\tau_{u}$ placed on node $u$ at time
$t$ and whenever a node
$v \in V$
disposes of $\tau_{u}$ by flipping the orientation of a designated edge
$e = \{ v, w \}$,
the walker of tree $T_{u}$ moves from a vertex
$x \in T_{u}$
with
$\ell^{V}_{u}(x) = v$
to a vertex
$x' \in T_{u}$
with
$\ell^{V}_{u}(x') = w$
and
$\ell^{E}_{u}(x') = e$.

The walker of tree $T_{u}$ reaching a leaf
$x \in T_{u}$
corresponds to one of the following three outcomes with respect to the token
$\tau_{u}$:
(I)
$\tau_{u}$ experiencing a token deletion outcome;
(II)
$\tau_{u}$ experiencing a token merging outcome;
or
(III)
$\tau_{u}$ experiencing a token shifting outcome into node
$\ell^{V}_{u}(x)$
over edge
$\ell^{E}_{u}(x)$
in some round
$t' > t$, however, this implies that another token $\tau_{\tilde{u}}$,
$\tilde{u} \neq u$,
experienced a token deletion outcome in some round
$t \leq \tilde{t} < t'$
when the orientation of a designated edge
$e_{i}(\ell^{V}_{u}(x)) \neq \ell^{E}_{u}(x)$,
$i \in \{ 1, 2, 3 \}$,
flipped from being oriented outward from
$\ell^{V}_{u}(x)$
to being oriented toward
$\ell^{V}_{u}(x)$
(thus making
$\ell^{V}_{u}(x)$
susceptible at time $t'$).
We conclude that once all walkers have reached the leaves of their
corresponding trees, the number of tokens in the graph have decreased by a
factor of at least $2$.
The assertion follows since each step of the tree walk process lasts (at most)
$2$ rounds in the token passing process.
\end{proof}

%%%%%%%%%%%%%%%%%%%%%%%%%%%%%%%%%%%%%%%%%%%%%%%%%%%%%%%%%%%%%%%%%%%%%%%%%%%%%%
\section{Maximal Node and Edge $k$-Coloring}
\label{sec:maximal-coloring}
%%%%%%%%%%%%%%%%%%%%%%%%%%%%%%%%%%%%%%%%%%%%%%%%%%%%%%%%%%%%%%%%%%%%%%%%%%%%%%
To simplify the following definition, denote
$X = V$
(resp.,
$X = E$).
For an integer
$k \geq 2$,
a \emph{maximal node $k$-coloring} (resp., \emph{maximal edge $k$-coloring})
\cite{Luby1986mis}
of the graph
$G = (V, E)$,
is a function
$f : X \rightarrow [k]$
that assigns a \emph{color}
$f(x) \in [k]$
to each node (resp., edge)
$x \in X$
and satisfies the following two conditions:
(I)
for each color
$1 \leq i < k$,
if nodes (resp., edges)
$x, y \in f^{-1}(i)$,
then $x$ and $y$ are not adjacent in $G$;
and
(II)
for each node (resp., edge)
$x \in f^{-1}(k)$
and for each color
$1 \leq i < k$,
there exists a node (resp., edge) $y$ adjacent to $x$ such that
$f(y) = i$.
Under the UP model, the problem of constructing a maximal node (resp., edge)
$k$-coloring, where
$k = O (1)$,
is translated to using $[k]$ as the output label set so that a node (resp.,
edge)
$x \in X$
is considered to be colored
$i \in [k]$
in a configuration $C$ if
$\omega(C(p)) = i$
for all ports incident on $x$.

\begin{theorem}
\label{thm:maximal-node-coloring}
For every constant integer
$k \geq 2$,
there exists a self-stabilizing UP algorithm that solves the maximal node
$k$-coloring problem and
stabilizes in $O(\log^{2} n)$ time w.h.p.
\end{theorem}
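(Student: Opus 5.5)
The plan is to reduce maximal node $k$-coloring to $k-1$ layered copies of the self-stabilizing MIS algorithm of Theorem~\ref{thm:mis}. Color classes $1, \dots, k-1$ are built greedily: color class $i$ is an MIS of the subgraph $G_{i}$ induced by the nodes that are not colored by any $j < i$, and all remaining nodes receive color $k$. Indeed, if $f^{-1}(i)$ is an MIS of $G_{i}$ for every $1 \leq i < k$, then condition (I) holds by independence. For condition (II), take a node $x$ colored $k$; for each $i < k$, $x$ belongs to $G_{i}$ but not to the MIS of $G_{i}$, so by maximality it has a neighbor colored $i$.

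For the implementation, each port $p$ runs $k-1$ independent copies $\Alg_{1}, \dots, \Alg_{k-1}$ of the MIS automaton, each with its own variables. Copy $\Alg_{i}$ at port $p$ is \emph{active} when every copy $j<i$ at $p$ has status $\OUT$ or $\OUT^{*}$. A port $p$ in copy $i$ must treat its counter-port as absent when the counter-port's node is not in $G_{i}$. Since the UP model has no way to delete an edge, this is simulated as follows. Whenever $\CounterPort{p}$ is inactive in copy $i$ (visible to $p$ through the counter-port state), port $p$ behaves as if it were facing an $\OUT$-node: it treats the counter-port's reorientation bit as $0$ and its status as non-$\IN$. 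Consequently the edge orients toward $p$ as soon as $p$ tosses a $1$, and it never triggers $\OUT^{*}$. Likewise, the sibling-set conditions of the MIS rules are evaluated over all siblings, since all ports of a node share the same activity pattern once copies $j<i$ have stabilized. The product state space stays constant because $k$ is a constant. The output is $\omega = i$ for the least $i$ whose copy reports $\IN$, and $k$ if none does.

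The analysis proceeds by induction on $i$. The key step is to show that copy $i$ stabilizes within $O(\log^{2} n)$ rounds w.h.p. once copies $1, \dots, i-1$ have stabilized, at which point $G_{i}$ is fixed. From that moment on, copy $i$ is an execution of the MIS algorithm on $G$ started from an arbitrary configuration. In this execution the nodes outside $G_{i}$ act as frozen $\OUT$-nodes with reorientation bit $0$, except that they never spawn $\OUT^{*}$ on their neighbors. I would reprove Lemma~\ref{lem:mis:analysis:clean-time}, Observation~\ref{obs:mis:analysis:isolated-nodes}, and Lemma~\ref{lem:mis:analysis:good-degree-decreases} in this setting. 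The arguments used there only ever looked at undecided neighbors and at $\OUT$-neighbors with bit $0$, so they should go through verbatim, and they give $\tClean$-type cleanliness plus the drift bound \eqref{eq:mis:analysis:main} on the undecided subgraph of $G_{i}$. Lemma~\ref{lem:prelim:probabilistic-progress} then yields the $O(\log^{2} n)$ bound per copy. A union bound over the constantly many copies gives total time $(k-1)\cdot O(\log^{2} n) = O(\log^{2} n)$ w.h.p.

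I expect the main obstacle to be the transient interaction between copies before the lower copies settle. While copy $i-1$ is still changing, the set of active ports in copy $i$ fluctuates, so copy $i$ may sit in a garbage configuration, possibly with spurious $\IN$ or $\OUT^{*}$ statuses relative to the eventual $G_{i}$. Self-stabilization of copy $i$ from whatever state it is in when $G_{i}$ freezes resolves this, but the freeze itself must be verified: once copy $i-1$ is clean and all its nodes are decided, the $\IN$/$\OUT$ statuses are monotone by properties (3) and (4), so the activity pattern of copy $i$ is indeed permanent from then on. A secondary subtlety is that in the first clean rounds a port may see its sibling set mixing active and inactive copies. I would handle this by requiring a port to become active in copy $i$ only when all its siblings also show copies $<i$ decided, and by re-checking that the clean conditions of Section~\ref{sec:mis:analysis} are reached a constant number of rounds after the freeze.
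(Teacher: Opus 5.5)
Your proposal is correct and follows the same route as the paper. Both use $k-1$ layered copies of the MIS algorithm of Theorem~\ref{thm:mis}, each run on the subgraph induced by the still-uncolored nodes, with color $k$ for the rest, correctness from independence plus maximality, and a runtime that is a constant multiple of the MIS bound via composition of self-stabilizing algorithms; you additionally spell out what the paper leaves implicit in its appeal to that composition principle, namely how to simulate the induced subgraph in the UP model (inactive counter-ports acting as frozen $\OUT$-nodes) and why $G_{i}$ stays fixed once the lower copies are clean and fully decided.
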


\begin{theorem}
\label{thm:maximal-edge-coloring}
For every constant integer
$k \geq 2$,
there exists a self-stabilizing UP algorithm that solves the maximal edge
$k$-coloring problem and
stabilizes in $O(\log^{5} n)$ time w.h.p.
\end{theorem}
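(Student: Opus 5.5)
The natural plan is to reduce maximal edge $k$-coloring to $k-1$ layered instances of the MM algorithm of Theorem~\ref{thm:mm}, mirroring how maximal node $k$-coloring (Theorem~\ref{thm:maximal-node-coloring}) reduces to layered MIS. Recall the standard characterization: $f$ is a maximal edge $k$-coloring if and only if $f^{-1}(1)$ is a maximal matching of $G_{1} = G$, and for each $2 \leq i < k$, $f^{-1}(i)$ is a maximal matching of the subgraph $G_{i}$ induced by the edge subset $E - \bigcup_{j < i} f^{-1}(j)$. The remaining edges receive color $k$. Condition (I) holds because each color class is a matching. Condition (II) holds because an edge $e$ colored $k$ lies in every $G_{i}$, so by maximality in $G_{i}$ it is adjacent to an edge of color $i$.

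\textbf{Construction.} Each port runs $k-1$ copies $\Alg_{1}, \dots, \Alg_{k-1}$ of the MM algorithm of Section~\ref{sec:mm} in parallel; since $k$ is constant, the state space stays constant-size. Copy $\Alg_{i}$ at port $p$ treats an edge as \emph{present} if, in every copy $j < i$, the status of $p$ is $\UNMATCHED$ or $\UNMATCHED^{*}$. Otherwise $p$ behaves in copy $i$ as a non-existent port: its status is frozen to a neutral value, it never proposes or accepts, and its siblings in copy $i$ ignore it. Because the sibling conditions in the MM rules are already restricted to ports with $\varStatus = \bot$, ignoring a port amounts to forcing it out of $\bot$ with a special ``absent'' symbol. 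Symmetrically, an edge that is absent from copy $i$ but undecided in some copy $j<i$ is \emph{pending*}: it has not yet been excluded, so copy $i$ must treat the node as not finished. The output label of a port is the least $i$ with $\varStatus^{(i)}_{p} = \MATCHED$, or $k$ if there is none.

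\textbf{Analysis.} The proof proceeds by induction on $i$. First, one checks that the clean-time conditions of Observation~\ref{obs:mm:analysis:clean} hold for each copy within $O(\log^{2} n)$ rounds; they depend only on LLE and local consistency, which are insensitive to presence. Next, suppose copies $1, \dots, i-1$ have stabilized by time $T_{i-1}$. From then on the set of present edges for copy $i$ is fixed and equals $E(G_{i})$. Copy $i$ then behaves exactly like the MM algorithm on $G_{i}$, started from an arbitrary configuration at time $T_{i-1}$, so by Theorem~\ref{thm:mm} it stabilizes by time $T_{i} \leq T_{i-1} + O(\log^{5} n)$ w.h.p. A union bound over the $k-1 = O(1)$ copies gives $T_{k-1} = O(\log^{5} n)$ w.h.p., and the characterization above shows that the resulting output is legal.

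\textbf{Main obstacle.} The delicate step is verifying that copy $i$ running on a graph whose edge set changes before $T_{i-1}$ does not violate anything that cannot be repaired after $T_{i-1}$. Self-stabilization of MM from an arbitrary configuration covers this, provided two conditions hold. First, the ``absent'' ports' states are a legitimate part of the configuration space, i.e., presence acts as a fixed input once it freezes. Second, a $\MATCHED$ status in copy $i$ on an edge that later becomes absent is wiped. I would enforce the second condition by resetting copy-$i$ state whenever the port is absent, and I would then recheck the clean conditions (in particular condition 8, on consistent $\MATCHED$ pairs) for the induced subgraph.
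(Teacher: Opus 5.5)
Your proposal is correct and follows essentially the paper's route: run $k-1$ layered copies of the MM algorithm, each on the edges left uncolored by the earlier copies, give color $k$ to the remaining edges, and bound the stabilization time by $k$ times the MM bound via composition of self-stabilizing algorithms. Two minor points. Only the ``if'' direction of your characterization holds, and only that direction is needed. Your preliminary claim that the clean conditions hold in every copy regardless of presence is inaccurate during transients (a copy-$i$ port can be $\UNMATCHED^{*}$ when its $\MATCHED$ sibling becomes absent), but it is also unnecessary, since the induction from the time copies $1, \dots, i-1$ stabilize already covers it; that induction is best stated at deterministic checkpoints $i \cdot O(\log^{5} n)$, because the stabilization time itself is not a stopping time.
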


To establish Theorem~\ref{thm:maximal-node-coloring} (resp.,
Theorem~\ref{thm:maximal-edge-coloring}), we employ the self-stabilizing MIS
(resp., MM) algorithm promised in Theorem~\ref{thm:mis} (resp.,
Theorem~\ref{thm:mm}), combined with the well known fact
(first observed in \cite{DolevIM1993dynamic-systems})
that the composition of self-stabilizing algorithms is also self-stabilizing.
Specifically, we run $k$ algorithms, denoted by
$\Alg_{1}, \dots, \Alg_{k}$,
where each $\Alg_{i}$ is invoked on the subgraph induced by the nodes (resp.,
edges) that are not colored by algorithms
$\Alg_{1}, \dots, \Alg_{i - 1}$.
For
$1 \leq i \leq k - 1$,
algorithm $\Alg_{i}$ runs the MIS algorithm of Theorem~\ref{thm:mis} (resp., MM
algorithm of Theorem~\ref{thm:mm}) and assigns the color $i$ to all nodes
(resp., edges) selected to be included in the constructed MIS (resp., MM).
Algorithm $\Alg_{k}$ simply assigns the color $k$ to all nodes (resp., edges)
in the subgraph it is invoked on.

To see why this works, let $S_{i}$ be the set of nodes (resp, edges) colored
by $\Alg_{i}$ and notice that for
$1 \leq i \leq k - 1$,
any two nodes (resp., edges) in $S_{i}$ cannot be adjacent as $S_{i}$ is an
independent set (resp., a matching).
The maximality of $S_{i}$,
$1 \leq i \leq k - 1$,
implies that if
$x \in S_{i + 1}$,
then $x$ has an adjacent node (resp., edge) in $S_{i'}$ for every
$1 \leq i' \leq i$.
The correctness of the composed maximal node (resp., edge) $k$-coloring
algorithm follows by plugging
$k = i + 1$.

The time it takes for the composed algorithm to stabilize is bounded from
above by $k$ times the runtime bound of each individual $\Alg_{i}$.
Theorem~\ref{thm:maximal-node-coloring} (resp.,
Theorem~\ref{thm:maximal-edge-coloring}) follows from Theorem~\ref{thm:mis}
(resp., Theorem~\ref{thm:mm}) as
$k = O (1)$.

%%%%%%%%%%%%%%%%%%%%%%%%%%%%%%%%%%%%%%%%%%%%%%%%%%%%%%%%%%%%%%%%%%%%%%%%%%%%%%
\section{The $2$-State MIS Process may be Slow}
\label{sec:two-state-process}
%%%%%%%%%%%%%%%%%%%%%%%%%%%%%%%%%%%%%%%%%%%%%%%%%%%%%%%%%%%%%%%%%%%%%%%%%%%%%%
The \emph{$2$-state MIS process} is arguably the simplest distributed node-centric (stone age) MIS algorithm.\footnote{%
Note that despite the simplicity of the node-centric $2$-state MIS process, we are unaware of a natural implementation of this algorithm under the (half-)edge-centric UP model.
Refer to Section \ref{sec:related-work} for further discussion of the obstacles in simulating stone age algorithms under the UP model.}
The automaton associated with each node has two states, referred to
as $\IN$ and $\OUT$, which are identified with the nodes' output for the MIS
problem in the natural manner.
An $\IN$ node is \emph{active} if it has some (at least one) $\IN$ neighbors;
an $\OUT$ node is \emph{active} if it has no $\IN$ neighbors.
A non-active node is referred to as \emph{passive}.
The algorithm is defined so that each active node updates its state by choosing a new state uniformly at random, while passive nodes stick to their current state.

It is easy to see that the $2$-state MIS process is guaranteed to stabilize to a
legal MIS w.p.\ $1$ on any (finite) graph
$G = (V, E)$,
starting from any initial configuration.
Giakkoupis and Ziccardi \cite{GiakkoupisZ2023two-state-process} prove that on
$G_{n, p}$,
the stabilization time is
$\log^{O (1)} (n)$
w.h.p.\ for ``most'' values of the parameter $p$, and conjecture that this
stabilization time upper bound holds, in fact, for any $n$-node graph $G$.
We refute this conjecture by proving the following theorem.

\begin{theorem}
\label{thm:two-state-process}
There exist constants
$c, c' > 0$
and an infinite family $\mathcal{G}$ of graphs
$G = (V, E)$
and corresponding initial configurations
$f : V \rightarrow \{ \IN, \OUT \}$
such that the stabilization time of the $2$-state MIS process on $G$ under $f$ is
$|V|^{c}$
w.p.\ at least
$c'$.
\end{theorem}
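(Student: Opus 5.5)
The plan is to construct a family of graphs containing a gadget that repeatedly regenerates activity, so that stabilization needs a polynomially unlikely ``lucky'' event. The natural candidate is a long chain of gadgets (a path of length $\ell=|V|^{\Theta(1)}$ built from cliques) in which disorder propagates like a random walk rather than dying out locally.

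\textbf{Construction.} Take a path $v_1,\dots,v_\ell$ and replace each $v_i$ by a clique $K_i$ of size $s$ (constant or slowly growing). Join consecutive cliques by a complete bipartite graph. A legal MIS has exactly one $\IN$ node in every other clique, or a similar periodic structure. The initial configuration $f$ puts $\IN$ nodes in a pattern containing a ``defect'': two adjacent cliques that both contain $\IN$ nodes, or a gap of two cliques with no $\IN$ node. Such a defect is active only near its location; away from it the configuration is passive and locally legal.

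\textbf{Defect dynamics.} Step one is to show that, from a configuration with a single defect, the active region stays of bounded size and its position performs an (approximately) unbiased random walk on $\{1,\dots,\ell\}$ with bounded steps. Step two is to show that the defect can be annihilated only by meeting another defect or a boundary of a specific parity, and that a single defect cannot resolve in place. The intended mechanism is a parity mismatch: the periodic legal pattern on the left of the defect and the one on the right differ in phase, so a global parity invariant holds. Then, with two defects placed at distance $\Theta(\ell)$, or one defect at the center with parity-incompatible ends, the defects must travel distance $\Theta(\ell)$, which takes $\Omega(\ell^2)$ time with constant probability by standard random-walk estimates (Donsker, or optional stopping applied to the squared displacement). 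Since $|V|=s\ell$, this gives a stabilization time of $|V|^{c}$ with probability at least $c'$.

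\textbf{Main obstacle.} The hard step is proving that the local process around a defect behaves like a bounded-step random walk. Concretely, the random resampling inside a clique pair must not create new far-reaching disorder, and it must not resolve the defect locally by shifting the phase. I would first try to design the gadget so that the phase is a conserved quantity of the passive region. Any resolution then requires the defect to move, and the defect's one-step displacement distribution is symmetric by the reflection symmetry of the chain. Only this symmetry and bounded displacement are needed, not an exact random walk, since a martingale argument on position suffices.

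If local resolution cannot be fully excluded, I would instead use a fallback: show that it happens with probability at most a constant per unit of displacement, and use cliques of size $s$ so that resolution requires a rare coordinated event. Stopping at hitting time then still yields polynomial time.
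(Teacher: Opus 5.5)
\textbf{There is a genuine gap, and it sits at the step you flag as the main obstacle: the phase invariant you need does not exist, neither in your chain of cliques nor in any graph.}

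\textbf{Your construction has no rigid phase, and defects resolve in place.} In your graph, an independent set has at most one node per clique and none in consecutive cliques. It is maximal exactly when the indices of cliques containing an $\IN$ node form an MIS of the path $1,\dots,\ell$, i.e., consecutive indices differ by $2$ \emph{or} $3$. So a gap of two $\OUT$ cliques is legal, and there is no period-$2$ pattern.

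Your other defect disappears locally. Take $\IN$ cliques $\dots,i-2,i$ on the left and $i+1,i+3,\dots$ on the right, with single $\IN$ nodes $x\in K_i$ and $y\in K_{i+1}$. Only $x$ and $y$ are active. With probability $1/2$ exactly one of them turns $\OUT$, leaving gaps $2$ and $3$, which is a legal MIS. If both turn $\OUT$, the active set is the clique $K_i\cup K_{i+1}$. Meanwhile $K_{i-1}$ and $K_{i+2}$ stay dominated by $K_{i-2}$ and $K_{i+3}$ forever. So this $2s$-clique runs the process in isolation and stabilizes in $O(\log s)$ expected rounds. Defects neither walk nor need to meet.

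\textbf{No cleverer gadget can fix this.} Suppose the violations of an $\IN/\OUT$ configuration lie in a set $R$. Drop the $\IN$ nodes that have $\IN$ neighbors, then greedily add undominated nodes. This gives an MIS that differs from the configuration only on $R$ and its neighbors, so no phase can be globally conserved.

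\textbf{Your fallback fails too.} If in-place resolution has constant probability per $O(\log s)$ rounds, a defect survives $\ell^{2}$ rounds only with probability $e^{-\Omega(\ell^{2}/\log s)}$. Enlarging $s$ does not make resolution rare: the tournament inside an active clique has a unique winner with probability bounded below by a constant independent of $s$.

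\textbf{The missing idea is a forced, ballistic cascade rather than a diffusing defect.} The paper uses $k=n^{1-\gamma}$ cliques $C_1,\dots,C_k$ of size $n^{\gamma}$. Each is split into $n^{\alpha}$ $\alpha$-nodes, $n^{\beta}$ $\beta$-nodes, and the remaining $\gamma$-nodes. The $\alpha$-nodes of $C_i$ are joined to all of $C_{i+1}\cup\dots\cup C_k$, and the $\beta$-nodes of $C_i$ to all of $C_{i+2}\cup\dots\cup C_k$. The process starts with $C_1$ all $\IN$ and everything else $\OUT$.

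Cluster $C_{i+1}$ stays dominated, hence frozen, until every $\alpha$-node of $C_i$ is $\OUT$, which takes about $\alpha\log n$ rounds. With $\beta<2\alpha$ and $\gamma>\alpha+\beta$, the principal tournaments overlap so that some $\beta$-node of a later cluster is always $\IN$. That node keeps the $\alpha$- and $\beta$-nodes of earlier clusters out of their later tournaments. Those clusters therefore end with a $\gamma$-winner, which dominates nothing downstream, so each next cluster must be activated and run its own tournament.

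Each tournament length is typical with probability $1-n^{-c}$. A union bound over $h=n^{c}$ clusters then gives, with constant probability, $h$ sequential activations and stabilization time $\Omega(n^{c})$. The slowness comes from dynamics that force sequential work, not from any obstruction to local repair, which MIS never has.
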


The construction of an $n$-node graph
$G = (V, E)$
in $\mathcal{G}$ relies on a gadget called
\emph{the
$(\alpha, \beta, \gamma)$-gadget},
where
$0 < \alpha < \beta < \gamma < 1$
are (rational) constants to be determined in the sequel;
for now, we merely require that $n$ is chosen so that
$n^{\alpha}$,
$n^{\beta}$,
and
$n^{\gamma}$
are all integers.
The
$(\alpha, \beta, \gamma)$-gadget
consists of $n^{\gamma}$ nodes with an edge connecting every two of them
(i.e., the gadget forms a clique).
Some $n^{\alpha}$ nodes of the gadget are regarded as \emph{$\alpha$-nodes};
out of the remaining
$n^{\gamma} - n^{\alpha}$
nodes of the gadget, some $n^{\beta}$ nodes are regarded as
\emph{$\beta$-nodes};
all remaining
$n^{\gamma} - n^{\beta} - n^{\alpha}$
nodes of the gadget are regarded as \emph{$\gamma$-nodes}.

The node set $V$ of the graph
$G = (V, E)$
is partitioned into
$k = n^{1 - \gamma}$
clusters, denoted by
$C_{1}, \dots, C_{k}$.
For each
$i \in [k]$,
the cluster $C_{i}$ forms an
$(\alpha, \beta, \gamma)$-gadget,
where we subsequently identify cluster $C_{i}$ with its corresponding gadget
and denote the sets of $\alpha$-, $\beta$-, and $\gamma$-nodes in $C_{i}$ by
$C_{i}^{\alpha}$, $C_{i}^{\beta}$, and $C_{i}^{\gamma}$, respectively.
This means, in particular, that every possible intra-cluster edge is present
in $G$.

The graph
$G = (V, E)$
is also augmented with the following inter-cluster edges:
For each
$1 \leq i < j \leq k$
and for each
$u \in C_{i}$
and
$v \in C_{j}$,
the edge
$\{ u, v \}$
is included in $E$ if and only if one of the following two conditions is
satisfied:
\\
(I)
$j = i + 1$
and
$u \in C_{i}^{\alpha}$;
or
\\
(II)
$j > i + 1$
and
$u \in C_{i}^{\alpha} \cup C_{i}^{\beta}$.
\\
Put differently, the inter-cluster edges form a complete bipartite graph
between the $\alpha$-nodes in $C_{i}$ and the nodes in
$\bigcup_{j = i + 1}^{k} C_{j}$;
and a complete bipartite graph between the $\beta$-nodes in $C_{i}$ and the
nodes in
$\bigcup_{j = i + 2}^{k} C_{j}$.
Notice that the $\gamma$-nodes in $C_{i}$ are not connected to (the nodes of)
$C_{j}$ for any
$j > i$.
The following observation is derived directly from the construction of $G$.

\begin{observation} \label{obs:mis-structure}
Consider an MIS
$U \subseteq V$
of $G$.
Then,
$|U \cap C_{1}| = 1$
and
$|U \cap C_{i}| \leq 1$
for every
$1 < i \leq k$.
Moreover, if
$|U \cap C_{j}^{\gamma}| = 1$
for each
$1 \leq j < i$,
then
$|U \cap C_{i}| = 1$.
\end{observation}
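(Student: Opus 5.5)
The statement to prove is Observation~\ref{obs:mis-structure}. The plan is to read off each claim directly from the adjacency structure of $G$. Two facts do all the work: each cluster is a clique, and the inter-cluster edges connect every $\alpha$-node of $C_{j}$ to all of $\bigcup_{j' > j} C_{j'}$, and every $\beta$-node of $C_{j}$ to all of $\bigcup_{j' \geq j+2} C_{j'}$.

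\textbf{Step 1: $|U \cap C_{i}| \leq 1$ for all $i$.} Since $C_{i}$ forms an $(\alpha,\beta,\gamma)$-gadget, it is a clique. An independent set can therefore contain at most one of its nodes.

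\textbf{Step 2: $|U \cap C_{1}| = 1$.} Suppose $U \cap C_{1} = \emptyset$. By maximality, every $\gamma$-node $w \in C_{1}^{\gamma}$ has a neighbor in $U$. Its only neighbors in $G$ are the other nodes of $C_{1}$: by construction, $\gamma$-nodes of $C_{1}$ have no inter-cluster edges toward $C_{j}$ for $j > 1$, and no cluster precedes $C_{1}$. So $w$ has no neighbor in $U$, and $U \cup \{ w \}$ is independent, contradicting maximality. This uses $C_{1}^{\gamma} \neq \emptyset$, which holds because $n^{\gamma} - n^{\beta} - n^{\alpha} > 0$ for large $n$ when $\alpha < \beta < \gamma$.

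\textbf{Step 3: the conditional claim.} Assume $|U \cap C_{j}^{\gamma}| = 1$ for each $1 \leq j < i$. Together with Step~1, this says $U \cap C_{j}$ consists of a single $\gamma$-node for every $j < i$. Suppose for contradiction that $U \cap C_{i} = \emptyset$, and pick any $w \in C_{i}^{\gamma}$. The neighbors of $w$ are:
\begin{itemize}
\item the nodes of $C_{i}$, none of which is in $U$;
\item the $\alpha$-nodes of every $C_{j}$ with $j < i$;
\item the $\beta$-nodes of every $C_{j}$ with $j \leq i-2$.
\end{itemize}
Nodes of earlier clusters that lie in $U$ are $\gamma$-nodes, and these are never adjacent to later clusters. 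Node $w$ has no neighbors in later clusters, since $\gamma$-nodes get no forward edges. Hence $w$ has no $U$-neighbor, so $U \cup \{ w \}$ is independent, contradicting maximality. Therefore $|U \cap C_{i}| = 1$.

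No step is a real obstacle; the argument is bookkeeping. The only point needing care is the exact neighborhood of a $\gamma$-node in Step~3: it receives backward edges from earlier $\alpha$- and $\beta$-nodes but has no forward edges. One must confirm that none of those earlier neighbors can belong to $U$, which the hypothesis guarantees by forcing the $U$-nodes of earlier clusters to be $\gamma$-nodes.
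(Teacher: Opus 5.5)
Your proof is correct. It is exactly the direct verification from the construction of $G$ that the paper alludes to; the paper states the observation without a written proof. The clique structure gives $|U \cap C_{i}| \leq 1$, and maximality applied to a $\gamma$-node of $C_{i}$ gives the lower bound, since such a node has no forward edges and its only backward neighbors are $\alpha$- and $\beta$-nodes of earlier clusters. You also correctly note a detail the paper leaves implicit: the argument needs $C_{i}^{\gamma} \neq \emptyset$, i.e., $n^{\gamma} > n^{\alpha} + n^{\beta}$, which holds for large $n$.
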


To complete the construction, we define the initial configuration
$f : V \rightarrow \{ \IN, \OUT \}$
associated with the graph
$G = (V, E)$.
To this end, we simply set
$f(v) = \IN$
if
$v \in C_{1}$;
and
$f(v) = \OUT$
otherwise.

Consider an execution of the $2$-state MIS process on $G$ under $f$ and the
$(\alpha, \beta, \gamma)$-gadget
associated with cluster $C_{i}$ for some
$1 \leq i \leq k$.
We say that the gadget is \emph{activated} in round $t$ of the execution if
\\
(I)
all nodes in the gadget are $\OUT$ and passive up to (including) round
$t - 1$;
and
\\
(II)
all nodes in the gadget become active, concurrently, in round $t$.
\\
Once the gadget is activated, the nodes start a \emph{tournament} so that each
node is $\IN$ for a (possibly empty) prefix of the tournament whose length is
a
$\GeomDist{1 / 2} - 1$
random variable;
following that, the node becomes $\OUT$ until the tournament ends.
A node \emph{wins} the tournament if it is the unique last $\IN$ node;
we show in the sequel that with sufficiently high probability, if the tournament admits a winner, then all nodes in the gadget become (passive and) stable.
If the tournament does not admit a (unique) winner, then a new tournament
starts, involving a subset $S$ of the nodes, where the content of $S$ depends
on events that occur at the rest of the graph.
We show in the sequel that with sufficiently high probability, this set $S$ includes all and only the gadget's $\gamma$-nodes.

Special attention is paid to the gadget's first tournament, referred to as the
\emph{principle tournament}, that starts once the gadget is activated (and
includes all nodes).
The principle tournament is said to be \emph{$\alpha$-completed} (if and) when
all the $\alpha$-nodes are $\OUT$, and \emph{$\beta$-completed} (if and) when
all the $\alpha$- and $\beta$-nodes are $\OUT$.
The principle tournament is said to be \emph{completed} when either all nodes
are $\OUT$ or when some node wins.

\begin{observation} \label{obs:principle-tournament}
For every choice of constants
$0 < \alpha < \beta < \gamma < 1$
and for every sufficiently small constant
$\epsilon > 0$,
there exists a constant
$c = c(\alpha, \beta, \gamma, \epsilon) > 0$
such that if an
$(\alpha, \beta, \gamma)$-gadget
is activated in round $t$, then its principle tournament is
\\
(1)
$\alpha$-completed by time
$t + \ell_{\alpha}$
for some
$(1 - \epsilon) \alpha \log n
\leq
\ell_{\alpha}
< (1 + \epsilon) \alpha \log n$
w.p.\ at least
$1 - n^{-c}$;
\\
(2)
$\beta$-completed by time
$t + \ell_{\beta}$
rounds for some
$(1 - \epsilon) \beta \log n
\leq
\ell_{\beta}
<
(1 + \epsilon) \beta \log n$
w.p.\ at least
$1 - n^{-c}$;
and
\\
(3)
completed by time
$t + \ell$
for some
$(1 - \epsilon) \gamma \log n
\leq
\ell
<
(1 + \epsilon) \gamma \log n$
w.p.\ at least
$1 - n^{-c}$.
\end{observation}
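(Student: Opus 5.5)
The plan is to reduce the principle tournament to a maximum of i.i.d.\ geometric random variables and then apply the tail bounds of Lemma~\ref{l:4.8} to the relevant subsets of nodes.

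\textbf{Step 1: the tournament is geometric.} Once the gadget is activated in round $t$, every node is active. In each round, an active node resamples its state uniformly at random. We claim that while the principle tournament is running, a node that is currently $\IN$ within the clique stays active as long as some other gadget node is $\IN$. We also claim that a node that has turned $\OUT$ stays passive while some gadget node is still $\IN$, because the clique then supplies it with an $\IN$ neighbor. Granting these claims, each node's $\IN$-prefix has length distributed as $\GeomDist{1/2}-1$, independently across nodes, exactly as described before the observation.

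\textbf{Step 2: controlling outside interference.} Nodes outside the cluster could in principle affect the gadget through the inter-cluster edges. The plan is to argue that during the principle tournament these edges do not alter the coin flips of the in-prefix nodes. The only way an external $\IN$ neighbor can matter is by keeping a node active, and the nodes in question are already active anyway. External neighbors can also make an $\OUT$ node passive, but that node is then simply staying $\OUT$, which is what the tournament dynamics prescribe. I expect this coupling to be the main obstacle. It has to be stated carefully because an $\OUT$ node with no $\IN$ neighbor (inside or outside the gadget) is active and could re-enter as $\IN$. My approach is to check that, as long as some gadget node is $\IN$, every gadget node has an $\IN$ neighbor, so no $\OUT$ gadget node is active.

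\textbf{Step 3: the tail bounds.} With Steps~1 and~2 in place, the $\alpha$-completion time is the maximum of $n^{\alpha}$ i.i.d.\ geometric variables. The $\beta$-completion time is the maximum over $n^{\alpha}+n^{\beta} = \Theta(n^{\beta})$ such variables. The full completion time is at most the maximum over all $n^{\gamma}$ variables; it can be earlier if a unique winner emerges, but the lower bound still holds because the second-largest of $n^{\gamma}$ i.i.d.\ geometrics is at least $(1-\epsilon)\gamma\log n$ with high probability. For a maximum $L_\kappa$ of $\kappa = n^{a}$ such variables, Lemma~\ref{l:4.8} with $t = \epsilon a \log n$ gives
\[
\Pr\bigl(L_\kappa > (1+\epsilon)a\log n\bigr) \leq 4n^{-\epsilon a},
\qquad
\Pr\bigl(L_\kappa < (1-\epsilon)a\log n\bigr) \leq \exp\bigl(-n^{\epsilon a}\bigr).
\]
The shift by $1$ between the $1+\GeomDist{1/2}$ variables and the $\GeomDist{1/2}-1$ prefixes is absorbed into $\epsilon$. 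For the second-largest variable, the number of variables exceeding $(1-\epsilon)\gamma\log n$ is binomial with mean $n^{\epsilon\gamma}$, so it is at least $2$ except with probability $\exp(-\Omega(n^{\epsilon\gamma}))$.

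\textbf{Step 4: combining.} A union bound over the three events yields all three claims simultaneously with probability at least $1-n^{-c}$ for $c = \epsilon\alpha/2$.
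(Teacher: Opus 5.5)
Your proof is correct and follows the reasoning the paper leaves implicit. The paper states this as an observation without proof, relying on the geometric $\IN$-prefix description of the tournament and the tail bounds of Lemma~\ref{l:4.8}; that description is valid because, while at least two gadget nodes are $\IN$, the clique keeps every $\IN$ gadget node active and every $\OUT$ gadget node passive, whatever the external neighbors do. One step worth stating explicitly is that this coupling holds only until the number of $\IN$ gadget nodes drops to at most one, so items (1) and (2) also need your lower bound for item (3) and a choice of $\epsilon$ small enough that $(1+\epsilon)\beta < (1-\epsilon)\gamma$.
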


Assuming that cluster $C_{i}$ is activated (as defined earlier), let $T_{i}$
denote the corresponding principle tournament and let $t_{i}^{\alpha}$,
$t_{i}^{\beta}$, and $t_{i}$ be the times at which $T_{i}$ is $\alpha$-completed (assuming that it is  $\alpha$-completed), $\beta$-completed ((assuming that it is $\beta$-completed)), and completed, respectively.

By the definition of the initial configuration $f$, cluster $C_{1}$ is activated in round $0$, which means that tournament $T_{1}$ is well defined.
Condition hereafter on the events presented in Observation~\ref{obs:principle-tournament} with respect to $T_{1}$, which
means that $t_{1}^{\alpha}$, $t_{1}^{\beta}$, and $t_{1}$ are also well defined (the latter is actually well defined regardless).

The key observation now is that the construction of $G$ guarantees that
cluster $C_{2}$ is activated in round $t_{1}^{\alpha}$ (recall that $C_{2}$ is
dominated by each node in $C_{1}^{\alpha}$), hence tournament
$T_{2}$ is well defined.
Moreover, cluster $C_{j}$ cannot be activated before round
$t_{1}^{\beta}$
for any
$j > 2$
(recall that $C_{j}$ is dominated by each node in $C_{1}^{\beta}$).
Condition hereafter on the events presented in
Observation~\ref{obs:principle-tournament} with respect to $T_{2}$, which
means that $t_{2}^{\alpha}$, $t_{2}^{\beta}$, and $t_{2}$ are well defined
(the latter is actually well defined regardless).
By choosing
$\beta < 2 \alpha$
and
$\gamma > \alpha + \beta$,
we ensure that
\[
t_{1}^{\alpha}
<
t_{1}^{\beta}
<
t_{2}^{\alpha}
<
t_{2}^{\beta}
<
t_{1}
<
t_{2}
\, .
\]

As before, the construction of $G$ guarantees that cluster $C_{3}$ is
activated in round $t_{2}^{\alpha}$, hence tournament $T_{3}$ is well defined.
Moreover, cluster $C_{j}$ cannot be activated before round
$t_{2}^{\beta}$
for any
$j > 3$.
Condition hereafter on the events presented in
Observation~\ref{obs:principle-tournament} with respect to $T_{3}$, which
means that $t_{3}^{\alpha}$, $t_{3}^{\beta}$, and $t_{3}$ are well
defined (the latter is actually well defined regardless).
The choice of $\alpha$, $\beta$, and $\gamma$ ensures that
\[
t_{2}^{\alpha}
<
t_{2}^{\beta}
<
t_{3}^{\alpha}
<
t_{3}^{\beta}
<
t_{2}
<
t_{3}
\, .
\]

Taking
$h = \Omega (n^{c})$,
where
$c > 0$
is the constant promised in Observation~\ref{obs:principle-tournament} for a
sufficiently small $\epsilon$, we can
continue in this manner for
$i = 2, 3, \dots, h$
and conclude, by the union bound, that w.p.\
$\Omega (1)$,
cluster $C_{i}$ is activated in round
$t_{i - 1}^{\alpha}$
and
\[
t_{i - 1}^{\alpha}
<
t_{i - 1}^{\beta}
<
t_{i}^{\alpha}
<
t_{i}^{\beta}
<
t_{i - 1}
<
t_{i}
\]
for all
$1 < i \leq h$
(simultaneously).
In particular, there is a positive overlap between the round intervals
$[t_{i - 1}^{\alpha}, t_{i}^{\beta})$
and
$[t_{i}^{\alpha}, t_{i + 1}^{\beta})$
for every
$1 < i < h$.

The last statement plays a key role by ensuring that for every round
$t < t_{h}^{\beta}$,
there exists some
$1 \leq j \leq h$
such that some (at least one) nodes in $C_{j}^{\beta}$ are $\IN$ in round $t$.
Therefore, if cluster $C_{i}$ starts a non-principle tournament in round $t$
for some
$i \leq j - 2$,
then this tournament involves all and only the nodes in $C_{i}^{\gamma}$;
indeed, the $\beta$-nodes of $C_{j}$ (more accurately, those which are $\IN$)
prevent the $\alpha$- and $\beta$-nodes of $C_{i}$ from joining this
tournament.
(Combined with Observation~\ref{obs:mis-structure}, we conclude, in passing,
that if
$U \subseteq V$
is the output MIS, then
$|U \cap C_{i}^{\gamma}| = |U \cap C_{i}| = 1$
for each cluster $C_{i}$ that admits a winner before round
$t_{h}^{\beta}$.)
Theorem~\ref{thm:two-state-process} follows as the execution does not stabilize before time
$t_{h} \geq 3 h = \Omega (n^{c})$.

%%%%%%%%%%%%%%%%%%%%%%%%%%%%%%%%%%%%%%%%%%%%%%%%%%%%%%%%%%%%%%%%%%%%%%%%%%%%%%
\section{Additional Related Work}
\label{sec:related-work}
%%%%%%%%%%%%%%%%%%%%%%%%%%%%%%%%%%%%%%%%%%%%%%%%%%%%%%%%%%%%%%%%%%%%%%%%%%%%%%
%YE: discuss:
%- representing coloring via longest outgoing directed path
Assigning input/output labels from a finite set to the graph's half-edges (or
ports), rather than to the graph's nodes, is certainly not a new idea in
distributed computing.
For example, in the context of round elimination techniques for \emph{locally
checkable labeling (LCL)} problems \cite{NaorS1993lcl}, a formalism based on
such assignments was introduced by Brandt \cite{Brandt2019automatic} and
became a common practice since then.

The study of uniform distributed algorithms dates back to the classic work of
Angluin \cite{Angluin1980uniform} who proved that uniform Las Vegas algorithms
cannot elect a leader if termination detection is required (this impossibility
result was extended in \cite{ItaiR1990symmetry-breaking} to
uniform algorithms that are allowed to fail with a bounded probability).
In contrast, if the termination detection requirement is lifted, then leader
election is possible even under the (truly uniform) stone age model
\cite{VacusZ2025minimalist-le}.

The domain of self-stabilizing algorithms for local symmetry breaking problems
has recently seen a surge of activity, with many algorithms of various levels
of uniformity
\cite{Turau2019self-stabilizing,
EmekK2021thin,
GiakkoupisZ2023two-state-process,
BittonEIK2024mm,
GiakkoupisTZ2024beeping,
GiakkoupisTZ2025luby}.
However, none of the algorithms in those papers is both
(1)
truly uniform;
and
(2)
efficient on general graphs.

Truly uniform distributed computational models extend beyond the domain of
distributed graph algorithms.
One prominent example is the \emph{population protocols} model
\cite{AngluinADFP2006population-protocols}, or more generally \emph{chemical
reaction networks} \cite{SoloveichikCWB2008crn}, which abstract molecules in a
well-mixed solution.
Another prominent example is the \emph{geometric amoebot} model
\cite{DerakhshandehDGRS2014spaa-ba} for self-organizing particle systems
(a.k.a.\ programmable matter).

Returning to truly uniform distributed graph algorithms, the formulation of the stone age model referred to in the current paper is a simplified version, introduced in \cite{EmekK2021thin}, of the original stone age model formulation presented by Emek and Wattenhofer \cite{EmekW2013stone-age}.
In this regard, we note that the formulation of \cite{EmekW2013stone-age},
which is better suited for asynchronous message-passing schedules, involves
the notion of \emph{query letters};
this means that the signal received by an automaton at any given moment is
sensitive to only one type of message, determined by the automaton's current
state.
Emek and Wattenhofer \cite{EmekW2013stone-age} prove that the query letter restriction can be lifted, thus obtaining a formulation similar to that of \cite{EmekK2021thin}, however this proof assumes graceful initialization and it is not clear if it extends to the realm of self-stabilizing algorithms.

Since one of the results of Emek and Keren \cite{EmekK2021thin} is a self-stabilizing stone age MIS algorithm, it is interesting to compare it to the UP MIS algorithm developed in the current paper.
A crucial difference in this regard is that the MIS algorithm of \cite{EmekK2021thin} is applicable only to graphs of bounded diameter, i.e., the algorithm designer knows an upper bound
$D = O (1)$
on the graph diameter.
Using the knowledge of $D$, Emek and Keren developed some algorithmic machinery that essentially allows them to treat the input graph ``as if it was a complete graph''.
In particular, they use a probabilistic counting trick, developed originally for complete graphs, that generates globally agreed upon time intervals whose length is
$\Theta (\log n)$
w.h.p;
once the execution is partitioned into such intervals, breaking (local or global) symmetry becomes much easier. Unfortunately, such a synchronized partition of the execution cannot work in general graphs, so we had to use a completely different approach, as presented in Section~\ref{sec:mis}.

A natural question that arises from the results of the current paper is
whether there exists an efficient simulation of self-stabilizing stone age
algorithms (using the formulation of \cite{EmekK2021thin}) under the UP
model.\footnote{%
The converse direction is inherently impossible due to the added expressivity
of the UP model.}
Given the LLE mechanism that enables the selection of a designated port
$p^{*}(v) \in \Ports(v)$
for each node
$v \in V$
(see Section~\ref{sec:prelim}), one may hope that the simulation becomes
straightforward:
the designated port $p^{*}(v)$ is responsible for simulating the actions of
node $v$ in the simulated execution.

The caveat of this approach is that port $p^{*}(v)$ is not necessarily directly
exposed to the state of port $p^{*}(u)$ for the neighbors $u$ of $v$, thus it
takes up to $3$ rounds for node $v$ in the simulating execution to gather the
information regarding the states of its neighbors.
In an execution with a graceful initialization, this can be easily solved by
partitioning the simulating execution into $3$-round phases, each responsible
for one round of the simulated execution, so that the first $2$ rounds of a
phase are dedicated to gathering information about the states of the neighbors.
Unfortunately, this approach fails in the self-stabilization realm as the
phases of different nodes are not necessarily synchronized, leaving the
aforementioned question of simulating self-stabilizing stone age algorithms
under the UP model open.

On top of the node-centric stone age model, Emek and Wattenhofer also
introduce a stone age variant that includes both ``node devices'' and ``port
devices'' \cite{EmekW2013stone-age}[Section 4.4].
As the variant of \cite{EmekW2013stone-age}[Section 4.4] is also defined in
terms of query letters, it is not clear how it compares to the UP model
introduced in the current paper.
In any case, the formulation of the UP model is significantly simpler and in
our opinion, more natural.

%%%%%%%%%%%%%%%%%%%%%%%%%%%%%%%%%%%%%%%%%%%%%%%%%%%%%%%%%%%%%%%%%%%%%%%%%%%%%%
\clearpage
\bibliography{references}

%%%%%%%%%%%%%%%%%%%%%%%%%%%%%%%%%%%%%%%%%%%%%%%%%%%%%%%%%%%%%%%%%%%%%%%%%%%%%%
%%%%%%%%%%%%%%%%%%%%%%%%%%%%%%%%%%%%%%%%%%%%%%%%%%%%%%%%%%%%%%%%%%%%%%%%%%%%%%
\clearpage
\appendix

\begin{figure*}[!t]
{\centering
\Large{\textbf{APPENDIX}}
\par}
\end{figure*}

%%%%%%%%%%%%%%%%%%%%%%%%%%%%%%%%%%%%%%%%%%%%%%%%%%%%%%%%%%%%%%%%%%%%%%%%%%%%%%
\section{Proving Lemma~\ref{lem:prelim:probabilistic-progress}}
\label{apx:probabilistic-progress}
%%%%%%%%%%%%%%%%%%%%%%%%%%%%%%%%%%%%%%%%%%%%%%%%%%%%%%%%%%%%%%%%%%%%%%%%%%%%%%
As
$\Ex (X_{i} \mid X_{i - 1}) \leq r X_{i - 1}$
almost surely, we can apply the low of total expectation and conclude, by
induction on $i$, that
\[
\Ex (X_{i})
=
\Ex (\Ex (X_{i} \mid X_{i - 1}))
\leq
r \Ex (X_{i - 1})
\leq
r^{i} \Ex(X_{0})
=
r^{i} x_{0}
\, .
\]
Fixing
$\lambda = \lceil \log_{1 / r} x_{0} \rceil$,
we get
\[
\Ex (X_{\lambda + j})
\leq
r^{\lambda + j} x_{0}
\leq
r^{j}
\]
for every
$j \geq 0$.
As
$X_{\lambda + j}$
is integral and non-negative, Markov's inequality ensures that
\[
\Pr (X_{\lambda + j} > 0)
=
\Pr (X_{\lambda + j} \geq 1)
\leq r^{j}
\, .
\]
The assertion follows since
$X_{\lambda + j} > 0
\Longleftrightarrow
T > \lambda + j$.

\end{document}